\newtheoremstyle{bfnote}%
{}{}%
{\itshape}{}%
{\bfseries}{.}%
{ }%
{\thmname{#1}\thmnumber{ #2}\thmnote{ (#3)}}
\theoremstyle{bfnote}
\newtheorem{theorem}{Theorem}
\newtheorem*{theorem*}{Theorem}
\newtheorem{corollary}{Corollary}
\newtheorem{remark}{Remark}
\newtheorem{lemma}{Lemma}
\newtheorem{example}{Example}
\newtheorem{proposition}{Proposition}
\newtheorem{definition}{Definition}
\newcommand{\rrparen}{\mathclose{)\mkern-3mu)}}
\DeclareMathOperator*{\argmin}{arg\,min}
\DeclareMathOperator*{\argmax}{arg\,max}
\title{ \vspace{-1.4cm}
Equilibria under Dynamic Benchmark Consistency \\\vspace{-0.2cm} in Non-Stationary Multi-Agent Systems\thanks{We are thankful to Itai Ashlagi, Mohsen Bayati, Kostas Bimpikis, Dan Iancu, Giacomo Mantegazza, Ilan Morgenstern, Seyeon Kim, Daniela Saban, Gabriel Weintraub, Weijie Zhong, Andrea Zorzi, and seminar participants at Stanford University. }
}
\author{[Authors’ Names and Institutions Removed for Peer Review]\vspace{1.6cm}}
\author{Ludovico Crippa\thanks{Graduate School of Business, Stanford University. Email: lcrippa@stanford.edu}\and Yonatan Gur\thanks{Netflix. Email: ygur@netflix.com} \and Bar Light\thanks{Business School and Institute of Operations Research and Analytics, National University of Singapore, Singapore. Email: barlight@nus.edu.sg } }
\begin{document}
{\vspace{-5em}
\maketitle}\vspace{-0.8cm}
\begin{abstract}\small\setstretch{1.0}\small
\noindent
We formulate and study a general time-varying multi-agent system where players repeatedly compete under incomplete information. Our work is motivated by scenarios commonly observed in online advertising and retail marketplaces, where agents and platform designers optimize algorithmic decision-making in dynamic competitive settings. In these systems, no-regret algorithms that provide guarantees relative to \emph{static} benchmarks can perform poorly and the distributions of play that emerge from their interaction do not correspond anymore to static solution concepts such as coarse correlated equilibria. Instead, we analyze the interaction of \textit{dynamic benchmark} consistent policies that have performance guarantees relative to \emph{dynamic} sequences of actions, and through a novel \textit{tracking error} notion we delineate when their empirical joint distribution of play can approximate an evolving sequence of static equilibria. In systems that change sufficiently slowly (sub-linearly in the horizon length), we show that the resulting distributions of play approximate the sequence of coarse correlated equilibria, and apply this result to establish improved welfare bounds for smooth games. On a similar vein, we formulate internal dynamic benchmark consistent policies and establish that they approximate sequences of correlated equilibria. 
Our findings therefore suggest that in a broad range of multi-agent systems where non-stationarity is prevalent, algorithms designed to compete with dynamic benchmarks can improve both individual and welfare guarantees, and their emerging dynamics approximate a sequence of static equilibrium outcomes. 
\end{abstract}
\normalsize

\setstretch{1.46}

\vspace{-0.2cm}
\section{Introduction}\label{Introduction}\vspace{-0.1cm}
\subsection{Motivation}\label{Motivation}\vspace{-0.1cm}

Sequential games of incomplete information arise in a broad array of application domains, including pricing, inventory management, and transportation networks. In recent years, the relevance of this game-theoretic framework has further increased with the rising popularity of online platforms in which self-interested agents compete under incomplete information. Prominent examples include retail platforms such as Amazon and eBay, and advertising platforms such as Google's Double Click and Bing Ads. In these instances, sellers (or advertisers) repeatedly interact with one another, while lacking real-time information on their competitors and their actions, and might have limited knowledge of their own payoff structure. The growing complexity of such platforms has led to decision-making becoming increasingly algorithmic, often relying on automated algorithms, implemented by the competing agents or,  on their behalf, by the platform to navigate the dynamically evolving competitive environment.\footnote{Notable examples include the central bidding tools offered by Meta ads, and dynamic pricing tools offered by platforms such as Airbnb and Amazon.} 

As in these settings the limited information often prevents agents from identifying in real time actions that best respond to the competitive environment, a common approach taken by agents and platform designers to identifying effective algorithms has been to compare the performance these algorithms achieve relative to a benchmark endowed with the benefit of \emph{hindsight}, i.e., with complete knowledge of the payoff structure and the strategies chosen by the other players. This approach has been particularly appealing as, in many practical settings, the best performance in hindsight can be computed using available historical data, and requires no behavioral assumptions (see discussion in \citealt{talluri2006theory}, Ch. 11.3). Specifically, a policy is said to be ``\emph{no-regret}" (also referred to as \emph{Hannan consistent}, see \citealt{Hannan}, \citealt{cesa2006prediction}) if, for all possible profiles of policies of the other players, it guarantees an average payoff at least as high as any \emph{static} action that can be selected with the benefit of hindsight.

Nevertheless, no-regret policies can fall short in dynamic environments, such as the ones that typically emerge in contemporary online marketplaces, where competitors regularly enter and exit, consumer demand patterns shift over time, and underlying technologies evolve. In such settings, even the best static action can perform poorly, which in turn, can lead to suboptimal decisions for the agents and low overall welfare for the platform. 

Moreover, another challenge platforms encounter in non-stationary systems is maintaining stability and predictability of the dynamics that arise when individual agents deploy no-regret algorithms. One appealing feature of these algorithms in repeated games is that the empirical joint distributions of play they induce correspond to the static solution concept of coarse correlated equilibria (e.g., \citealt{HMSC3} and \citealt{NisaRougTardVazi07}). However, 
in non-stationary systems static equilibrium notions become ambiguous, as different equilibria can emerge in different time periods, and the distribution of play that arise may no longer correspond to~such~equilibria.

These limitations raise fundamental questions about the nature of learning and equilibria in non-stationary multi-agent systems. In particular, what are the possible empirical joint distributions of play that can emerge when platforms, or their participants, adopt algorithms designed to handle dynamically changing conditions? Can these distributions be approximated by some static equilibrium concept, that are therefore implementable by the platform, while ensuring that outcomes remain efficient even when the environment fluctuates? These questions are critical for both platform designers and participants. 

To address them, we introduce and study a general non-stationary multi-agent system to examine the dynamics that can emerge from the interaction of policies that are responsive to changing conditions, and particularly ones with performance guarantees against dynamic \emph{sequences} of actions. We focus on characterizing the outcomes these policies produce and quantifying their welfare properties. Our analysis reveals when and how the deployment of such policies can lead to predictable dynamics and improve both individual and collective performance guarantees. 

To further motivate this study, we first illustrate the limitations of using static benchmarks in non-stationary multi-agent systems, both from the agent and from the platform point of view. We then turn to describe the main contributions of the paper.

\subsection{Limitation of static benchmarks in non-stationary multi-agent systems}

From an agent's perspective, two sources of non-stationarity can reduce the effectiveness of any static action as a benchmark. First, as common in multi-agent systems, even when the stage game remains fixed, changes in competitors' actions (for example, due to exploration or beliefs' update) can significantly affect the payoff an agent receives, potentially altering which actions are most preferred. Second, the game itself often changes over time: For instance, in digital advertising markets, advertisers' valuations for impressions depend on user demographics or the keywords being searched, and these valuations evolve over time. Similarly, in retail markets, conditions such as demand and supply elasticities often change. The following simple example highlights the limitations of static actions in such changing environments. 

\begin{example}[Evolving market conditions]\label{pricing game example}
Consider a selling season of even length $T \geq 2$ where two sellers compete over a market with $N$ customers having logit demand. For $i\in\{1,2\}$, given prices $p^i$ and~$p^{-i}$, the expected demand for seller~$i$ has the form:

\[ 
d^i(p;N,(\alpha^j)_{j=1}^{2},(\beta_t^j)_{j=1}^{2}) = N \frac{\exp(\alpha^i-\beta_t^i p^i)}{1+\exp(\alpha^i-\beta_t^i p^i) + \exp(\alpha^{-i}-\beta_t^{-i} p^{-i})}.
\]

\noindent For simplicity, we normalize production costs to zero, assume there are only two available prices, $p_l =  1$ and $p_h =  2 $, and focus on a symmetric instance with $\beta_t^i = \beta_t$ and $\alpha^i =~\alpha$, for each player $i$. Let $\alpha = 4$ and suppose that at the beginning of the selling season customers have a low price-sensitivity, $\beta_t = 3/4$, which increases to $\beta_t = 7/4$ in the second half. For each seller, $p_h$ turns out to be the dominant price in the first half of the season, while $p_l$ becomes dominant in the second half. Therefore, a dynamic benchmark that switches prices once can always charge the dominant price in each stage game. In contrast, any static action chosen in hindsight is suboptimal for part of the season, and it incurs linear loss $\Theta(T)$ compared to a dynamic benchmark that plays $p_h$ in the first half of the season and then switches to $p_l$ in the second.
\end{example}

While Example~\ref{pricing game example} illustrates the limitations of no-regret algorithms from the agents' perspective, it also reveals that the distributions of play that emerge from their interaction may no longer correspond to the coarse correlated equilibria of the stage games. Notably, in the simple setting of Example~\ref{pricing game example}, the set of coarse correlated equilibria reduces to the unique Nash equilibrium of the stage games, changing from~($p_h,p_h$) in the first half of the horizon to~($p_l,p_l$) in the second half.\footnote{Formally, coarse correlated equilibria are joint distributions. Therefore, in the context of Example \ref{pricing game example}, they correspond to $\delta_{p_h,p_h}$ and $\delta_{p_l,p_l}$. In the rest of this discussion we stick to $(p_h,p_h)$ and $(p_l,p_l)$ to ease the exposition.} Figure \ref{fig:comparison_Exp3Intro} illustrates a scenario where, in the setting of Example~\ref{pricing game example}, each seller is employing Exp3 (\citealt{auer2002nonstochastic}),  a classical no-regret algorithm designed for adversarial environments. The figure depicts the $\ell_2$-distance between the distribution of play generated by the Exp3 algorithms and the Nash equilibria of the stage games in the first and second half of the selling season.

\begin{figure}[H]
    \centering
\includegraphics[width=0.6\linewidth]{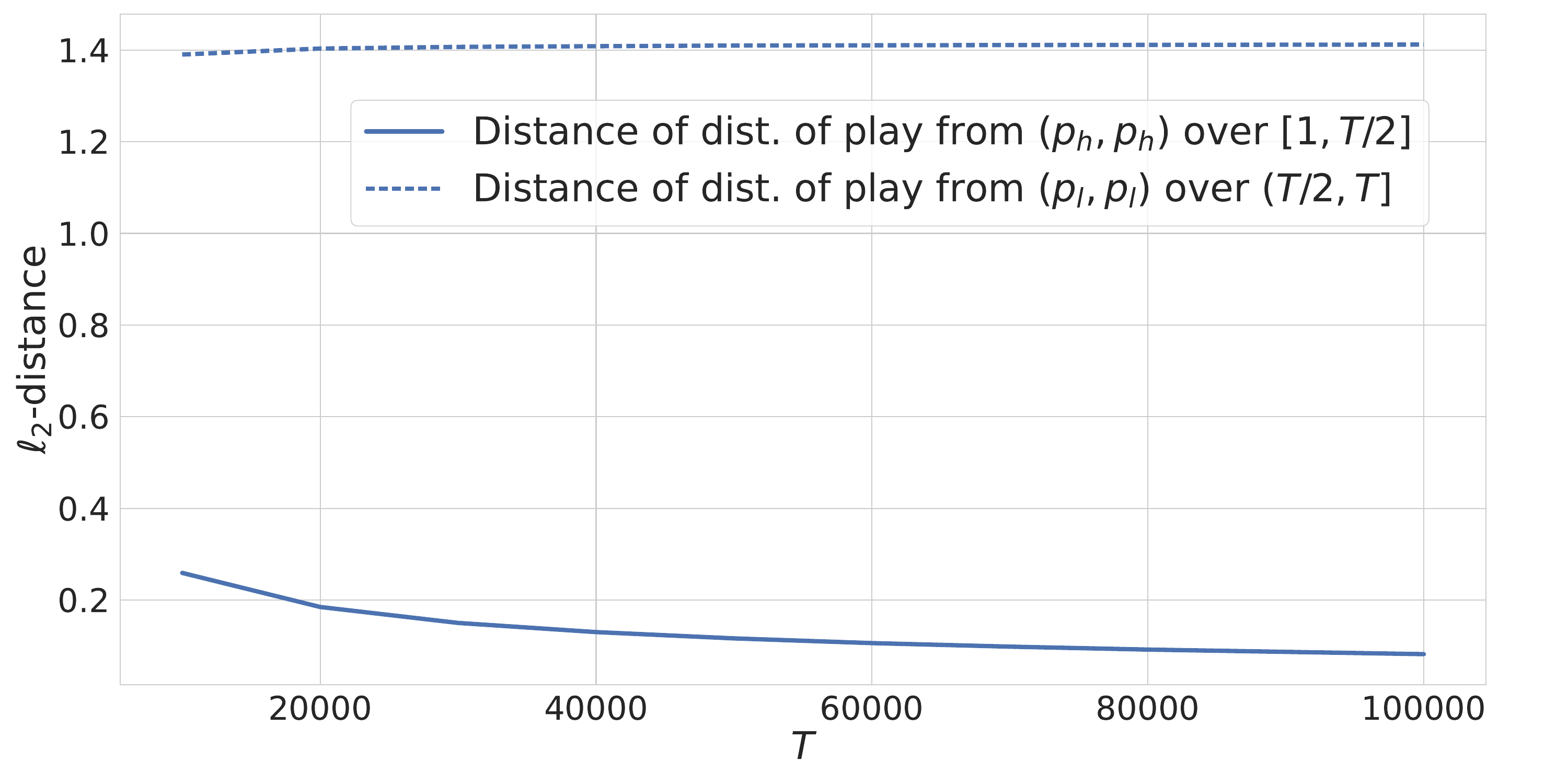}\vspace{-0.2cm}
    \caption{\footnotesize Distance between the average empirical joint distribution of play and the coarse correlated equilibrium of each stage game over the first and second halves of the season in Example \ref{pricing game example}. Sellers employ the Exp3 algorithm tuned with exploration parameter $\gamma_T =~\sqrt{2 \ln(2)/((e-1)T)}$. The figure shows the~$\ell_2$-distance of the empirical joint distribution of play in the first and second half from $(p_h,p_h)$ and $(p_l, p_l)$ respectively. The empirical joint distribution of play is estimated by averaging the results over $750$ simulations.}
    \label{fig:comparison_Exp3Intro}
\end{figure}\vspace{-0.3cm}
\noindent
Over the first half of the season, the distribution of play converges to $(p_h,p_h)$, which the unique coarse correlated equilibrium during that time segment.~However, in the second half, following the shift in customers' price sensitivity, the distribution of play does not converge to the new coarse correlated equilibrium,~$(p_l, p_l)$. We will revisit this example in \S4, after characterizing a broad property, termed \textit{dynamic benchmark consistency}, under which the joint distribution of play is guaranteed to track the sequence of stage game coarse correlated equilibria. We next discuss the main contributions of the paper.

\subsection{Main contributions}\label{Main contributions}

The main contribution of the paper is the formulation and analysis of a general time-varying multi-agent system where players address the changing nature of the environment by relying on policies with performance guarantees relative to dynamic sequences of actions. We characterize the outcomes that can emerge, and identify when the agents' interaction can lead to predictable dynamics and improve both individual and collective performance guarantees. Our analysis provides new insights into the dynamics and efficiency of time-varying multi-agent systems and offers a better understanding of competitive interactions in changing environments. In more detail, the contribution of the paper is along the following dimensions.

\textbf{Formulation.} We formulate a flexible time-varying multi-agent system where a general-sum stage game can change an arbitrary number of times along a horizon of $T$ periods, and denote its number of changes by $V_T$. Notably, our formulation is a generalization of the classical repeated games setting, which is retrieved in the special case when $V_T = 0$. 

In the context of the dynamic systems we consider, static equilibrium concepts become ambiguous. We therefore propose a notion of \emph{tracking error} that extends static equilibrium notions (including correlated ones) to time-varying systems. For any profile of policies deployed by the players and some class of static equilibria (e.g., Nash equilibria, correlated equilibria, or coarse correlated equilibria), our notion of tracking error sums, over the time segments in which the stage game does not change, the distance between the empirical distribution of play that arises in these segments, and the corresponding equilibria, weighted by the number of time periods in that segment. If for some profile of policies the tracking error grows sublinearly with the time horizon~$T$, then, on average, their joint distribution of play tracks the sequence of equilibria.

We focus on a broad class of policies that are designed for competitive changing environments, in the sense of satisfying the following criterion. Consider a sequence of time periods~$1,2,\ldots,T$, and a non-negative number $C_T$. A policy is said to be \emph{dynamic benchmark consistent} with respect to $C_T$ if, for all strategies of the other players, the per-period expected regret it exhibits relative to the best dynamic sequence of actions that can be selected in hindsight, and that has at most~$C_T$ action changes, diminishes to zero as the horizon length~$T$ grows large.

\textbf{Equilibrium analysis.} We first analyze the classical repeated game setting~($V_T = 0$), where only the endogenous changes in the actions of the players can cause changes to the payoff each of them receives. In this baseline case, we provide a \emph{tight} equilibrium characterization for the joint distribution of play generated by the interaction of dynamic benchmark consistent policies. We establish that, despite providing stronger performance guarantees compared to no-regret algorithms, asymptotically, the set of admissible empirical joint distributions of play that can emerge under the interaction of dynamic benchmark consistent policies remains the one of coarse correlated (Hannan) equilibria. While a direct analysis is non-trivial due to the dynamic nature of the benchmarks, we are able to establish this characterization by showing that, when the horizon length grows large, the set of possible joint distributions of play that can emerge is \textit{independent} of the number of changes allowed in the benchmark. 

In particular, consider any pair $\tilde{C}_T$ and~$C_T$ of upper bounds on the number of action changes allowed in the benchmark sequence. We show that, as the horizon length grows, any joint distribution of play that can emerge when players deploy policies that are dynamic benchmark consistent subject to $\tilde{C}_T$, can be approximated arbitrarily well when players deploy policies that are dynamic benchmark consistent with respect to~$C_T$. As this result also holds in the case in which~$\tilde{C}_T = 0$, it includes no-regret policies as well. By leveraging this approximation, we show that any equilibrium in the set of coarse correlated equilibria can emerge under the interaction of dynamic benchmark consistent policies. This establishes that when $V_T = 0$, the tightest characterization one may provide for the distribution of play generated by dynamic benchmark consistent policies coincides with the set of coarse correlated equilibria. 

Then, when $V_T \geq 1$, we delineate through the novel notion of tracking error when the distribution of play that emerges under the deployment of these policies can approximate an evolving sequences of coarse correlated equilibria, leading to predictable system dynamics. Specifically, our main result establishes that, when the underlying games are changing sufficiently slowly, that is, $V_T = o(T)$, and enough action changes are allowed in the benchmark, that is, $C_T \geq V_T$, mild conditions on the class of dynamic benchmark consistent policies guarantee sub-linear tracking error. We demonstrate that the required conditions are satisfied by typical dynamic benchmark consistent policies, including, for example, the Exp3S algorithm (see \citealt{auer2002nonstochastic}), as well as any suitably restarted no-regret policy. Therefore, in slowly changing multi-agent systems, on average, the distribution of play induced by dynamic benchmark consistent policies can track the evolving sequence of coarse correlated equilibria. 

We further analyze the important class of games where the players' payoff functions have a single best-reply to the actions of the other agents, which applies to many practical settings of algorithmic decision-making, including pricing (e.g., Example \ref{pricing game example}), and truthful mechanisms (e.g. second-price auctions). For this class of games, we show that the tracking guarantees extend to \emph{all} dynamic benchmark consistent policies with~$C_T \geq V_T$ and $C_T = o(T)$. Moreover, we provide the following converse to the former result for sequences of single best-reply payoff functions. Suppose that player $i$ is using a dynamic benchmark consistent policy with number of action changes strictly below the number of changes in the sequence of their payoff functions. Then, there exists a sequence of stage games for which the tracking error is linear in the horizon length~$T$, even if the other players are playing their single best-reply at every period. 

We demonstrate the breadth of our formulation and techniques by extending our analysis to no-internal regret policies (\citealt{blum2007internalregret}), which have been extensively studied in repeated games and were shown to induce distributions of play that converge to the set of correlated equilibria. We introduce the property of internal dynamic benchmark consistency (IDB) by incorporating time-dependent action changes in the internal regret benchmark, and we show that when~$V_T \leq C_T$ and $C_T = o(T)$, the empirical joint distribution of play induced by IDB-consistent policies exhibits diminishing tracking error relative to sequences of correlated equilibria.

\textbf{Welfare analysis.} Our results imply that whenever the stage game may change, a critical distinction between traditional no-regret and dynamic benchmark consistent policies emerges: Even in simple settings, the deployment of no-regret algorithms can result in \emph{linear} tracking error. By using a (dynamic) Price of Anarchy metric to quantify the worst-case efficiency of the possible outcomes, we show that this distinction has important implications in terms of system \emph{welfare}. When the stage game is fixed, we show that, asymptotically, the worst-case social welfare resulting from the interaction of dynamic benchmark consistent policies is independent of the number of action changes in the benchmark sequence, and it is always equivalent to the worst-case social welfare achievable under coarse correlated equilibria. Therefore, the tightness of the existing price of anarchy bounds, such as the ones obtained for $(\lambda,\mu)$-smooth games \citep{robustpoa}, remains as the ones achievable for traditional no-regret policies. 

On the other hand, when the stage game may change, we establish that dynamic benchmark consistent policies can provide \emph{better} welfare guarantees compared to traditional no-regret algorithms. In particular, when players face sequences of time-varying $(\lambda,\mu)$-smooth games, the fraction of the optimal social welfare that can be guaranteed scales with the number of changes that are allowed in the benchmark sequences, and thus dynamic benchmark consistent policies can provide better welfare guarantees compared to traditional no-regret algorithms.

Our findings have implications for the design and operations of online marketplaces, particularly in retail and digital advertising, where algorithmic agents repeatedly compete under incomplete information in a changing environment. Our results indicate that algorithms designed to compete with dynamic benchmarks can be deployed by agents (or by the platform on their behalf) to improve both individual and welfare guarantees. Furthermore, the dynamics that emerge from the interaction of dynamic benchmark consistent algorithms approximate a sequence of static equilibrium outcomes, which are therefore implementable by the platform.

\section{Literature Review}

Our paper is related to several strands of literature. We discuss each of them separately below.

\textbf{Single-agent learning.} Several literature streams have considered single-agent settings where in each period a decision maker takes an action, and has access to some information feedback that depends on the selected action. When the payoff function is fixed throughout the horizon, these settings include the stochastic approximation framework (e.g., \citealt{robbins1951stochastic}, \citealt{nemirovskij1983problem}) as well as the stochastic multi-armed bandit (MAB) problem (e.g., \citealt{lai1985asymptotically}, \citealt{ auer2002finite}). Other settings allow the payoff function to arbitrarily change over time, including the online convex optimization and the adversarial MAB settings (e.g., \citealt{zinkevich2003online}, \citealt{auer2002nonstochastic}).
The limitations of a single best action as a yardstick to identify effective policies and the possibility of achieving long-run average performance that asymptotically approaches the best sequence of expected rewards has been analyzed in various single-agent settings. Examples include adversarial settings (\citealt{auer2002nonstochastic}, \citealt{bubeck2010}) where dynamic benchmarks are constrained to a fixed number of action changes, the non-stationary stochastic settings (\citealt{besbes2015, besbes2019} and \citealt{cheung2022hedging}) where changes in the environment are controlled by a global variation budget, as well as the adaptive regret framework (e.g., \citealt{hazan2007adaptive}, \citealt{daniely2015strongly}). While these studies are concerned with the design of algorithms in single-agent settings, our work focuses on a multi-agent framework and analyzes the dynamics that emerge when agents use policies with guarantees relative to dynamic benchmarks.

\textbf{Learning in games.} A rich literature stream has been considering learning in game-theoretic settings. Our paper is closely related to the study of \emph{no-regret dynamics}, originating with the pioneering works by \cite{blackwell1956} and \cite{Hannan}. A key result in this line of work is that when all players adopt no-regret policies, the empirical joint distribution of play converges to the Hannan set, also called the set of coarse correlated equilibria (see, e.g., \citealt{HMSC3}). A related notion is the one of \emph{internal} regret (\citealt{foster1997calibrated},  \citealt{HMSC1, HMSC2, HMSCreinforcement}). When all players deploy no-internal regret policies, the corresponding distribution of play converges to the set of correlated equilibria. Specific algorithms that, when jointly deployed by all players, achieve near-optimal convergence rates to coarse correlated and correlated equilibria have been established respectively by \cite{daskalakis2021near} and \cite{anagnostides2022near}. In specific settings, no-regret algorithms where shown to converge to stronger equilibrium notions such as Nash, Bayes-Nash, and Dynamic-Nash; see, e.g., \cite{balseiro2019learning}, \cite{bichler2023computing}, \cite{ba2021doubly} and references therein.

No-regret dynamics have also been studied in specific classes of time-varying games; for instance, \cite{duvocelle2022multiagent} and \cite{yan2023fast} analyze the possibility of tracking changing sequences of Nash equilibria in monotone and strongly monotone time-varying games. A similar line of research is considered for classes of time-varying concave-continuous games (\citealt{mertikopoulos2021equilibrium}), and zero-sum games (\citealt{cardoso2019competing}, \citealt{zhang2022no}, \citealt{fiez2021online}, and \citealt{feng2024last}). In addition, \cite{foster2016learning} and \cite{lykouris2016learning} study games with changing populations where players can be substituted at each round with exogenous probability $p$, and provide bounds on the worst-case welfare of the outcomes. Our analysis in \S \ref{sect Welfare in time-varying games} can be viewed as extending their findings to a broader variation metric for sequences of games and not only random replacement of the agents. Recently, \cite{anagnostides2024convergence} consider time-varying games where optimistic mirror descent dynamics are studied under a set of learning agents that is artificially augmented, by jointly considering the learning problem of a central mediator who recommends actions to the game participants. In contrast with the above works, the current paper does not aim at studying in-game dynamics of specific algorithmic or game structure, but rather to provide a general framework to assess equilibria and welfare consequences of deploying a broad class of algorithms with guarantees relative to dynamic benchmarks.

\textbf{Online marketplaces}. In the context of online marketplaces, there has been a recent surge in the study of algorithmic decision-making in single and multi agent settings. A prominent example is digital advertising, where automated bidding strategies, or auto-bidders, have become prevalent in digital advertising platforms. Motivated by the practical features of no-regret algorithms in these settings, individual performance and welfare guarantees of these algorithms, along with related equilibrium properties in digital advertising context, have been extensively studied in the recent operations literature (e.g., \citealt{balseiro2019learning}, \citealt{balseiro2021budget}, \citealt{conitzer2022pacing},  \citealt{gaitonde2022budget}, \citealt{aggarwal2024no}, \citealt{han2024optimal} and \citealt{kumar2024strategically}). In addition, dynamic pricing algorithms couched in incomplete information settings are widely used in ride-sharing platforms, rental sharing platforms, and e-commerce platforms, and have been analyzed in various application domains (e.g., \citealt{banerjee2015pricing},  \citealt{papanastasiou2017dynamic}, \citealt{varma2023dynamic}, \citealt{shin2023dynamic}, \citealt{carnehl2024pricing}, and \cite{perakis2024dynamic} to name a few). Our paper contributes to this literature by addressing the inherently competitive and time-varying nature of these settings, where demand structure and product valuations typically change over time. We provide a general framework that can accommodate such dynamic environments and analyze the possible outcomes achievable by algorithms designed to perform well under these time-varying conditions.

\section{Model}\label{Model and Preliminary Results}
\subsection{Primitives and notations}

We consider a time-varying multi-agent system comprised of $N$ players and a horizon of length~$T$. The set of available actions for player $i$ is denoted by $A^{i} := \{1,...,K^i\}$ for some~$K^i \in \mathbb{N}$, and we assume that the action set remains fixed throughout the horizon. We denote by~$A := \prod_{i = 1}^{N} A^{i}$ the (multi-agent) outcome space, and endow it with the lexicographical order~$(<)$. For each time $t$, we denote by $u^i_t: A \rightarrow [-M,M]$ player $i$'s payoff function, where $M$ is some positive real number.  
We denote by $\Delta(A)$ the set of all joint distributions on $A$. Often, with some abuse of notation, we identify joint distributions with vectors, that is, $q =~(q_1,\ldots, q_{|A|})$, such that~$q_1$ is the probability mass on the first outcome, $q_2$ on the second one, and so on until $q_{|A|}$. Distances between distributions are induced by an arbitrary $p$-norm, where $p \geq 1$ or~$p =~\infty$. 

For any joint distribution $q \in \Delta(A)$ and subset of distributions~$\mathcal{D} \subseteq \Delta(A)$, we denote the distance between $q$ and the set $\mathcal{D}$ by $\mathrm{d}_p\left(q,\mathcal{D} \right) := \inf_{\tilde{q} \in \mathcal{D}} \|q - \tilde{q} \|_p $. As customary in the literature, we write $a^{-i}$ to denote the profile of actions of all players except~$i$, that is, $a^{-i} =~(a^{j} )_{j \neq i}$, and, to avoid trivialities, we assume that each player has at least two actions, that is, $K^{i} \geq 2$ for all $i$. For any integers $a \leq b$, we use the notation $\llbracket a,b \rrbracket := \{a,\ldots,b\}$ and $\llbracket a \rrbracket := \llbracket 1 , a \rrbracket$.

As both the number of players $N$ and the outcome space $A$ are fixed, we identify the $t$-th stage game $\Gamma_t$ as a vector of payoff functions, that is, $\Gamma_t := (u^i_t)_{i=1}^N$, where~$u^{i}_t$ is the payoff function for player $i$ at time $t$. The corresponding sequence of stage games is denoted by $\mathcal{G}_T := (\Gamma_t : t \in \llbracket T \rrbracket)$, and we write $\mathcal{U}_T^i := (u^i_t : t \in \llbracket T \rrbracket)$ to refer to the sequence of player $i$'s payoff functions. 

We allow the underlying stage game to change an arbitrary number of times throughout the horizon, and track its cumulative temporal variation as follows: 
\[ 
V(\mathcal{G}_T) := \sum_{t=1}^{T-1} \mathbf{1}(\Gamma_t \neq \Gamma_{t+1}),
\]
where $\Gamma_t \neq \Gamma_{t+1}$ if and only if $\| u^{i}_{t} - u^{i}_{t+1} \|_{\infty} > 0$ for some player $i$; 
that is, whenever the payoff of a player changes between two consecutive periods, we count it as a change of the stage game. When clear from the context, we replace $V(\mathcal{G}_T)$ with the simpler notation~$V_T$. Notably, the traditional setting of \emph{repeated} games is retrieved whenever $V_T = 0$. To deal with asymptotic results, we define~$\mathcal{G} := (\mathcal{G}_T : T \in \mathbb{N}) $, and, for each player $i$, $\mathcal{U}^i := (\mathcal{U}_T^i : T \in \mathbb{N})$.

To ease exposition, we assume that the horizon length~$T$ is known to the players (standard techniques such as the doubling-trick can be employed to obtain regret bounds even when $T$ is unknown; see \S\ref{subsect: dynamic benchmark consistency} for the definition of regret). In each period $t$, players independently select actions, and then observe only their own payoff (often referred to as \emph{bandit feedback}). Players do not observe the actions chosen by the other players, the number and the identity of the players who participate in the game, nor the payoff they would have obtained by playing differently. Let
\[ 
\mathcal{P}_{T}^i := \big\{f: \cup_{t=1}^{T-1}\left(A^i \times [-M,M]\right)^{t-1} \to \Delta(A^i)\big\}.
\]
\noindent
be the set of (non-anticipating) functions that prescribe at each time $t \in \llbracket T \rrbracket$ a probability distribution over $A^i$ that depends only on the previous (up to $t-1$) sequence of action-payoff~pairs. A policy for player~$i$, denoted by $\pi^i$, is a sequence of functions~$\pi_T^i$, such that $\pi_T^i$ is an element of~$\mathcal{P}_{T}^i$. 

We denote by~$\mathcal{P}^i$ the collection of all player $i$'s policies, and we similarly denote by $\mathcal{P} := \prod_{i=1}^{N} \mathcal{P}^i$ the space of all profiles of policies (one for each player). Elements of $\mathcal{P}$ are denoted by $\pi$. Let $\delta_{\bar{a}}$ denoting the Dirac measure at $\bar{a}$. Then, the (expected) empirical distribution of play induced by a profile of policies $\pi$ over the horizon~$\llbracket T \rrbracket$ is defined by 
\[ 
\bar\delta_T^{\pi} := \mathbb{E}^{\pi}\left[\frac{1}{T} \sum_{t = 1}^{T} \delta_{a^{t}}\right] \in \Delta(A).
\]

\subsection{Dynamic benchmark consistency}\label{subsect: dynamic benchmark consistency}

We measure the performance of a policy relative to dynamic sequences of actions selected in hindsight. For a non-negative number $C_T$, we denote the set of player $i$'s possible action sequences of length~$T$ with at most $C_T$ action changes~by
\[
\mathcal{S}^i(C_T) := \left \{ (x_t)_{t=1}^{T} \in (A^i)^{T} \;: \; \sum_{t=1}^{T-1} \mathbf{1}(x_{t+1} \neq x_t) \leq C_T \right\}. 
\]
Then, for a sequence of outcomes, $(a_t: t \in \llbracket T \rrbracket)$, the regret incurred by player $i$ relative to a comparative sequence $(x_t: t \in \llbracket T \rrbracket)$ taken from the set $\mathcal{S}^i(C_T)$ is defined as the difference between the performance generated by the actions $(a_t: t \in \llbracket T \rrbracket)$ and the performance generated by the modified sequence of actions~$((x_t,a_{t}^{-i}): t \in \llbracket T \rrbracket)$, that is,
\[
\mathrm{Reg}^i(\mathcal{U}_T^i,(x_t),(a_t)) :=  \sum_{t=1}^{T}  u^i_t(x_t,a_{t}^{-i}) - u^i_t(a_{t}^{i},a_{t}^{-i}). \vspace{0.05cm}
\]

Let $C := (C_T : T \geq 1)$ be a non-decreasing sequence.\footnote{It is without loss of generality to assume that $C$ is non-decreasing, because for each $T \geq 1$, $C_T$ is only an upper bound on the possible number of changes that are allowed in the benchmark over $\llbracket T \rrbracket$.} Then, dynamic benchmark consistency requires a policy to guarantee, as the horizon grows large, the performance of the best dynamic sequence of actions that can be selected in hindsight with number of changes constrained by $C$. 

\begin{definition}[Dynamic benchmark consistency]\label{V-Consistency Definition} 
A policy $\pi^i \in \mathcal{P}^i$ is dynamic benchmark consistent with respect to the sequence~$C$ if there exists a function $g: \mathbb{N} \to \mathbb{R}_{+}$ with $g(T) = o(T)$,\footnote{We can assume without loss of generality that $g$ is non-decreasing and that $g(T) \leq M T$, for all~$T \in \mathbb{N}$.} such that for all~$T \geq 1$, and for all (potentially correlated) strategies of the other players $\sigma^{-i}: \cup_{t=1}^{T} A^{t-1} \to \Delta(A^{-i})$, the regret of policy $\pi^i$ is bounded by
\[ \max_{(x_t) \in \mathcal{S}^i(C_T)} \mathbb{E}^{\pi^i,\sigma^{-i}}\left[ \mathrm{Reg}^i(\mathcal{U}_T^i,(x_t),(a_t)) \right] \leq g(T),\]
\vskip 0.01cm
{\setlength{\parindent}{0cm}
where $ \mathbb{E}^{\pi^i,\sigma^{-i}}$ denotes the expectation with respect to the probability measure over $A^{T}$ induced by the interaction of the policy $\pi^i$ (specified over a horizon of length $T$) and $\sigma^{-i}$.\footnote{Up to polylogarithmic factors, typical regret bounds for dynamic benchmark consistent policies are of the form~$g(T) = T^{\alpha} (C_T)^{1-\alpha}$, for $\alpha \geq 1/2$.
}
}
\end{definition}

\noindent
We denote by~$\mathcal{P}_{DB}^i(C)$ the set of all player $i$'s dynamic benchmark consistent policies subject to~$C$.

The sequence $C$ determines the strength of the benchmark by limiting the number of potential action changes in the comparative sequences used for evaluating policies. Crucially, $C$ places \emph{no restrictions on the policy itself}, on the actual actions selected by any player, or on the frequency at which the stage game changes. If a policy~$\pi^i$ satisfies Definition \ref{V-Consistency Definition} for all $\mathcal{U}^i$, we say that this policy is dynamic benchmark consistent for all payoff functions. To ease exposition, we use as short-hand DB($C$) to refer to a dynamic benchmark subject to the sequence~$C$. We denote by $\mathcal{P}_{DB}^N(C) := \prod_{i = 1}^{N} \mathcal{P}_{DB}^i(C)$ the set of all possible profiles of dynamic benchmark consistent policies.

The traditional notion of Hannan consistency is retrieved when $C$ is the constant sequence at~zero. Thus, dynamic benchmark consistency refines the concept of no-regret relative to static actions, by making the performance requirement that a policy needs to satisfy more~stringent. 
Nevertheless, in Appendix~\ref{structure of DB policies}, we show that when $C$ scales sub-linearly with $T$, i.e., $C_T = o(T)$, restarting at a suitable rate \emph{any} no-regret policy suffices to achieve DB($C$)-consistency. This mapping between traditional no-regret algorithms and dynamic benchmark consistent policies establishes that whenever $C_T = o(T)$, the set of dynamic benchmark consistent policies $\mathcal{P}^{i}_{\textnormal{DB}}(C)$ is in fact ``large'', containing as many policies as the classical no-regret class. A broader discussion about the algorithmic structure and the abundance of these policies is provided in Appendix~\ref{structure of DB policies}.

\subsection{Equilibria and tracking error}

Given the dynamic nature of the sequence of games faced by the $N$ players, it is not clear how to define meaningful equilibrium notions, especially considering the limited information available to the agents. For instance, standard notions of equilibrium (e.g., Nash, and correlated or coarse correlated equilibria), remain meaningful only if the sequence of games eventually stabilizes. 

We propose extending static equilibrium notions to time-varying settings by introducing the following \emph{tracking error}. For a sequence of games $\mathcal{G}_T$, define recursively $t_0 := 0$ and, for $k \in \llbracket V_T+1 \rrbracket$, $t_k := \max\{ t \in \llbracket T \rrbracket : \Gamma_t = \Gamma_{t_{k-1}+1}\}$. Then, by construction, over $\llbracket t_{k-1}+1,t_{k} \rrbracket$, the stage game does not change. To ease notation, we denote $\mathcal{T}_{(k)} := \llbracket t_{k-1}+1,t_{k} \rrbracket$, and refer to this interval as the $k$-th batch.

\begin{definition}[Tracking error]\label{tracking error def}
For each batch $\mathcal{T}_{(k)}$, let~$\mathcal{E}_{(k)} \subseteq \Delta(A)$ be some set of (potentially correlated) equilibria of interest. Then,
the tracking error of a profile of policies $\pi \in \mathcal{P}$ is
\[
err_p(\mathcal{G}_T,\pi,\mathcal{E}) := \sum_{k =1}^{V(\mathcal{G}_T)+1} |\mathcal{T}_{(k)}|\textnormal{d}_p(\bar{\delta}_{(k)}^{\pi},\mathcal{E}_{(k)}),
\]
where $\bar{\delta}_{(k)}^{\pi} := \mathbb{E}^{\pi}\left[\frac{1}{|\mathcal{T}_{(k)}|} \sum_{t \in \mathcal{T}_{(k)}} \delta_{a_t} \right]$ denotes the restriction of the empirical distribution of play to the $k$-th batch and $\mathcal{E} := (\mathcal{E}_{(k)} : k \in \llbracket V(\mathcal{G}_T) + 1 \rrbracket)$ is the sequence of equilibria outcomes of interest.
\end{definition}

The essence of Definition \ref{tracking error def} is the following: to assess how well the empirical distribution of play tracks the sequence of evolving equilibria of interest, we sum, over the batches where the stage game remains constant, the distance between the empirical distribution of play within each batch $\bar{\delta}^{\pi}_{(k)}$ and the corresponding equilibrium set $\mathcal{E}_{(k)}$, weighted by the length of that batch. If for some profile of policies $\pi$, the tracking error is sub-linear, then, on average, the distribution of play induced by $\pi$ tracks the evolving sequence of equilibrium sets $\mathcal{E}$. In particular, if the game is constant throughout the horizon, that is, $V_T = 0$, then the tracking error boils down to the distance of the (unrestricted) empirical distribution of play $\bar{\delta}^{\pi}_T$ from the set of equilibria $\mathcal{E}_{(1)}$. 

The analytical approach we develop is to first focus on the case of repeated games with $V_T=0$, and provide a tight characterization for the joint distributions of play that emerge by the interaction of dynamic benchmark consistent policies as a static equilibrium concept. We next include this equilibrium concept in our notion of tracking error, and move to consider non-stationary systems ($V_T \geq 1$). In particular, we delineate when the distribution of play can exhibit a diminishing tracking error relative to the sequence of equilibria that emerge when $V_T\geq 1$. 

\subsection{Discussion and extensions}


\paragraph{Random payoffs.}{External randomness in the payoff functions can be readily accommodated within our framework and all our results extend immediately. In many applications of interest (e.g., pricing or auctions) it is natural to define the payoffs by \( u^i(a, \epsilon) \), where $\epsilon$ is a random component drawn from some distribution $F$, that is, \( \epsilon \sim F \). For example, in auction settings, bidder valuations may be random, but the distribution of valuations typically remains stable over multiple auctions, and it changes only following more major shifts in market conditions. 

In such cases, we define \( u^i_t(a) := u^i(a, \epsilon_t) \),~where~\( \epsilon_t \sim F_t \), allowing the distribution \( F_t \) to evolve over time. One may then re-define $V_T$ as\vspace{-0.2cm}
\[
V_T := \sum_{t=1}^{T-1} \mathbf{1}(F_{t+1} \neq F_t).\vspace{-0.2cm}
\]
Thus, only changes in the underlying distribution \( F_t \) contribute to the variation budget, even though the realized stage game may fluctuate every period due to randomness in \( \epsilon_t \). All our results extend to this setting by incorporating the additional sources of randomness into the corresponding expectations,\footnote{The results of \S\ref{sect: dbc and tracking error} hold under the repeated game setting, which is retrieved by setting $F_t = F$, for all $t \in [T]$ with $F$ a degenerate distribution.} where expectations of regret and the distribution of play would be taken with respect to \( \pi_i \), \( \sigma_{-i} \), and the distributions \( (F_1, \ldots, F_T) \).

\vspace{-0.5cm}
\paragraph{Piece-wise stationary measures.}{



The variation and tracking-error definitions in our formulation adopt a piecewise-stationary view of the environment, where the horizon is partitioned into discrete phases during which the underlying payoff functions are effectively constant. 
While small period-to-period fluctuations are common in practice, in many markets the underlying game is nearly constant for extended stretches, with major changes occurring only infrequently, which makes a piecewise-stationary perspective natural. For example, an ad exchange may run hundreds of auctions per hour while the estimated distribution of bidders’ valuations is updated only occasionally. Assuming that the stage game remains approximately unchanged for many plays and experiences major shifts only at discrete points allows tractability while capturing first-order dynamics that are critical to the analysis of such complex systems, and underlies the development of equilibrium analyses and their successful applications in this domain \citep{aggarwal2024no}. 


We note that while smoothly-varying non-stationary models (e.g., \citealt{besbes2015}) provide a feasible path for extending our analysis, it is less clear how to define a meaningful tracking-error notion under such variation types. 

}

}

\section{Characterizing joint distributions of play and their tracking error}\label{sect: dbc and tracking error}

\subsection{Repeated games}\label{Dynamic Hannan Set Section}

We begin by analyzing the classical setting of repeated games where the stage game remains fixed and therefore $V_T = 0$. In this case, from the players' perspective, non-stationarity can arise only because of the changes in the actions of the other agents. We study the outcomes that can emerge when players use dynamic benchmark consistent policies, and provide an asymptotic characterization of the distributions of play that can emerge. This allows us to identify the static equilibrium concept that we are going to include in the tracking error notion. The results of this subsection (and their corollary in \S\ref{PoA Section}) assume for simplicity that the payoffs~$u^i$ have rational values, that is~$u^i(a) \in [-M,M] \cap \mathbb{Q}$, for all $i$ and $a$.

Differently from traditional no-regret policies, for which a characterization of the empirical distribution of play in repeated games follows directly from their definition, the dynamic nature of the benchmarks considered in this work makes such characterization non-trivial. Indeed, dynamic benchmark consistency requires that, as the horizon length $T$ grows large, for any possible partition of the horizon $\llbracket T \rrbracket$ (with cardinality at most $C_T+1$), the achieved performance is at least as good as the average performance achieved by best replying to the empirical distribution of play over \emph{each interval} of that partition. Therefore, rather than pursuing a direct approach, we rely on a two-steps procedure, by first studying how the number of action changes allowed in the benchmark at hand affects the behavior of the distribution of play. 

As dynamic benchmark consistent policies also satisfy the (static) no-regret criterion, an empirical distribution of play can emerge only if it is also possible under traditional no-regret policies. Nevertheless, a \emph{tight} characterization of the distributions of play that can be generated by dynamic benchmark consistent policies requires identifying which of the distributions that are possible under no-regret policies are ruled out by the stricter benchmark considered in this work. To this end, consider any pair of sub-linear sequences $\tilde{C}$ and~$C$. The following theorem establishes that, as the horizon length grows large, \emph{any} distribution of play, generated by the interplay of policies that are dynamic benchmark consistent subject to~$\tilde{C}$, can be approximated by the  distribution of play induced by the interplay of policies that are dynamic benchmark consistent subject to $C$.

\begin{proposition}[Approximation of joint distribution of play]\label{Direction 1}
Let $C$ and $\tilde{C}$ be two non-decreasing sub-linear sequences, and $\tilde{\pi}  \in \mathcal{P}_{DB}^N(\tilde{C})$ be a profile of~DB($\tilde{C}$) consistent policies. Take any convergent sub-sequence of the joint distribution of play~$\bar\delta_T^{\tilde{\pi}}$ and denote its limit~by $\tilde{q} \in \Delta(A)$. Then, for any~$\epsilon >0$, there exists a profile of DB($C$) consistent policies~$\pi \in \mathcal{P}_{DB}^N(C)$ that generates a convergent distribution of play~$\bar\delta_T^{\pi}$ whose limit $q \in \Delta(A)$ is $\epsilon$-close in the $p$-norm to $\tilde{q}$, that is, $\| q - \tilde{q} \|_p \leq \epsilon$.\footnote{To simplify exposition, the statement of Proposition \ref{Direction 1} assumes that all the players' policies are dynamic benchmark consistent with respect to the same sequence of upper bounds~$C$. However, the proof of Proposition~\ref{Direction 1} shows that our result holds also when players deploy different sequences of upper bounds.}
\end{proposition}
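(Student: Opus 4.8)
The plan is to construct, from the given profile $\tilde\pi \in \mathcal{P}_{DB}^N(\tilde C)$, a new profile $\pi \in \mathcal{P}_{DB}^N(C)$ whose distribution of play is asymptotically $\epsilon$-close to $\tilde q$. The key structural observation is that the joint distribution of play $\bar\delta_T^{\tilde\pi}$ is a probability vector in the simplex $\Delta(A)$, which is a finite-dimensional object, and since payoffs are rational the ``target'' vector $\tilde q$ can be approximated arbitrarily well by a rational probability vector $q$ each of whose coordinates is a rational with a common denominator $L$. Such a $q$ can be realized as the \emph{exact} empirical distribution of a deterministic periodic sequence of outcomes $(a_1,\ldots,a_L,a_1,\ldots,a_L,\ldots)$ of period $L$ that cycles through the multiset of outcomes dictated by $q$. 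The strategy is then: (i) pick rational $q$ with $\|q-\tilde q\|_p\le\epsilon$; (ii) let each player $i$ deploy the policy $\pi^i$ that simply plays the deterministic periodic schedule corresponding to the $i$-th marginal coordinates of the cycle $(a_1,\ldots,a_L)$, ignoring all feedback; (iii) verify that this $\pi^i$ is DB($C$)-consistent for every $C$ (in particular for the constant-zero sequence, hence for any non-decreasing $C$); and (iv) check that $\bar\delta_T^\pi \to q$.

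For step (iv): since each player plays a fixed deterministic period-$L$ schedule, the realized outcome sequence is exactly the period-$L$ cycle through $(a_1,\ldots,a_L)$, so $\bar\delta_T^\pi = \frac1T\sum_{t=1}^T\delta_{a_t}$ differs from $q$ by at most $O(L/T)$ in any norm, hence converges to $q$, which is within $\epsilon$ of $\tilde q$. The genuinely substantive step is (iii): I must show a fixed periodic deterministic policy is dynamic benchmark consistent, i.e., its regret against the \emph{best} dynamic comparator with up to $C_T$ switches is $o(T)$ uniformly over all opponent strategies $\sigma^{-i}$. This is where care is needed: the comparator in Definition~\ref{V-Consistency Definition} only modifies player $i$'s own action while keeping $a_t^{-i}$ fixed, and $C_T\ge 0$, so even the static benchmark ($C_T=0$) gives the comparator a fixed action $x$ with payoff $\sum_t u_t^i(x,a_t^{-i})$. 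A fixed periodic policy need \emph{not} be no-regret against arbitrary adversarial payoffs — so I cannot simply invoke a generic no-regret fact.

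The way around this is to exploit that \emph{all $N$ players are using the deterministic periodic policy simultaneously}, so the opponents' action profile $a_t^{-i}$ is itself the fixed periodic schedule, not adversarial. Against a periodic opponent sequence of period $L$, and with payoffs that are fixed across the horizon (we are in the $V_T=0$ regime of \S\ref{Dynamic Hannan Set Section}), the per-period payoff $u^i(a_t^i, a_t^{-i})$ is itself periodic with period $L$; the best fixed action $x^\star$ against this periodic opponent achieves average payoff $\frac1L\sum_{\ell=1}^L u^i(x^\star, a_\ell^{-i})$, and by choosing the cycle $(a_1,\ldots,a_L)$ appropriately — specifically by taking $q$ to be a rational approximation of $\tilde q$ that is additionally a \emph{correlated equilibrium-type} fixed point, or more simply by absorbing the regret gap into the additive $g(T)$ slack — we get that the regret per period is bounded by a constant independent of $T$, hence total regret $O(1) = o(T)$. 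Actually the cleanest route: since the comparator's total payoff and the realized total payoff each equal $T/L$ times a per-cycle quantity plus an $O(1)$ boundary term, the regret is $\frac{T}{L}\bigl(\text{per-cycle gap}\bigr) + O(1)$; to make this $o(T)$ one needs the per-cycle gap to be zero, i.e., the cycle must have each player best-responding on average — which is exactly the statement that $q$ should be (close to) a coarse correlated equilibrium. So in fact the correct construction picks $q$ to be a rational coarse correlated equilibrium $\epsilon$-close to $\tilde q$; such a rational CCE exists near $\tilde q$ because $\tilde q$ is itself (arbitrarily close to) a CCE — this follows since $\tilde\pi$ is DB($\tilde C$)-consistent and $\tilde C$ is sublinear, so $\tilde\pi$ is in particular no-regret, forcing any limit point $\tilde q$ to lie in the (closed, rational-polytope) set of CCEs — and the CCE polytope has rational vertices, so rational points are dense in it.

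The main obstacle, therefore, is step (iii) done correctly: establishing that the synchronized periodic schedule has sublinear regret against its own best dynamic comparator. The resolution hinges on (a) recognizing that we are in the $V_T=0$ repeated-game regime so payoffs are time-invariant and the realized play is exactly periodic, (b) choosing $q$ inside the CCE polytope (which is where $\tilde q$ already lives), guaranteeing the per-cycle regret gap is nonpositive up to the approximation error, and (c) folding the $\epsilon$-level discrepancies and the $O(L)$ boundary effects into the $o(T)$ slack $g(T)$ of Definition~\ref{V-Consistency Definition}. Once (iii) is in place, steps (i), (ii), (iv) are routine: rational density in the simplex, the definition of the periodic policy, and an $O(L/T)$ convergence estimate. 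I would also remark, per the footnote, that nothing in this argument requires all players to share the same sequence $C$ — each player's periodic policy is DB($C^i$)-consistent for any non-decreasing $C^i$ — so the result holds with heterogeneous comparator bounds as well.
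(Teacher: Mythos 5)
There is a genuine gap at your step (iii), and it is exactly the point the paper's construction is built to handle. Membership in $\mathcal{P}_{DB}^i(C)$ is a \emph{worst-case} property of the individual policy: Definition~\ref{V-Consistency Definition} requires the regret bound $g(T)=o(T)$ to hold for \emph{all} (potentially correlated) opponent strategies $\sigma^{-i}$, not merely for the strategies the other players actually deploy in the profile $\pi$. Your proposed policy is open-loop -- it plays a fixed periodic schedule and ignores all feedback -- so an adversarial $\sigma^{-i}$ can make even the static ($C_T=0$) comparator beat it by $\Theta(T)$: whenever you are scheduled to play a given action, the opponents choose a profile under which that action is the worst and some fixed action $x$ is the best. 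Your attempted resolution (``exploit that all $N$ players are using the deterministic periodic policy simultaneously'') only establishes the on-path behavior needed for step (iv); it cannot rescue (iii), because the quantifier over $\sigma^{-i}$ sits inside the definition of the policy class, before the profile is formed. This is why the paper does not use open-loop schedules: it builds \emph{trigger-type} policies that follow the cooperative cycle, run a detection test on the realized payoffs, and upon detecting any deviation switch to an underlying DB($C$)-consistent policy $\hat{\pi}^i$ (whose existence comes from the restarting construction). Since feedback is bandit (a player sees only her own payoff, not the opponents' actions), the paper first shows one may assume w.l.o.g.\ that each payoff section $u^i(a^i,\cdot)$ is injective, which makes the detection test infallible; regret is then bounded separately on the no-deviation event (using, as you correctly note, that $\tilde q$ lies in the Hannan set, plus the fact that a comparator with $C_T=o(T)$ switches gains at most $O(mC_T)=o(T)$ over the best static action against a period-$m$ cycle) and on the post-deviation segment (using the guarantee of $\hat{\pi}^i$).

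Your remaining ingredients are essentially right and match the paper: the rational approximation of $\tilde q$ (the paper cycles on $\tilde q$ itself in the rational case and invokes density otherwise, while you would cycle on a nearby rational point of the CCE polytope -- either works), the $O(L/T)$ convergence of the empirical distribution to the cycle frequencies, and the observation that the argument is indifferent to heterogeneous sequences $C^i$. But without the detection-plus-fallback mechanism (and the injectivity normalization that makes detection possible under bandit feedback), the constructed policies simply do not belong to $\mathcal{P}_{DB}^N(C)$, so the proposal as written does not prove the proposition.
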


\noindent \textbf{Main ideas of the proof.}{ While the full proof of the proposition appears in Appendix \ref{Dynamic Hannan Set Section}, we next outline its main ideas. Let $C$ and $\tilde{C}$ be sub-linear sequences. Fix a profile $\tilde{\pi}$ of DB($\tilde{C}$)-consistent policies, and let $\tilde{q} \in \Delta(A)$ be a limit point of $\bar{\delta}_T^{\tilde{\pi}}$, the empirical joint distribution of play under $\tilde{\pi}$. Assume, for simplicity, that $\tilde{q}$'s support contains only two outcomes, $a_{(1)}$ and $a_{(2)}$, with rational probabilities, i.e., $\tilde{q}(a_{(1)}) = m_1/m$ and $\tilde{q}(a_{(2)}) = m_2/m$, for integers $m_1, m_2$ and $m = m_1 + m_2$.

Fix $T \geq m$. For each player we construct a trigger-type policy $\pi^i$, which prescribes a cooperative schedule unless a deviation is detected, in which case it switches to an underlying DB($C$)-consistent policy. For each player $i$, the cooperative schedule is designed to cycle through $(a_{(1)}^i, a_{(2)}^i)$ based on $\tilde{q}$'s probabilities, in repeating cycles of $m_1$ periods of $a_{(1)}^i$ followed by $m_2$ periods of $a_{(2)}^i$. The detection test identifies deviations from this schedule, while the underlying DB($C$)-consistent policy ensures that if a deviation is detected, we can still guarantee sub-linear regret relative to a dynamic benchmark subject to the sequence~$C$. On the other hand, if no deviation is detected, we show that the cooperative schedule is followed by all players, and establish that the regret incurred by each player relative to a dynamic benchmark subject to the sequence~$C$ must be sub-linear too. Combining the analysis for these two cases establishes that $\pi^i$ is DB($C$)-consistent. Finally, when jointly deployed, the distribution of play generated by $\pi = (\pi^i)_i$ converges to $\tilde{q}$. Notably, the $\epsilon$ in the statement of the theorem arises when $\tilde{q}$ has irrational probability masses; In such cases, $\tilde{q}$ is approximated by a distribution with rational coordinates. \hfill $\blacksquare$}

\bigskip
\noindent The possibility to approximate the distribution of play generated by any profile of dynamic benchmark consistent policies via no-regret policies (that is, with $C \equiv 0$) allows one to identify the static equilibrium concept associated with the notion of dynamic benchmark consistency when the stage game is fixed. When all players deploy no regret policies, the empirical joint distribution of play converges to the set of coarse correlated (or Hannan) equilibria of the underlying stage game \citep{HMSC3}. This set, often referred to as the Hannan set, is defined as follows:

\begin{equation}\label{Hannan Set Definition}
    \mathcal{H}(\Gamma) := \Bigg \{ q \in \Delta(A) \; : \; \forall i \in \llbracket N \rrbracket, \; \forall x^{i} \in A^i, \underset{a \sim q}{\mathbb{E}} \left[ u^i(x^i,a^{-i}) - u^i(a) \right] \leq 0 \Bigg \}.
\end{equation}

\noindent
The natural interpretation of $\mathcal{H}(\Gamma)$ requires the existence of a mediator who samples profiles of actions~$a$ from a joint distribution $q $, and then, privately, recommends to each player~$i$ to play~$a^i$. A distribution $q$ is a coarse correlated equilibrium if any ex-ante deviation by a player is, on average, not profitable when the other players stick to their recommendations.\footnote{
When the right-hand-side of the inequality in \eqref{Hannan Set Definition} is replaced by~$\epsilon$, we refer to the corresponding equilibria as $\epsilon$-approximated coarse correlated equilibria.} Building on Proposition~\ref{Direction 1}, we can now state the main result of \S \ref{Dynamic Hannan Set Section}: any equilibrium in $\mathcal{H}(\Gamma)$ can emerge as a distribution of play induced by some profile of dynamic benchmark consistent policies. 

\begin{theorem}[Joint distributions of play under dynamic benchmark consistency]\label{Hannan Set Survives}
Let $C$ be any non-decreasing sub-linear sequence. The Hannan set $\mathcal{H}(\Gamma)$ is the smallest closed set of distributions that can be reached by any profile of DB($C$) consistent policies; i.e., if a closed set of distributions $\mathcal{D} \subseteq \Delta(A)$ is such that, for all~$\pi  \in  \mathcal{P}^{N}_{DB}(C)$,  $ \mathrm{d}_p\left( \bar\delta^{\pi}_T,\mathcal{D}\right) \rightarrow 0 $, as $T \rightarrow \infty$, then it must be the case that $\mathcal{D} \supseteq \mathcal{H}(\Gamma)$. 

\end{theorem}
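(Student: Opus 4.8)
The goal is to show that the Hannan set $\mathcal{H}(\Gamma)$ is the smallest closed set of distributions that every profile of DB($C$)-consistent policies must approach. There are two things to establish. First, that $\mathcal{H}(\Gamma)$ itself is such a set: for every $\pi \in \mathcal{P}^N_{DB}(C)$, $\mathrm{d}_p(\bar\delta^\pi_T,\mathcal{H}(\Gamma)) \to 0$. Second, minimality: if $\mathcal{D}$ is closed and every DB($C$)-consistent profile approaches $\mathcal{D}$, then $\mathcal{D} \supseteq \mathcal{H}(\Gamma)$.

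For the first part, I would use the fact that every DB($C$)-consistent policy is in particular no-regret against static actions (take $x_t$ constant in the definition, which is always an admissible comparator sequence since $C_T \geq 0$). Hence for each player $i$ and each fixed action $x^i$, $\mathbb{E}^\pi[\sum_t u^i_t(x^i,a^{-i}_t) - u^i_t(a_t)] \leq g(T) = o(T)$; dividing by $T$ and using $V_T=0$ (so $u^i_t \equiv u^i$), this says the expected empirical distribution $\bar\delta^\pi_T$ satisfies the defining inequality of $\mathcal{H}(\Gamma)$ up to an $o(1)$ slack, i.e.\ $\bar\delta^\pi_T$ is an $o(1)$-approximate coarse correlated equilibrium. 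Since the Hannan set is defined by finitely many linear inequalities, approximate membership implies distance $o(1)$ to the exact set $\mathcal{H}(\Gamma)$, which gives the claim.

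For the minimality part — the substantive direction — I would argue by contraposition: suppose $\tilde q \in \mathcal{H}(\Gamma)$; I must exhibit a DB($C$)-consistent profile whose distribution of play converges to (a point arbitrarily close to) $\tilde q$, so that any closed $\mathcal{D}$ attracting all such profiles must contain $\tilde q$. This is exactly where Proposition~\ref{Direction 1} does the heavy lifting: it suffices to produce \emph{some} DB($\tilde C$)-consistent profile (for \emph{any} sub-linear $\tilde C$, in particular $\tilde C \equiv 0$) whose play converges to $\tilde q$, and then transfer it to a DB($C$)-consistent profile $\epsilon$-close in $p$-norm. But for $\tilde C \equiv 0$, DB($\tilde C$)-consistency is just ordinary Hannan consistency, and it is classical (\citealt{HMSC3}) that for every $\tilde q$ in the Hannan set there is a profile of no-regret policies whose empirical joint distribution of play converges to $\tilde q$ — e.g.\ a suitable correlated schedule cycling through the support of $\tilde q$ with a no-regret fallback, or one can invoke the existence of no-regret learning dynamics converging to any prescribed point of the CCE polytope. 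Feeding this profile into Proposition~\ref{Direction 1} yields, for every $\epsilon > 0$, a DB($C$)-consistent profile $\pi$ with $\|\,\lim_T \bar\delta^\pi_T - \tilde q\,\|_p \leq \epsilon$. If $\mathcal{D}$ is closed and $\mathrm{d}_p(\bar\delta^\pi_T,\mathcal{D}) \to 0$ for all such $\pi$, then $\mathrm{d}_p(\tilde q, \mathcal{D}) \leq \epsilon$ for every $\epsilon$, hence $\tilde q \in \mathcal{D}$ by closedness; so $\mathcal{D} \supseteq \mathcal{H}(\Gamma)$.

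The main obstacle is the construction underlying the ``$\tilde C \equiv 0$'' input to Proposition~\ref{Direction 1}: namely, verifying that for an arbitrary (rational, then by density arbitrary) target $\tilde q \in \mathcal{H}(\Gamma)$ there genuinely exists a profile of Hannan-consistent policies whose joint play converges to $\tilde q$ — the non-trivial point being that the coordinated cyclic schedule realizing $\tilde q$ must be made no-regret for \emph{each} player simultaneously, which is where the trigger/punishment structure (exactly as sketched for Proposition~\ref{Direction 1}, with $C\equiv 0$) enters. In fact this is a degenerate case of the Proposition~\ref{Direction 1} construction itself, so the cleanest write-up is to apply Proposition~\ref{Direction 1} with $\tilde C = 0$ and take as the input $\tilde\pi$ a profile realizing $\tilde q$, citing \citealt{HMSC3} for its existence; the rational-approximation step and the $\epsilon$-chasing are then routine. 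One should also double-check the rationality caveat stated before the theorem (payoffs in $\mathbb{Q}$) is used only where Proposition~\ref{Direction 1}'s cyclic construction needs it, and that the "smallest closed set" phrasing is delivered by combining: (i) $\mathcal{H}(\Gamma)$ is attracting (first part), and (ii) it is contained in every attracting closed set (minimality part).
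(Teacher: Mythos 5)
Your proposal is correct and follows essentially the same route as the paper: realize each (rational, hence by density every) point of $\mathcal{H}(\Gamma)$ as the limit of play of the trigger-type DB($C$)-consistent policies underlying Proposition~\ref{Direction 1}, then use closedness of $\mathcal{D}$ together with the rational-vertex density of $\mathcal{H}(\Gamma)$ to conclude $\mathcal{D} \supseteq \mathcal{H}(\Gamma)$, the attracting direction being the standard no-regret argument. One nit: \citealt{HMSC3} gives convergence of no-regret play \emph{to} the Hannan set, not attainability of every point of it, so it cannot be cited for the existence of the input profile $\tilde{\pi}$; but, as you yourself observe, the degenerate ($C \equiv 0$) trigger construction supplies such a profile directly, and indeed the paper sidesteps this by applying the construction of Proposition~\ref{Direction 1} directly to any rational $q \in \mathcal{H}(\Gamma)$ (it only uses the coarse correlated equilibrium inequality for $q$), without routing through an input Hannan-consistent profile.
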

\noindent 
The proof of this result appears in Appendix \S \ref{Proofs of Dynamic Hannan Set Section}. Theorem~\ref{Hannan Set Survives} establishes that the class of equilibria associated with DB-consistency remains the set of coarse correlated equilibria, which can also emerge under traditional no-regret policies.~We next turn to analyze time-varying games (i.e.,$V_T \geq 1$), and apply our notion of tracking error to sequences of coarse correlated equilibria.

\subsection{Time-varying games}\label{sect: time varying games}

When $V_T = 0$, changes in the payoff functions arise only because of the evolving action choices of the other players. On the other hand, when $V_T \geq 1$, also exogenous changes in the stage game can appear, such as shifts in customers' price sensitivity (e.g., Example~\ref{pricing game example}). Having identified coarse correlated equilibria as the reference equilibrium class under dynamic benchmark consistency, we turn to study how well the players' empirical distribution of play generated by dynamic benchmark consistent policies can track the \emph{sequence} of evolving sets of coarse correlated equilibria. Therefore, throughout this section, when using the term ``tracking error'', we refer to the tracking error with respect to the coarse correlated equilibria sets. To ease exposition of results in this subsection (and their corollaries in \S \ref{sect Welfare in time-varying games}), we assume that stage games are selected from an arbitrarily large but finite set of possible games; We defer discussion of weaker conditions to Appendix~\ref{proofs of sect: sect time varying games}.

\subsubsection{Single best-reply games}\label{subsect: single best-reply games}

Turning to analyze the case of $V_T \geq 1$, we begin by focusing on the important setting in which, for any game in $\mathcal{G}$, each player $i$ has a single best-reply to all the actions of the other players~$a^{-i}$.\footnote{Formally, we say that a payoff function $u^i_t$ has the single best-reply property if there exists an action $a^i \in A^i$ such that $a^i \in \argmax \{ u^i_t(x,a^{-i}) : x \in A^i\}$, for all $a^{-i} \in A^{-i}$. A stage game $\Gamma_t$ is termed single best-reply, if the payoff function~$u^i_t$ has the single best-reply property for all $i \in \llbracket N \rrbracket$.} We refer to these games as single best-reply games. Such a condition is met, for example, when the game has a Nash equilibrium in (weakly) dominant actions, a scenario that often emerges in real-world applications of algorithmic decision-making, including common price competitions (e.g., Example \ref{pricing game example}), and games induced by truthful mechanisms (e.g., second-price auctions). 
Notably, since each stage game has a singe best-reply, a player's preferred action is not affected by the actions selected by the competitors. Instead, only the exogenous changes in the stage game itself can alter the identity of a player's best-reply. In this sense, along the spectrum of possible non-stationarities a player can face, time-varying single best-reply games lie on the opposite side compared to general-sum repeated games.

We denote by $\Gamma^{\textnormal{sbr}}$ a game in which each player has a single best-reply to the actions of the other players, and we similarly write $\mathcal{G}^{\textnormal{sbr}}_T$ to refer to a sequences of stage games with the single best-reply property. Importantly, in these games, as long as $C_T\geq V_T$, the dynamic benchmark can always choose the sequence of best-replying actions, and therefore it becomes the \emph{optimal} dynamic benchmark that can be selected with the benefit of hindsight. We next provide necessary and sufficient conditions to obtain diminishing tracking error in single best-reply games.\footnote{With some abuse of notation, we write $err(\mathcal{G}_T,\pi,\mathcal{H})$ to denote the tracking error of a profile of policies $\pi$, relative to the sequence of equilibrium sets $\mathcal{E}_{(k)} = \mathcal{H}(\Gamma_{(k)})$ for each stage game~$\Gamma_{(k)}$ along the sequence of games $\mathcal{G}_T$.}

\begin{theorem}[Sufficient conditions for a diminishing tracking error]\label{suff result}
Let $\mathcal{G}^{\textnormal{sbr}}$ be a sequence of single best-reply games. If each player $i$ employs a dynamic benchmark consistent policy $\pi^i$ with $C^i_T \geq V_T$ for all $T$, and $C^i_T = o(T)$, then the tracking error is sub-linear, that is, $err_{p}(\mathcal{G}^{\textnormal{sbr}}_T, \pi, \mathcal{H}) = o(T)$.
\end{theorem}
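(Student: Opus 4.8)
The plan is to bound the tracking error batch-by-batch, exploiting that in a single best-reply game the per-batch regret of a dynamic benchmark consistent policy, measured against the constant sequence equal to player $i$'s best-reply in that batch, controls how far the play in that batch is from the (unique, since single best-reply) coarse correlated equilibrium. First I would set up notation: for the $k$-th batch $\mathcal{T}_{(k)}$ let $\Gamma_{(k)}$ be the (constant) stage game on that batch, and for each player $i$ let $b^i_{(k)} \in A^i$ be the best-reply action, which is well-defined and independent of $a^{-i}$ by the single best-reply property. Since $C^i_T \geq V_T$, the sequence of actions that plays $b^i_{(k)}$ throughout batch $k$ (for every $k$) has at most $V_T \leq C^i_T$ changes and hence lies in $\mathcal{S}^i(C^i_T)$; it is in fact the optimal hindsight sequence. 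Applying Definition~\ref{V-Consistency Definition} with this comparator and the induced strategy of the opponents, and noting that regret over the whole horizon is the sum of regret over the batches (regret is additive in $t$) while each per-batch regret against $b^i_{(k)}$ is nonnegative, I get $\sum_k \mathbb{E}^{\pi}[\mathrm{Reg}^i_{(k)}] \leq g^i(T) = o(T)$, where $\mathrm{Reg}^i_{(k)}$ is the regret accrued on batch $k$ against the comparator that plays $b^i_{(k)}$ there.

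Next I would relate per-batch regret to the distance $\mathrm{d}_p(\bar{\delta}^{\pi}_{(k)}, \mathcal{H}(\Gamma_{(k)}))$. The key observation is that for a single best-reply game the Hannan set is the singleton(s) supported on best-reply profiles: $q \in \mathcal{H}(\Gamma_{(k)})$ forces, for each $i$, $\mathbb{E}_{a \sim q}[u^i(b^i_{(k)}, a^{-i}) - u^i(a)] \leq 0$, and since $b^i_{(k)}$ weakly dominates, this expectation is also $\geq 0$, so it equals zero, which (again by single best-reply, assuming the best-reply is \emph{strictly} optimal on the relevant support — this is where I may need the finite-set-of-games hypothesis to extract a uniform strict gap $\eta > 0$ between the best-reply payoff and the next action) pins down the marginal of $q$ on $A^i$ to $\delta_{b^i_{(k)}}$. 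Hence $\mathcal{H}(\Gamma_{(k)}) = \{\delta_{b_{(k)}}\}$ with $b_{(k)} = (b^i_{(k)})_i$, and $\mathrm{d}_p(\bar{\delta}^{\pi}_{(k)}, \mathcal{H}(\Gamma_{(k)})) = \|\bar{\delta}^{\pi}_{(k)} - \delta_{b_{(k)}}\|_p$, which is controlled by the expected fraction of periods in batch $k$ in which some player does not play their best-reply. By the uniform strict gap $\eta$, whenever player $i$ plays $x \neq b^i_{(k)}$ at time $t \in \mathcal{T}_{(k)}$ it contributes at least $\eta$ to $\mathrm{Reg}^i_{(k)}$; so the expected number of such periods is at most $\mathbb{E}^{\pi}[\mathrm{Reg}^i_{(k)}]/\eta$, and summing over $i$, $|\mathcal{T}_{(k)}| \, \mathrm{d}_p(\bar{\delta}^{\pi}_{(k)}, \mathcal{H}(\Gamma_{(k)})) \leq c \sum_i \mathbb{E}^{\pi}[\mathrm{Reg}^i_{(k)}]/\eta$ for a dimension-dependent constant $c$ (absorbing the norm-equivalence and the fact that a union over players of "bad" periods has size at most the sum).

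Finally I would sum over batches: $err_p(\mathcal{G}^{\textnormal{sbr}}_T, \pi, \mathcal{H}) = \sum_k |\mathcal{T}_{(k)}| \, \mathrm{d}_p(\bar{\delta}^{\pi}_{(k)}, \mathcal{H}(\Gamma_{(k)})) \leq (c/\eta) \sum_i \sum_k \mathbb{E}^{\pi}[\mathrm{Reg}^i_{(k)}] \leq (c/\eta) \sum_i g^i(T) = o(T)$, giving the claim. The main obstacle I anticipate is the passage in the middle paragraph: the raw single best-reply property only gives a \emph{weakly} dominant action, so $u^i(b^i_{(k)}, a^{-i}) - u^i(x, a^{-i})$ could be zero for some $x \neq b^i_{(k)}$ and some $a^{-i}$, in which case the Hannan set need not be a singleton and regret need not penalize playing $x$. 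Resolving this is exactly what the "finite set of possible games" assumption buys — one can either (i) argue that the Hannan set equals the set of distributions whose support lies on the weakly best-replying actions, and that play concentrating on this (closed) set still yields vanishing tracking error, handling ties by noting $\delta_{b_{(k)}}$ is \emph{in} $\mathcal{H}$ so distances to it still bound distances to the set, or (ii) restrict attention, as the theorem implicitly does via single best-reply, to the case where the best-reply is the unique maximizer, so that $\eta := \min$ over the finitely many games and action profiles of the best-reply gap is strictly positive. I would take route (ii) if the paper's definition of single best-reply is read as "strictly single best-reply," and otherwise carry the extra $\epsilon$-bookkeeping of route (i); either way the finiteness of the game set is what makes the gap uniform and the $o(T)$ bound go through.
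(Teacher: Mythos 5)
Your first and last steps coincide with the paper's proof: decompose the horizon into the batches $\mathcal{T}_{(k)}$, note that since $C^i_T\geq V_T$ the dynamic benchmark can play the batchwise best reply so each per-batch regret is nonnegative and their sum is bounded by the DB-consistency bound $g^i(T)=o(T)$, then sum the per-batch distance bounds. The genuine gap is in your middle step, the passage from per-batch regret to $\mathrm{d}_p\bigl(\bar{\delta}^{\pi}_{(k)},\mathcal{H}(\Gamma_{(k)})\bigr)$. The paper's definition of single best-reply is \emph{weak} (the footnote only requires $b^i_{(k)}\in\argmax_x u^i(x,a^{-i})$ for all $a^{-i}$; the text explicitly notes that the Hannan set collapses to a single Nash equilibrium only for \emph{strictly} dominant-action games, and the motivating examples include second-price auctions, where truthfulness is only weakly dominant). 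Under ties, playing some $x\neq b^i_{(k)}$ can contribute exactly zero regret, so your uniform gap $\eta$ does not exist, the Hannan set is not a singleton, and your route (ii) proves a strictly weaker theorem. Your route (i) does not repair this as sketched: the inequality $\mathrm{d}_p(\bar{\delta}^{\pi}_{(k)},\mathcal{H}(\Gamma_{(k)}))\leq\|\bar{\delta}^{\pi}_{(k)}-\delta_{b_{(k)}}\|_p$ goes the right way, but under ties you cannot bound $\|\bar{\delta}^{\pi}_{(k)}-\delta_{b_{(k)}}\|_p$ by the regret (play may sit persistently on tying profiles far from $\delta_{b_{(k)}}$), so the chain breaks exactly where you need it.

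What is needed is a statement that small average regret implies small distance \emph{to the Hannan set}, which is not automatic: the paper's Example~\ref{small regret but big distance} shows an $\epsilon$-approximate Hannan equilibrium can be at maximal distance from $\mathcal{H}$, and its proof instead invokes Lemma~\ref{lemma: regret-distance}, a polyhedral perturbation (Hausdorff-distance) bound giving $\mathrm{d}_p(\bar{\delta}^{\pi}_{(k)},\mathcal{H}(\Gamma_{(k)}))\leq \textnormal{const}(\Gamma_{(k)})\,R_k/|\mathcal{T}_{(k)}|$, with the finite-game-set assumption used to make $\textnormal{const}(\mathcal{G})$ finite (not to create a payoff gap). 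If you want to stay elementary and avoid that lemma, the correct tie-robust argument in the single best-reply case is different from yours: let $\eta(\Gamma_{(k)})>0$ be the minimum \emph{strictly positive} deviation gain $u^i(b^i_{(k)},a^{-i})-u^i(a)$ over the finitely many $(i,a)$; the set $B$ of outcomes where some player has a strictly positive gain has $\bar{\delta}^{\pi}_{(k)}$-mass at most $N R_k/(\eta(\Gamma_{(k)})|\mathcal{T}_{(k)}|)$, and the conditional distribution of $\bar{\delta}^{\pi}_{(k)}$ on $B^{c}$ \emph{is} a Hannan equilibrium (every player attains her best-reply payoff there), so the distance to $\mathcal{H}(\Gamma_{(k)})$ is linear in $R_k/|\mathcal{T}_{(k)}|$; finiteness of the game set then makes the constant uniform and your final summation goes through. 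As written, however, the proposal either assumes strictness the theorem does not grant or leaves the decisive regret-to-distance step unproved.
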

 
The proof of Theorem \ref{suff result} appears in Appendix~\S\ref{proof of single-best reply games}. This result implies that, when the variation in the single best-reply games is bounded by a slowly changing sequences of action changes, deploying dynamic benchmark consistent policies is sufficient to achieve sub-linear tracking error. Therefore, the joint distribution of play that can emerge approximates the sequence of Hannan sets (if a game is a strictly dominant action game, then the Hannan set reduces to a unique Nash equilibrium; that is, $\mathcal{H}(\Gamma) = \{\delta_{a^{*}}\}$, where $a^{*}$ is the Nash equilibrium). 

Conversely, we next establish that achieving dynamic benchmark consistency with respect to benchmark sequences that have a number of changes at least as large as the variation in the stage games is also \emph{necessary} to obtain a diminishing tracking error. 

\begin{theorem}[Necessary conditions for a diminishing tracking error]\label{imp result}
Let $\mathcal{U}^i$ be single best-replying payoff functions for player $i$ such that $S_T := \sum_{t=1}^{T-1} \mathbf{1}(u^i_{t+1} \neq u^i_t )$ is sub-linear in $T$, that is, $S_T = o(T)$. If player~$i$'s policy $\pi^i$ is not $\textnormal{DB}$-consistent relative to the sequence $(S_T)$, then there exists a sequence of single best-reply games $\mathcal{G}^{\textnormal{sbr}}$ with variation $V(\mathcal{G}^{\textnormal{sbr}}_T) = S_T$ for all $T$, and $\textnormal{DB}$-consistent policies~$\pi^j$ (for all players $j \neq i$) relative to the optimal dynamic benchmark in hindsight (i.e., $C_T = T-1$) that generate linear tracking error, that is, $err_p(\mathcal{G}^{\textnormal{sbr}}, \pi,\mathcal{H}) = \Theta(T)$. 
\end{theorem}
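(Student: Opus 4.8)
The plan is to convert the failure of dynamic benchmark consistency into a quantitative lower bound — a subsequence of horizons along which $\pi^i$ incurs regret linear in the horizon against \emph{some} correlated strategy of the other players — and then to engineer a sequence of single best-reply games in which this strategy can be reproduced by genuinely DB-consistent policies of the other players, while the Hannan set of each stage game pins player $i$'s marginal to the best reply. First I would record two preliminaries. Since $\mathcal{U}^i$ has the single best-reply property there is a universal best reply $a^{*i}_t$ at each $t$ and $(a^{*i}_t)_t$ has at most $S_T$ changes, hence is $\mathcal{S}^i(S_T)$-feasible and pathwise maximizes $\sum_t u^i_t(\cdot,a^{-i}_t)$, so $\max_{(x_t)\in\mathcal{S}^i(S_T)}\mathrm{Reg}^i(\mathcal{U}^i_T,(x_t),(a_t))=\sum_{t=1}^{T}\big(\max_x u^i_t(x,a^{-i}_t)-u^i_t(a^i_t,a^{-i}_t)\big)=:\mathrm{Reg}^i_\star\ge 0$, and since the pathwise maximizer is a fixed sequence, $\max_{(x_t)\in\mathcal{S}^i(S_T)}\mathbb{E}^{\pi^i,\sigma^{-i}}[\mathrm{Reg}^i]=\mathbb{E}^{\pi^i,\sigma^{-i}}[\mathrm{Reg}^i_\star]$; moreover each summand of $\mathrm{Reg}^i_\star$ is $0$ when $a^i_t$ best-replies to $a^{-i}_t$ and at most $2M$ otherwise, so $\mathrm{Reg}^i_\star\le 2M\,N^i(a)$ with $N^i(a):=\#\{t\le T:a^i_t\text{ is not a best reply to }a^{-i}_t\}$. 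Next, writing $R(T):=\sup_{\sigma^{-i}}\mathbb{E}^{\pi^i,\sigma^{-i}}[\mathrm{Reg}^i_\star]\le 2MT$, a standard envelope argument ($\tilde g(T):=\sup_{T'\le T}R(T')$ is nondecreasing and $o(T)$ whenever $R$ is) shows that $\pi^i$ failing to be DB-consistent relative to $(S_T)$ forces $\limsup_T R(T)/T>0$; replacing a randomizing adversary by a deterministic best response to the history then yields a constant $c>0$, horizons $T_n\uparrow\infty$, and deterministic correlated strategies $\sigma^{-i}_n$ with $\mathbb{E}^{\pi^i,\sigma^{-i}_n}[\mathrm{Reg}^i_\star]\ge c\,T_n$.

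I would then build the instance. Player $i$ keeps the payoff sequence $\mathcal{U}^i$, and each player $j\ne i$ is assigned the time-invariant payoff $u^j_t(a):=f(a^i)$ for a fixed injective $f:A^i\to[-M,M]$. Then every stage game is single best-reply (player $j$'s payoff is constant in $a^j$, so every action is a best reply), only player $i$'s payoffs vary, so $V(\mathcal{G}^{\textnormal{sbr}}_T)=S_T$ for all $T$, and player $j$'s regret against \emph{any} benchmark sequence is identically $0$, so whatever policy we assign to player $j$ is DB-consistent even relative to $C_T=T-1$. The crucial point is that, $f$ being injective, player $j$'s bandit feedback $u^j_s(a_s)=f(a^i_s)$ reveals $a^i_s$; unrolling the deterministic strategy $\sigma^{-i}_n$ gives functions $a^{-i}_t=\phi_t(a^i_1,\dots,a^i_{t-1})$, and I let $\pi^j_{T_n}$ reconstruct player $i$'s history from its past feedback and play the $j$-th coordinate of $\phi_t$ (any DB-consistent behavior off the subsequence $(T_n)$). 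Since player $i$'s own feedback $u^i_t(a^i_t,a^{-i}_t)$ then becomes a function of player $i$'s past actions, the law of play at horizon $T_n$ under the resulting profile $\pi$ is exactly $\mathbb{P}^{\pi^i,\sigma^{-i}_n}$.

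The last step computes the Hannan sets and bounds the tracking error. Players $j\ne i$ impose no constraint in \eqref{Hannan Set Definition}, so $\mathcal{H}(\Gamma_t)=\Delta(B_t)$ with $B_t:=\{a:u^i_t(a^i,a^{-i})=\max_x u^i_t(x,a^{-i})\}$ the set of profiles where player $i$ best-replies (testing the universal best reply $a^{*i}_t$ gives $\mathcal{H}(\Gamma_t)\subseteq\Delta(B_t)$, the best-reply inequality gives the reverse inclusion). Every distribution in $\Delta(B_t)$ vanishes on $B_t^c$, so a pigeonhole estimate on $p$-norms gives $\mathrm{d}_p(q,\Delta(B_t))\ge\kappa_p\,q(B_t^c)$ for a constant $\kappa_p=\kappa_p(|A|)>0$. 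Summing over the batches, on each of which $\Gamma_t$ is constant, and noting that the batch-aggregated $q(B_t^c)$ equals the expected number of off-best-reply periods in that batch, $err_p(\mathcal{G}^{\textnormal{sbr}}_{T_n},\pi,\mathcal{H})\ge\kappa_p\,\mathbb{E}^{\pi}[N^i(a)]=\kappa_p\,\mathbb{E}^{\pi^i,\sigma^{-i}_n}[N^i(a)]\ge\tfrac{\kappa_p}{2M}\,\mathbb{E}^{\pi^i,\sigma^{-i}_n}[\mathrm{Reg}^i_\star]\ge\tfrac{\kappa_p c}{2M}\,T_n$; combined with the trivial bound $err_p\le 2T$ (distances within the simplex are at most $2$), this yields $err_p(\mathcal{G}^{\textnormal{sbr}},\pi,\mathcal{H})=\Theta(T)$.

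The step I expect to be the main obstacle is reconciling the two information structures: the adversary delivered by the negation of Definition~\ref{V-Consistency Definition} may react to player $i$'s realized moves, whereas a legitimate policy of player $j$ observes only its own action–payoff history and never the others' moves. The device that resolves this — giving the other players a payoff that is injective in $a^i$ and independent of their own action, so they ``spy'' on player $i$ for free while remaining trivially DB-consistent and contributing no Hannan constraints — is the heart of the argument; a secondary subtlety is that when the single best reply of $\mathcal{U}^i$ is only weak the Hannan set is the face $\Delta(B_t)$ rather than a point mass, which is why the lower bound must be phrased through the count $N^i(a)$ of periods in which player $i$ fails to best-reply rather than through deviations from a single fixed action.
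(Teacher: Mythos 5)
Your proposal is correct and follows essentially the same route as the paper's proof: it uses the non-negativity of the dynamic-benchmark regret for single best-reply payoff sequences, the reduction to a deterministic adversary, a ``spy'' construction in which the competitors' payoffs are constant in their own actions (hence trivially DB-consistent even against the optimal hindsight benchmark) and injective enough that bandit feedback reveals player $i$'s actions, and finally a conversion of linear regret into linear tracking error along a subsequence of horizons. The only cosmetic differences are that you make the competitors' payoffs injective in $a^i$ alone and recover the rest of the history by unrolling the deterministic adversary (the paper imposes injectivity in all of $a^{-j}$), and that you bound the tracking error from below via the explicit face structure of the Hannan set and the count of off-best-reply periods, where the paper uses a H\"older bound on the deviation vectors.
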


\noindent\textbf{Main ideas of the proof.} While the complete proof of the theorem appears in Appendix~\ref{proof of single-best reply games}, we next outline its main ideas. 
Let $T \in \mathbb{N}$, and suppose that $\pi^i$ is not $\textnormal{DB}$-consistent with respect to the sequence ($S_T$). As a preliminary step, we show that a linear lower bound on the regret incurred by player $i$
can be established by restricting attention, without loss of generality, to deterministic competitors' strategies.
Fix a profile of deterministic strategies~$\sigma^{-i} = (\sigma^j)_{j \neq i}$ maximizing $i$'s regret.

We construct a sequence of single-best reply games $\mathcal{G}_{T}^{\textnormal{sbr}}$ with the following properties: (i) The payoff functions of player $i$ coincide with $\mathcal{U}_T^i$; (ii) For each competitor $j \neq i$, the payoff function is constant, that is, $u^j_t = u^j$; (iii) For each competitor $j \neq i$, the payoff function is fully determined by the actions of the other players, that is, for all $(x,y) \in A^j \times A^j$, one has $u^j(x,a^{-j}) = u^j(y,a^{-j})$ for all~$a^{-j} \in A^{-j}$; and (iv) For each competitor $j \neq i$ and action $x \in A^j$, the payoff $u^j$ is injective in the other players' actions, that is, for all $a^{-j} \neq \hat{a}^{-j}$ one has $u^j(x,a^{-j}) \neq u^j(x,\hat{a}^{-j})$.

Properties (i) and (ii) ensure that $V(\mathcal{G}^{\textnormal{sbr}}_T) = S_T$ for all $T$, and, together with (iii), that the stage games are single best-reply. Further, properties (ii) and (iii) imply that, for each player $j \neq i$, any policy satisfies dynamic benchmark consistency, even relative to the \emph{optimal} benchmark that can be selected in hindsight. Importantly, property (iv) allows each agent to identify the actions selected by the other players. In particular, each player $j \neq i$ can mimic the actions selected by $\sigma^{j}$ via a policy~$\pi^j$, which does not have access to the previous history of outcomes. To conclude the proof, we observe that because the stage games are all single-best reply, the actions selected in hindsight by player $i$'s benchmark coincide with the sequence of her best-replying actions. Given this, we show that the tracking error generated by $\pi$ turns out to be linear in player $i$'s regret and in $T$. \hfill $\blacksquare$

\bigskip
\noindent Theorem \ref{imp result} has two important implications. First, it identifies the necessary conditions that policies must meet to guarantee diminishing tracking error in single best-reply games. Second, it demonstrates that if player $i$ does not employ a DB-consistent policy relative to $(V(\mathcal{G}_T))$, we cannot expect diminishing tracking error, even if we allow the other players $-i$ to deploy~DB-consistent policies against sequences $C^j$ with $C^j_T > V_T$, for all $T$ and $j \neq i$.

Together, these findings imply a separation between traditional no-regret and dynamic benchmark consistent policies that is critical from a game theoretic standpoint: performance guarantees \emph{only} relative to static actions can lead to large tracking error, including simple and practical settings. While the construction behind Theorem \ref{imp result} relies on a multi-agent system where only player~$i$ is facing a non-trivial decision problem, we demonstrate in Appendix~\ref{Simulations for Example 2} that the same conclusions can arise also in practical settings, including the one described in Example~\ref{pricing game example}. 

\subsubsection{General-sum games}\label{subsect: general-sum games}

Moving beyond single best-reply games, we continue to analyze the interaction of dynamic benchmark consistent policies in general-sum games where non-stationary can arise from changes in competitors' actions as well as variation in the stage game itself. The main result of this section, Theorem \ref{thm: restarting = sublinear track error}, establishes that under mild conditions, the empirical distribution of play that emerge under the interaction of dynamic benchmark consistent policies exhibits diminishing tracking error with respect to the evolving sequence of Hannan sets.

\begin{theorem}[Tracking the sequence of coarse correlated equilibria]\label{thm: restarting = sublinear track error}
Let $\mathcal{G}$ be a sequence of general-sum games with variation $V(\mathcal{G}_T) = o(T)$. Let $\pi \in \mathcal{P}$ be a profile of policies such that, for each player~$i$
\begin{equation}\label{clipped dbc}
      \sum_{k=1}^{V(\mathcal{G}_T) + 1} \max_{x \in A^i} \mathbb{E}^{\pi} \left[ \sum_{t \in \mathcal{T}_{(k)}}  \left(u_{(k)}^{i}(x,a^{-i}_t) - u^{i}_{(k)}(a_t^{-i}) \right)\right]_{+} = o(T),
\end{equation}
\noindent
Then, the tracking error generated by $\pi$ is sub-linear, that is, $err_p(\mathcal{G}_T,\pi,\mathcal{H}) = o(T)$. 
\end{theorem}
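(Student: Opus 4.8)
Recall from Definition~\ref{tracking error def} that $err_p(\mathcal{G}_T,\pi,\mathcal{H})=\sum_{k=1}^{V(\mathcal{G}_T)+1}|\mathcal{T}_{(k)}|\,\mathrm{d}_p(\bar\delta_{(k)}^{\pi},\mathcal{H}(\Gamma_{(k)}))$ is a batch-length-weighted sum of distances from the per-batch empirical distributions of play to the per-batch Hannan sets, and that within each batch $\mathcal{T}_{(k)}$ the stage game is the fixed game $\Gamma_{(k)}$. The plan is (i) to read the left-hand side of \eqref{clipped dbc} as, after summing over players, the length-weighted sum of the approximate coarse-correlated-equilibrium (CCE) slacks of the distributions $\bar\delta_{(k)}^{\pi}$; (ii) to prove a \emph{uniform} polyhedral error bound saying that, for every game in the finite pool, a distribution's distance to its Hannan set is at most a fixed constant times its approximate-CCE slack; and (iii) to multiply the two bounds.

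For step (i), fix a batch $\mathcal{T}_{(k)}$. Since $\bar\delta_{(k)}^{\pi}=\tfrac{1}{|\mathcal{T}_{(k)}|}\sum_{t\in\mathcal{T}_{(k)}}\mathbb{E}^{\pi}[\delta_{a_t}]\in\Delta(A)$ and $q\mapsto\mathbb{E}_{a\sim q}[f(a)]$ is linear in $q$ for fixed $f$, one has, for every player $i$ and action $x\in A^i$, $\mathbb{E}_{a\sim\bar\delta_{(k)}^{\pi}}[u^i_{(k)}(x,a^{-i})-u^i_{(k)}(a)]=\tfrac{1}{|\mathcal{T}_{(k)}|}\mathbb{E}^{\pi}\big[\sum_{t\in\mathcal{T}_{(k)}}(u^i_{(k)}(x,a^{-i}_t)-u^i_{(k)}(a_t))\big]$, where $u^i_{(k)}(a_t)$ denotes the realized payoff (written $u^i_{(k)}(a^{-i}_t)$ in \eqref{clipped dbc}). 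Hence, for each $i$, the $k$-th summand of \eqref{clipped dbc} equals $|\mathcal{T}_{(k)}|\,\varepsilon^i_{(k)}$, where $\varepsilon^i_{(k)}:=\big(\max_{x\in A^i}\mathbb{E}_{a\sim\bar\delta_{(k)}^{\pi}}[u^i_{(k)}(x,a^{-i})-u^i_{(k)}(a)]\big)_+$ is exactly the amount by which $\bar\delta_{(k)}^{\pi}$ violates player $i$'s defining inequalities of $\mathcal{H}(\Gamma_{(k)})$ in \eqref{Hannan Set Definition}. Summing \eqref{clipped dbc} over the $N$ players and setting $\varepsilon_{(k)}:=\max_i\varepsilon^i_{(k)}$, this shows both that $\bar\delta_{(k)}^{\pi}$ is an $\varepsilon_{(k)}$-approximate coarse correlated equilibrium of $\Gamma_{(k)}$ and that $\sum_k|\mathcal{T}_{(k)}|\,\varepsilon_{(k)}\le\sum_k|\mathcal{T}_{(k)}|\sum_i\varepsilon^i_{(k)}=o(T)$.

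For step (ii), fix any stage game $\Gamma$ and note that $\mathcal{H}(\Gamma)$ is a \emph{nonempty} polytope — nonempty because every finite game has a Nash equilibrium, which is in particular a coarse correlated equilibrium — obtained by intersecting the simplex $\Delta(A)$ with the finitely many halfspaces $\{q:\mathbb{E}_{a\sim q}[u^i(x,a^{-i})-u^i(a)]\le 0\}$, $i\in\llbracket N\rrbracket$, $x\in A^i$. Hoffman's error bound for systems of linear inequalities then supplies a constant $\theta(\Gamma)<\infty$ with $\mathrm{d}_p(q,\mathcal{H}(\Gamma))\le\theta(\Gamma)\cdot\big(\max_{i,x}\mathbb{E}_{a\sim q}[u^i(x,a^{-i})-u^i(a)]\big)_+$ for every $q\in\Delta(A)$: on $\Delta(A)$ the residuals of the simplex constraints vanish, so the only active residual is the approximate-CCE slack, and passing between the $p$-norm defining $\mathrm{d}_p$, the $\infty$-norm that measures residuals, and the norm underlying Hoffman's constant only costs dimension-dependent factors, with $|A|$, $N$ and $\max_i K^i$ all bounded. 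Since the stage games are drawn from a fixed finite set, $L:=\max_\Gamma\theta(\Gamma)<\infty$ works for all of them at once, giving $\mathrm{d}_p(\bar\delta_{(k)}^{\pi},\mathcal{H}(\Gamma_{(k)}))\le L\,\varepsilon_{(k)}$ in every batch. (Alternatively, a per-game compactness argument yields a modulus $\rho_\Gamma$ with $\rho_\Gamma(\varepsilon)\to 0$ as $\varepsilon\to 0$, but the \emph{linear} bound from Hoffman is what makes the weighted sum in step (iii) go through.)

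Combining, $err_p(\mathcal{G}_T,\pi,\mathcal{H})=\sum_{k=1}^{V(\mathcal{G}_T)+1}|\mathcal{T}_{(k)}|\,\mathrm{d}_p(\bar\delta_{(k)}^{\pi},\mathcal{H}(\Gamma_{(k)}))\le L\sum_{k=1}^{V(\mathcal{G}_T)+1}|\mathcal{T}_{(k)}|\,\varepsilon_{(k)}=L\cdot o(T)=o(T)$, as claimed. Given \eqref{clipped dbc}, the argument is short, and I expect the only real point of difficulty to be step (ii) — keeping the constant that turns an approximate-CCE slack into a genuine distance to $\mathcal{H}(\Gamma)$ under control, and \emph{uniform} as $\Gamma$ ranges over the pool — which is precisely why the statement restricts to finitely many possible games and the one place where a polyhedral error bound is invoked. (Note that the standing assumption $V(\mathcal{G}_T)=o(T)$ is not itself used in this implication; it is the regime in which \eqref{clipped dbc} can be attained by standard policies such as suitably restarted no-regret algorithms or Exp3S, which is taken up next.)
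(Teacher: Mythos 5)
Your proposal is correct and follows essentially the same route as the paper: the paper's Lemma \ref{lemma: regret-distance} likewise reads the per-batch clipped regret as an approximate-CCE slack and converts it into a distance to $\mathcal{H}(\Gamma_{(k)})$ via a polyhedral (Hoffman-type) error bound — there stated as a bound on the Hausdorff distance between $\mathcal{H}$ and $\mathcal{H}_{\epsilon}$ using sharp Lipschitz constants from \cite{li1993sharp} — with uniformity of the constant over the finite pool of games (Remark \ref{remark for constant}), and then sums the length-weighted per-batch bounds exactly as you do. The only material the paper's proof contains beyond your argument is the verification that restarted Hannan-consistent policies and Exp3S satisfy Condition \eqref{clipped dbc}, which concerns the discussion following the theorem rather than the stated implication itself.
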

\noindent
Notably, Condition \ref{clipped dbc} is slightly stronger than dynamic benchmark consistency, as, if for some batch $\mathcal{T}_{(k)}$, the regret incurred by $\pi^i$ is negative, its contribution to the overall regret is truncated to zero. In Appendix \ref{dbc and tracking error}, we show that dynamic benchmark consistency alone is not sufficient to guarantee a diminishing tracking error for general-sum games. However, we argue in the proof of Theorem \ref{thm: restarting = sublinear track error} that Condition \ref{clipped dbc} is actually satisfied by natural DB-consistent policies, including:
\begin{enumerate}[label = (\roman*)]
    \item Restarting any Hannan consistent policy every $\Delta^i_T$ periods with $\Delta^i_T = o(T/(V(\mathcal{G}_T) + 1))$, and $\Delta^i_T \to \infty$, as $T \to \infty$ (see Proposition \ref{prop: restarting = sublinear external-internal dynamic regret} in Appendix \ref{structure of DB policies}). 
    \item The Exp3S policy with $\gamma^i_T = o(1)$, $\alpha^i_T/\gamma^i_T = o(1)$ and $\ln(K^i/\alpha^i_T)/\gamma^i_T = o(T/(V(\mathcal{G}_T)+1))$ (see \citealt{auer2002nonstochastic} and Appendix \ref{structure of DB policies}). 
\end{enumerate} 

\noindent \textbf{Main ideas of the proof}. While the full proof of Theorem~\ref{thm: restarting = sublinear track error} appears in Appendix~\ref{proofs time-varying general sum-games}, we next outline its main ideas. The proof builds on a key analytical result (Lemma \ref{lemma: regret-distance} in Appendix \ref{proofs of sect: sect time varying games}), connecting the regret incurred by agents to the distance between the distribution of play and the set of coarse correlated equilibria in repeated games. Specifically, fix a general-sum game~$\Gamma = (u^i)_{i}$. We show that there exists a constant, $\textnormal{const}(\Gamma)$, such that for any~$\epsilon \geq 0$, $ T \in \mathbb{N}$, and a profile of policies $\pi \in \mathcal{P}$, if, for all players $i$, 
\[
\max_{(x^t) \in \mathcal{S}_i(0)} \mathbb{E}^{\pi} \left[ \mathrm{Reg}_i\left( (\underbrace{(u^i,\ldots,u^i)}_{T \text{ times}}), (x^t), (a^t) \right) \right] \leq \epsilon T,
\]
then it follows that $\textnormal{d}_p(\bar{\delta}^{\pi}_{T},\mathcal{H}) \leq \min \{ \textnormal{const}(\Gamma) \epsilon , 2^{1/p}\}$, where a characterization of $\textnormal{const}(\Gamma)$ appears in the proof of Lemma \ref{lemma: regret-distance} and is obtained by bounding, for the stage game $\Gamma$, the Hausdorff distance between its Hannan set and its~$\epsilon$-approximated Hannan set.\footnote{We note that, by definition, if the largest average regret relative to a static benchmark is at most $\epsilon$ for all players, then the distribution of play is an~$\epsilon$-Hannan equilibrium. However, this \emph{does not} imply that the distance of the joint distribution of play from the Hannan set is at most~$\epsilon$. For instance, in Example~\ref{small regret but big distance} of Appendix \ref{proofs of sect: sect time varying games} we show a simple game where for some distributions in the $\epsilon$-approximated Hannan set, the closest distribution in the Hannan set has distance $2^{1/p}$ in the $p$-norm (i.e., the largest distance that can be obtained), even for small values of $\epsilon$. Establishing that the distance of $\bar{\delta}^{\pi}_T$ from $\mathcal{H}$ can be bounded from above by a linear function of $\epsilon$ is a key part of our analysis.}

To conclude the proof we leverage Lemma~\ref{lemma: regret-distance} to connect policies' performance and distance of the empirical distribution of play from the Hannan set in batches where the stage game is constant. In particular, we show that 
\[ err_p(\mathcal{G}_T,\pi,\mathcal{H}) \leq \textnormal{const}(\mathcal{G}_T) \sum_{k=1}^{V(\mathcal{G}_T) + 1} \max_{x \in A^i} \mathbb{E}^{\pi} \left[ \sum_{t \in \mathcal{T}_{(k)}}  \left(u_{(k)}^{i}(x,a^{-i}_t) - u^{i}_{(k)}(a_t^{-i}) \right)\right]_{+},\]
\noindent
where $\textnormal{const}(\mathcal{G}_T) = \max_{t \in \llbracket T \rrbracket} \textnormal{const}(\Gamma_t)$.
\hfill $\blacksquare$

\subsection{Revisiting Example \ref{pricing game example}}

We conclude this section by revisiting the setting of Example \ref{pricing game example} to illustrate the connection between the empirical distribution of play and the sequence of stage game equilibria. We now assume that both sellers are deploying the Exp3S policy \citep{auer2002nonstochastic}, tuned in such a way to satisfy DB($1$)-consistency (see Appendix \ref{structure of DB policies} for more details). 

Figure \ref{fig:comparison_Exp3SMain} shows the $\ell_2$-distance of the emerging distribution of play from the set of coarse correlated (in the setting of Example \ref{pricing game example}, also Nash) equilibria in the first and second half of the horizon together with the resulting (per-period) tracking error, denoted by $err_2(\mathcal{G}_T^{Ex1},(Exp3S,Exp3S),\mathcal{H})/T$.

\begin{figure}[H]
    \centering
\includegraphics[width=0.6\linewidth]{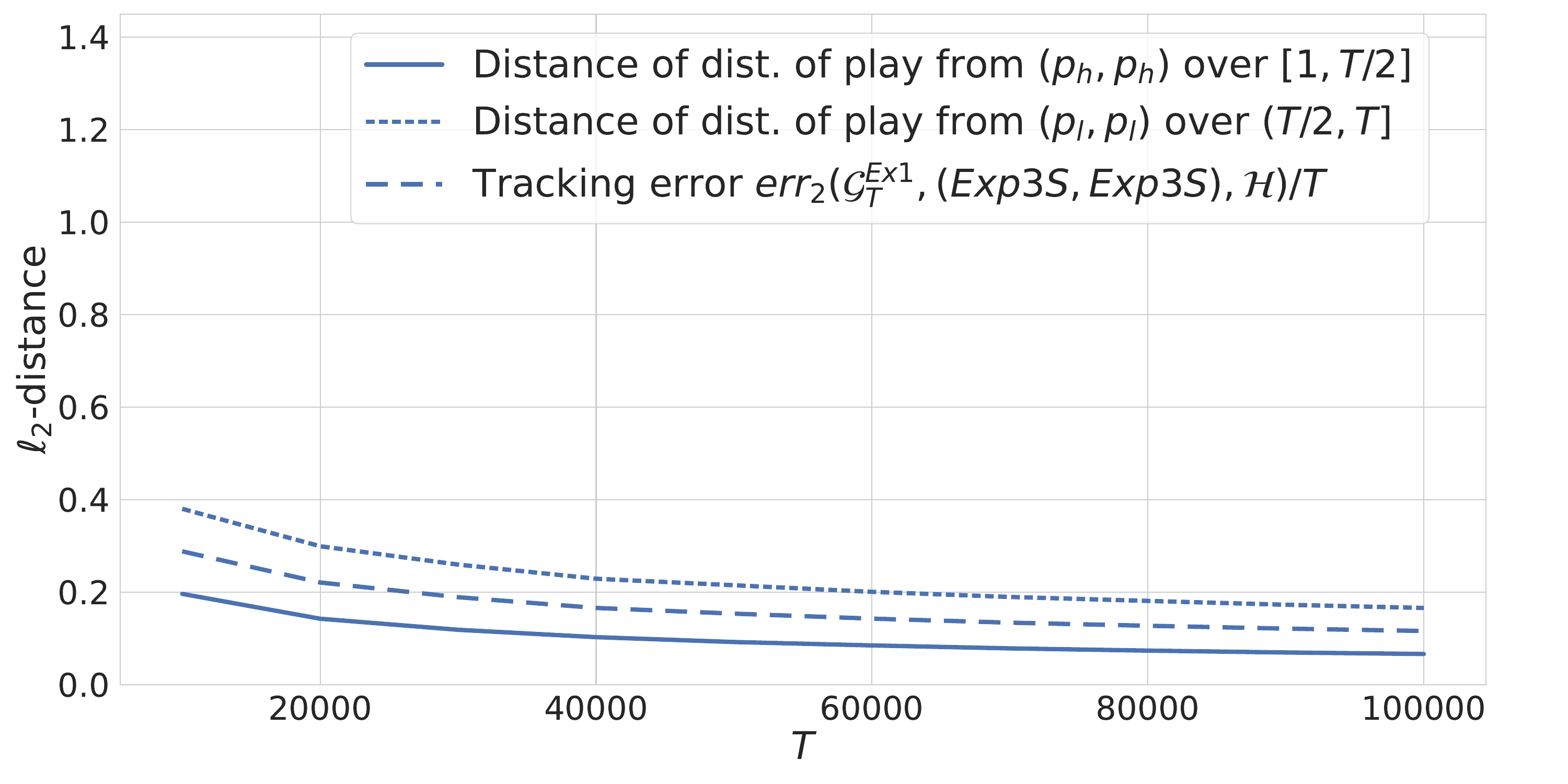}\vspace{-0.2cm}
    \caption{\footnotesize Distance between the empirical joint distribution of play and the coarse correlated equilibrium of each stage game over the first and second halves of the season in Example \ref{pricing game example} together with resulting tracking error. Sellers employ the Exp3S \citep{auer2002nonstochastic} algorithm tuned with exploration parameter $\gamma_T = \sqrt{4 \ln(T)/ T}$ and sharing factor $\alpha_T = 2/T$. The figure shows the~$\ell_2$-distance of the distribution of play in the first and second half from $(p_h,p_h)$ and $(p_l, p_l)$ respectively. The distribution of play is estimated by averaging the results over $750$ simulations. }
    \label{fig:comparison_Exp3SMain}
\end{figure}\vspace{-0.2cm}

Recall that the setting of Example \ref{pricing game example} the stage games are single-best reply with one change in the payoff functions taking place at period $T/2$. In contrast with Figure~\ref{fig:comparison_Exp3Intro}, Figure~\ref{fig:comparison_Exp3SMain} implies that the distribution of play that emerge from the interaction of the Exp3S policies converges asymptotically in each half of the selling horizon to the corresponding coarse correlated (Nash) equilibrium. Thus, in line with Theorem~\ref{thm: restarting = sublinear track error}, the tracking error is sub-linear in $T$, and this error diminishes to zero in a long-run average sense.  

We note that, in general, a diminishing tracking error does not guarantee convergence of the empirical joint distribution of play to the coarse correlated equilibrium over segments of the horizons where the stage game does not change. However, Figure~\ref{fig:comparison_Exp3SMain} illustrates that in the setting of Example \ref{pricing game example} there is an equivalence between diminishing tracking error and convergence in each such segment. Indeed, when all segments of time periods over which the stage game does not change have length that is linear in $T$, that is,~$|\mathcal{T}_{(k)}| = \Theta(T)$ for all $k$, the tracking error is sub-linear if and only if for each batch $\mathcal{T}_{(k)}$, the distribution of play over that batch~$\bar{\delta}^{\pi}_{(k)}$ converges to $\mathcal{H}(\Gamma_{(k)})$.

\noindent
We proceed in the following section to study the welfare implications of our tracking error guarantees, by comparing the social efficiency of the outcomes that can arise under traditional no-regret policies with those that can arise under dynamic benchmark~consistency.
\section{Social efficiency of outcomes}\label{PoA Section}\vspace{-0.1cm}

In this section we demonstrate the applicability of the equilibrium analysis detailed in \S \ref{sect: dbc and tracking error} to study the welfare efficiency of the outcomes resulting from the interaction of dynamic benchmark consistent policies. We establish that, when the stage game is fixed, the worst-case social welfare that can emerge is independent of the number of action changes in the benchmark. In contrast, when the stage game varies over time, we show that it is possible to bound the fraction of the optimal social welfare that can be guaranteed via a non-decreasing function of the number of action changes~$C_T$ allowed in the benchmark. This establishes that dynamic benchmark consistent policies can provide better welfare guarantees relative to traditional no-regret policies in non-stationary systems. Throughout the section, we assume without loss of generality that~the payoff functions are non-negative for each player $i$.

\subsection{Welfare in repeated games}\label{Welfare in repeated games}

We begin by first addressing the case where $V_T=0$, in which the stage game does not change. We focus on the worst-case social welfare that can be generated by analyzing the price of anarchy. This metric is commonly used in the literature for this purpose and it is defined as follows. 
\begin{definition}[Price of anarchy]\label{Def:2}
Let $\Gamma$ be a stage game and $\mathcal{D} \subseteq \Delta(A)$ some distributions of interest. Consider a welfare function $W: A \to \mathbb{R}_{+}$. The price of anarchy (PoA) with respect to the class~$\mathcal{D}$ is the ratio between the optimal centralized social welfare and the worst-case social welfare that can be generated within the set $\mathcal{D}$. Formally,

\[ 
\mathrm{PoA}\left(\Gamma, W, \mathcal{D} \right) := \frac{\underset{\tilde{q} \in \Delta(A)}{\max} \underset{a \sim \tilde{q}}{\mathbb{E}} \left[ W(a) \right]}{\underset{q \in \mathcal{D}}{\inf} \;  \underset{a \sim q}{\mathbb{E}} \left[ W(a) \right]},
\]
where we interpret $w/0 = \infty$ for all non-zero $w$, and $0/0 = 1$. Typical welfare functions of interest are the additive welfare $W(a) = \sum_{i \in \llbracket N \rrbracket} u^i(a)$, and the minimum welfare $W(a) = \min_{i \in \llbracket N \rrbracket} u^i(a).$
\end{definition}
\noindent
While the traditional choice for the class $\mathcal{D}$ is the set of product distributions, whose profile of marginals constitutes a Nash equilibrium, the price of anarchy has been analyzed for other distribution classes, including coarse correlated and correlated equilibria. By applying Theorem~\ref{Hannan Set Survives}, the next corollary establishes that when the stage game remains fixed, as the horizon grows large, the price of anarchy obtained under dynamic benchmark consistent policies \emph{coincides} with the PoA of coarse correlated equilibria, and so it does not scale with the number of action changes allowed in the benchmark at hand. The proof of this result appears in Appendix \S \ref{Proofs of welfare in repeated games}.

\begin{corollary}[PoA equivalence]\label{Cor:1} Let $C$ be a non-decreasing sub-linear sequence, and $W: A \to \mathbb{R}_{+}$ be a welfare function. Then, the following equivalence holds
    \begin{equation}\label{PoTA}
    \underset{\pi = (\pi^i)_{i = 1}^{N} \in \mathcal{P}_{DB}^{N}(C)}{\sup} 
    \left\{\frac{\underset{\tilde{q} \in \Delta(A)}{\max} \; \underset{a \sim \tilde{q}}{\mathbb{E}} \left[ W(a) \right] \;}{\underset{T \rightarrow +\infty}{\liminf} \; \frac{1}{T} \sum_{t=1}^{T} \mathbb{E}^{\pi} \left[ W(a_t) \right] }\right\} = \mathrm{PoA}(\Gamma, W,\mathcal{H}(\Gamma)), 
    \end{equation}
and so, the worst-case social welfare generated by the interaction of DB($C$) consistent policies does not scale with the number of action changes in the benchmark at hand. 
\end{corollary}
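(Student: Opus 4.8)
\textbf{Proof proposal for Corollary~\ref{Cor:1}.}

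The plan is to deduce the PoA equivalence directly from Theorem~\ref{Hannan Set Survives} by a two-sided argument, showing that the supremum of the welfare ratio over DB($C$) consistent profiles equals the worst-case ratio over $\mathcal{H}(\Gamma)$. First I would observe that the numerator in \eqref{PoTA} is the optimal centralized welfare $\max_{\tilde{q} \in \Delta(A)} \mathbb{E}_{a \sim \tilde{q}}[W(a)]$, which does not depend on $\pi$ or on $C$; thus the whole problem reduces to analyzing the denominator, namely showing
\[
\inf_{\pi \in \mathcal{P}_{DB}^{N}(C)} \; \liminf_{T \to \infty} \frac{1}{T} \sum_{t=1}^{T} \mathbb{E}^{\pi}[W(a_t)] \;=\; \inf_{q \in \mathcal{H}(\Gamma)} \mathbb{E}_{a \sim q}[W(a)].
\]
Since the stage game is fixed ($V_T = 0$), one has $\frac{1}{T}\sum_{t=1}^T \mathbb{E}^{\pi}[W(a_t)] = \mathbb{E}_{a \sim \bar{\delta}^{\pi}_T}[W(a)]$, so the left-hand side is an infimum of $\liminf_T \mathbb{E}_{a \sim \bar{\delta}^{\pi}_T}[W(a)]$ over DB($C$) consistent profiles. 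Because $A$ is finite, $W$ is bounded and the map $q \mapsto \mathbb{E}_{a \sim q}[W(a)]$ is linear, hence continuous on the compact simplex $\Delta(A)$.

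For the direction ``$\geq$'' (worst-case welfare under DB($C$) policies is at least the worst over $\mathcal{H}(\Gamma)$, equivalently the ratio is at most the PoA), I would use that every DB($C$) consistent policy is in particular Hannan consistent (the $C \equiv 0$ benchmark is included in $\mathcal{S}^i(C_T)$), so by the classical no-regret convergence result $\mathrm{d}_p(\bar{\delta}^{\pi}_T, \mathcal{H}(\Gamma)) \to 0$ for every $\pi \in \mathcal{P}_{DB}^N(C)$. Combining this with continuity of $q \mapsto \mathbb{E}_{a\sim q}[W(a)]$, any subsequential limit point $q^\ast$ of $\bar{\delta}^{\pi}_T$ lies in $\mathcal{H}(\Gamma)$, and so $\liminf_T \mathbb{E}_{a\sim \bar{\delta}^{\pi}_T}[W(a)] \geq \inf_{q \in \mathcal{H}(\Gamma)} \mathbb{E}_{a\sim q}[W(a)]$; taking the infimum over $\pi$ preserves the inequality. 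For the reverse direction ``$\leq$'', I would fix an arbitrary $q^\ast \in \mathcal{H}(\Gamma)$ and an $\epsilon > 0$, and invoke Theorem~\ref{Hannan Set Survives}: since $\mathcal{H}(\Gamma)$ is the smallest closed set reachable by DB($C$) profiles, $q^\ast$ is a limit of distributions of play reachable by such profiles, so there is a profile $\pi \in \mathcal{P}_{DB}^N(C)$ whose distribution of play converges (along a subsequence, or after the construction in the proof of Proposition~\ref{Direction 1}, along the whole sequence) to a point within $\epsilon$ of $q^\ast$. By continuity of the welfare functional, $\liminf_T \mathbb{E}_{a\sim \bar{\delta}^{\pi}_T}[W(a)] \leq \mathbb{E}_{a \sim q^\ast}[W(a)] + L\epsilon$ for a Lipschitz constant $L$ of $W$ w.r.t.\ the $p$-norm; letting $\epsilon \to 0$ and then taking the infimum over $q^\ast \in \mathcal{H}(\Gamma)$ gives the infimum over DB($C$) profiles is at most $\inf_{q \in \mathcal{H}(\Gamma)} \mathbb{E}_{a\sim q}[W(a)]$. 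Together the two directions yield the claimed equality of denominators, hence \eqref{PoTA}, and the final sentence of the corollary follows because the right-hand side $\mathrm{PoA}(\Gamma, W, \mathcal{H}(\Gamma))$ manifestly does not involve $C$.

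The main obstacle I anticipate is the reverse direction's reliance on achieving a \emph{single} limit rather than merely a limit point: Theorem~\ref{Hannan Set Survives} and Proposition~\ref{Direction 1} are phrased in terms of convergent subsequences and closure, whereas the denominator in \eqref{PoTA} is a $\liminf$ over the full horizon sequence. I would handle this by appealing to the explicit construction in the proof of Proposition~\ref{Direction 1} (the trigger-type cooperative-schedule policies), which actually produces profiles whose distribution of play converges along the \emph{entire} sequence to a prescribed rational approximation of $q^\ast$; alternatively, one can note that for the $\liminf$ it suffices to drive $\bar{\delta}^{\pi}_T$ close to $q^\ast$ infinitely often while keeping the welfare from dropping below the target on the remaining periods, which the schedule construction also delivers. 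A secondary, purely bookkeeping point is the rational-payoff and rational-$q^\ast$ caveat inherited from \S\ref{Dynamic Hannan Set Section}: since $\mathcal{H}(\Gamma)$ with rational payoffs is a rational polytope, its extreme points are rational and rational distributions are dense in it, so the $\epsilon$-approximation step goes through without additional assumptions beyond those already imposed in the subsection.
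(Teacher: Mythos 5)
Your proposal is correct and follows essentially the same route as the paper: the lower bound on the denominator comes from the fact that DB($C$)-consistent profiles are Hannan consistent, so $\bar{\delta}^{\pi}_T$ approaches $\mathcal{H}(\Gamma)$ and continuity/compactness gives $\liminf_T \frac{1}{T}\sum_t \mathbb{E}^{\pi}[W(a_t)] \geq \min_{q\in\mathcal{H}}\mathbb{E}_{a\sim q}[W(a)]$, while the upper bound uses the trigger-type policies of Proposition~\ref{Direction 1} to realize (rational) Hannan equilibria as limits of the distribution of play. The only cosmetic difference is in closing the rationality gap: the paper notes that the welfare minimum over the polytope $\mathcal{H}$ is attained at a rational vertex, whereas you use density of rational points in $\mathcal{H}$ together with an $\epsilon\to 0$ continuity argument — both are valid under the rational-payoff assumption of \S\ref{Dynamic Hannan Set Section}.
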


The worst-case equivalence in Corollary~\ref{Cor:1} implies that the \emph{tightness} of PoA-bounds provided for settings that consider static benchmarks remains in the present framework. For instance, an extensively used  technique to establish bounds on the price of anarchy is the so-called smoothness analysis proposed by \cite{robustpoa}, and defined as follows for any welfare function $W$.
\begin{definition}[\citealt{robustpoa}]\label{Smoothness definition}
A game $\Gamma$ is said to be $(\lambda,\mu)$-smooth for some $\lambda \geq 0$ and $\mu > -1$, if, for all outcomes $a,a' \in A$,
\begin{equation}\label{smoothness condition}
 \sum_{i=1}^{N} u^i((a')^i,a^{-i}) \geq \lambda  W(a') - \mu W(a).
\end{equation}
\end{definition}

That is, the $(\lambda,\mu)$-smoothness property \eqref{smoothness condition} constrains the cumulative change in the social welfare that can be generated by a profile of unilateral deviations, and, as such, can be exploited to provide bounds on the PoA. In particular, it is known that \eqref{smoothness condition} together with $W(a) \geq \sum_i u^i(a)$, imply that any Hannan equilibrium achieves a fraction of at least $\lambda/(1+\mu)$ of the social optimum. 

Further, there are classes of games for which the lower bound on the social welfare implied by~(\ref{smoothness condition}) was shown to be tight under the interaction of Hannan consistent policies, including simultaneous second-price auctions, valid utility games (\citealt{vetta2002nash}), and congestion games with affine cost functions.\footnote{In cost-minimization games, the payoffs $u^i$ and welfare function $W$ are replaced respectively by cost-function $\textnormal{Cost}^i$ and welfare-cost function $\textnormal{Cost}$, and the definition of smooth-game is adjusted accordingly (see \citealt{robustpoa}). Further, for simultaneous SPA, the employed smoothness condition is indeed a relaxation of Inequality \eqref{smoothness condition}. In particular, it requires the existence of a social welfare optimal outcome $a'$ such that, for all $a$, Inequality \eqref{smoothness condition} holds.} For these games, our results imply that the lower bound derived by (\ref{smoothness condition}) remains tight also under the interaction of dynamic benchmark consistent policies.

\subsection{Welfare in time-varying games}\label{sect Welfare in time-varying games}

We next continue to study the efficiency of outcomes when $V_T\geq 1$ and the stage game can change. Let $\mathcal{W}_T$ be a sequence of welfare functions. We denote $\mathcal{W} := (\mathcal{W}_T : T \in \mathbb{N})$, and assume that $\sup_{T} \max_{W_t \in \mathcal{W}_{T}} \| W_t \|_{p} \leq \bar{W}$. We extend the static price of anarchy notion for coarse correlated equilibria to non-stationary systems with $V_T\geq 1$ in the following natural way: 

\[ \mathrm{dynPoA}(\mathcal{G}, \mathcal{W},\mathcal{H}) := \frac{\underset{T \to \infty}{\limsup} \; \frac{1}{T} \sum_{t=1}^{T} \underset{\tilde{q}_t \in \Delta(A)}{\max} \; \underset{a \sim \tilde{q}_t}{\mathbb{E}} \left[ W_t(a) \right]}{\underset{T \rightarrow \infty}{\liminf} \; \frac{1}{T} \sum_{t=1}^{T} \min_{q_t \in \mathcal{H}(\Gamma_t)} \underset{a \sim q_t}{\mathbb{E}} \left[ W_t(a) \right]}.\]
\noindent
Applying the diminishing tracking error guarantees established in Theorem \ref{thm: restarting = sublinear track error} yields the following corollary, whose proof appears in Appendix \S \ref{Proofs of welfare in time-varying games}.

\begin{corollary}[Dynamic price of anarchy]\label{dyn poa without smoothness}
Let $\pi$ be a profile of dynamic benchmark consistent policies such that, for each player~$i$, $C^i_T \geq V(\mathcal{G}_T)$ for all $T$, and $C^i_T = o(T)$. Then, it holds

\begin{equation}\label{dyn poa simple bound}
    \frac{\underset{T \to \infty}{\limsup} \; \frac{1}{T} \sum_{t=1}^{T} \underset{\tilde{q}_t \in \Delta(A)}{\max} \; \underset{a \sim \tilde{q}_t}{\mathbb{E}} \left[ W_t(a) \right]}{\underset{T \rightarrow \infty}{\liminf} \; \frac{1}{T} \sum_{t=1}^{T} \mathbb{E}^{\pi} \left[ W_t(a_t) \right] } \leq \mathrm{dynPoA}(\mathcal{G},\mathcal{W},\mathcal{H}). 
    \end{equation}

\end{corollary}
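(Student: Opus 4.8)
The plan is to derive Corollary~\ref{dyn poa without smoothness} as a direct consequence of the sub-linear tracking error guarantee of Theorem~\ref{suff result}. The hypotheses on $\pi$ ($C^i_T \ge V(\mathcal{G}_T)$ and $C^i_T = o(T)$) are exactly those of Theorem~\ref{suff result} in the single best-reply case, but for general-sum games one would instead invoke Theorem~\ref{thm: restarting = sublinear track error}; since the corollary is stated for a general sequence $\mathcal{G}$ and refers to $\mathrm{dynPoA}(\mathcal{G},\mathcal{W},\mathcal{H})$, I would first note that the stated conditions on $\pi$ should be read together with the hypothesis $V(\mathcal{G}_T) = o(T)$ and Condition~\eqref{clipped dbc} (which is met by the natural DB-consistent policies listed after Theorem~\ref{thm: restarting = sublinear track error}), so that in all cases $err_p(\mathcal{G}_T,\pi,\mathcal{H}) = o(T)$.

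The key step is to translate the batched tracking error into a bound on the running-average realized welfare $\frac{1}{T}\sum_{t=1}^T \mathbb{E}^\pi[W_t(a_t)]$. For each batch $\mathcal{T}_{(k)}$ the stage game and hence the welfare function $W_{(k)}$ is constant. Because $\bar\delta^\pi_{(k)}$ is within $\mathrm{d}_p(\bar\delta^\pi_{(k)},\mathcal{H}(\Gamma_{(k)}))$ of the Hannan set, and $W_{(k)}$ is linear in the distribution with $\|W_{(k)}\|_p \le \bar W$ (so it is $\bar W$-Lipschitz in the $p$-norm in the sense needed, via Hölder / Cauchy--Schwarz), we get
\[
\mathbb{E}^\pi\Big[\tfrac{1}{|\mathcal{T}_{(k)}|}\sum_{t\in\mathcal{T}_{(k)}} W_{(k)}(a_t)\Big] \;\ge\; \min_{q\in\mathcal{H}(\Gamma_{(k)})}\mathbb{E}_{a\sim q}[W_{(k)}(a)] \;-\; \bar W\,\mathrm{d}_p(\bar\delta^\pi_{(k)},\mathcal{H}(\Gamma_{(k)})).
\]
Multiplying by $|\mathcal{T}_{(k)}|$, summing over $k \in \llbracket V(\mathcal{G}_T)+1\rrbracket$, and dividing by $T$ yields
\[
\frac{1}{T}\sum_{t=1}^T \mathbb{E}^\pi[W_t(a_t)] \;\ge\; \frac{1}{T}\sum_{t=1}^T \min_{q_t\in\mathcal{H}(\Gamma_t)}\mathbb{E}_{a\sim q_t}[W_t(a)] \;-\; \frac{\bar W}{T}\,err_p(\mathcal{G}_T,\pi,\mathcal{H}),
\]
where I used that within a batch the summand $\min_{q_t\in\mathcal{H}(\Gamma_t)}\mathbb{E}[W_t(a)]$ is constant so the right-most sum reassembles the per-period optimum-in-Hannan term. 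Since $err_p(\mathcal{G}_T,\pi,\mathcal{H}) = o(T)$, the correction term vanishes as $T\to\infty$.

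To finish, I would take $\liminf_{T\to\infty}$ on both sides. The first term on the right contributes $\liminf_{T\to\infty}\frac{1}{T}\sum_{t=1}^T \min_{q_t\in\mathcal{H}(\Gamma_t)}\mathbb{E}_{a\sim q_t}[W_t(a)]$, which is precisely the denominator of $\mathrm{dynPoA}(\mathcal{G},\mathcal{W},\mathcal{H})$, and the $o(1)$ term drops, giving
\[
\liminf_{T\to\infty}\frac{1}{T}\sum_{t=1}^T \mathbb{E}^\pi[W_t(a_t)] \;\ge\; \liminf_{T\to\infty}\frac{1}{T}\sum_{t=1}^T \min_{q_t\in\mathcal{H}(\Gamma_t)}\mathbb{E}_{a\sim q_t}[W_t(a)].
\]
Dividing the $\limsup$ numerator $\limsup_{T\to\infty}\frac{1}{T}\sum_t \max_{\tilde q_t}\mathbb{E}[W_t(a)]$ by each side and using that the ratio of a $\limsup$ to a smaller-or-equal $\liminf$ only increases the fraction gives the claimed inequality~\eqref{dyn poa simple bound}. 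The only mild technical point — the main obstacle, such as it is — is handling the case where the $\mathrm{dynPoA}$ denominator is zero (so the bound is vacuously $\infty$ by the $w/0=\infty$ convention) and making sure the Lipschitz-in-$p$-norm estimate for the linear functional $W_{(k)}$ is stated with the right constant; both are routine given the normalization $\|W_t\|_p \le \bar W$ and the convention already fixed in Definition~\ref{Def:2}.
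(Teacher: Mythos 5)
Your proposal is correct and follows essentially the same route as the paper's own proof: decompose the cumulative welfare over the batches $\mathcal{T}_{(k)}$, add and subtract the welfare at the nearest point $q_{(k)}\in\mathcal{H}(\Gamma_{(k)})$, bound the resulting error term by $\bar W$ times the tracking error via H\"older, conclude $\liminf_T \frac{1}{T}\sum_t \mathbb{E}^{\pi}[W_t(a_t)] \geq \liminf_T \frac{1}{T}\sum_t \min_{q_t\in\mathcal{H}(\Gamma_t)}\mathbb{E}_{a\sim q_t}[W_t(a)]$, and rearrange into \eqref{dyn poa simple bound}. Your preliminary remark that the hypotheses should be read together with $V(\mathcal{G}_T)=o(T)$ and Condition \eqref{clipped dbc} so that Theorem \ref{thm: restarting = sublinear track error} applies is exactly the (implicit) reliance in the paper's proof, which simply invokes that theorem to get $err_p(\mathcal{G}_T,\pi,\mathcal{H})=o(T)$.
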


By Corollary \ref{dyn poa without smoothness}, when the number of action changes in the benchmark bounds from above the variation in the stage games, that is, $C_T \geq V_T\geq 1$, the dynamic price of anarchy that emerges from the interaction of dynamic benchmark consistent policies can be bounded by the dynamic price of anarchy associated with the sequences of coarse correlated equilibria. When $V_T = 0$, we demonstrated above (Corollary \ref{Cor:1}) that the bound is tight. We further conjecture that \eqref{dyn poa simple bound} remains tight when $V_T \geq 1$, but leave the proof of this conjecture for future research.

Moreover, by adapting standard smoothness arguments, we next show that it is possible to bound the left-hand side of \eqref{dyn poa simple bound} even when the number of action changes is below the variation of the stage games, that is, $C_T <V_T$. In particular, denote by
\[ \beta(C_T) := \frac{\max_{(x_t)_{t=1}^{T} \in A^{T}: \forall i, \; (x_t^i)_{t=1}^{T} \in \mathcal{S}^i(C_T)} \sum_{t=1}^{T} W_t(x_t)}{  \sum_{t=1}^{T} \max_{y_t \in A} W_t(y_t)}\]
\noindent
the largest fraction of the optimal social welfare that can be implemented by a central planner through sequences of outcomes with at most $C_T$ action changes for each player. Observe that, for each $T$, the function $\beta(\cdot)$ is non-decreasing in $C_T$, and that, whenever the variation in the sequence of stage games does not exceed the number of changes allowed in the dynamic benchmark (i.e.,~$C_T \geq V_T$) then $\beta(C_T)$ equals one. However, it is not difficult to find a sequence of games~$\mathcal{G}$, for which it holds $\liminf_{T} \beta(C_T) > 0$ even when~$C_T < V_T$ for all $T$. The following result provides a lower bound for the social welfare generated by dynamic benchmark consistent policies that explicitly depends on the number of allowable action changes,~$C_T$.

\begin{proposition}[Welfare lower bound in smooth games]\label{DynPoA} Let $C$ be a sub-linear sequence and $\mathcal{G}_T$ be a sequence of $(\lambda,\mu)$-smooth games.\footnote{ Note that if the games in $\mathcal{G}_T$ have different smoothness parameters, we can set $\mu := \max_{t \in [T]} \mu_t$ and $\lambda := \min_{t \in [T]} \lambda_t$. Then, all games in $\mathcal{G
}_T$ are $(\lambda,\mu)$-smooth.} Let  $\pi$ be a profile of~DB($C$) consistent policies, and denote by $g$ any function such that
 \[ \max_{i \in \llbracket N \rrbracket} \max_{(x_t) \in \mathcal{S}^i(C_T)} \mathbb{E}^{\pi} \left[ \mathrm{Reg}^i(\mathcal{U}^i_T,(x_t),(a_t)) \right] \leq g(T). \]
 Suppose that $W_t(a) \geq \sum_i u^i_t(a)$ for all $a$ and $t$. Then, the welfare generated by $\pi$ is bounded from below as follows:
\begin{equation}\label{dyn PoA}
\sum_{t=1}^{T}  \mathbb{E}^{\pi}\left[ W_t(a_{t})\right] \geq \frac{\lambda \beta(C_T)}{1+\mu} \textnormal{OPT-SW}(\mathcal{G}_T) - \frac{N}{1+\mu} g(T),
\end{equation}
\noindent
where $\textnormal{OPT-SW}(\mathcal{G}_T) := \sum_{t=1}^{T} \max_{y_t \in A} W_t(y_t)$ is the optimal centralized social welfare.
\end{proposition}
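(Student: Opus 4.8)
The plan is to mimic the classical smoothness argument of \cite{robustpoa} for Hannan equilibria, but to replace the static best-response deviation with a dynamic deviation that is feasible for the $C_T$-constrained benchmark, and to absorb the resulting slack into the factor $\beta(C_T)$. First I would fix an optimal sequence of outcomes for the centralized planner, $(y_t)_{t=1}^T$ with $y_t \in \argmax_{y \in A} W_t(y)$, and then pass to the $C_T$-constrained sequence that best approximates it: let $(x_t^*)_{t=1}^T$ be a maximizer in the definition of $\beta(C_T)$, so that each player's component sequence $(x_t^{*,i})_{t=1}^T$ lies in $\mathcal{S}^i(C_T)$ and $\sum_t W_t(x_t^*) = \beta(C_T)\,\textnormal{OPT-SW}(\mathcal{G}_T)$. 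The key point is that for each player $i$ the deviation sequence $(x_t^{*,i})_t$ is an admissible comparator in the DB($C$) regret bound.

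Next I would invoke $(\lambda,\mu)$-smoothness \eqref{smoothness condition} with $a = a_t$ (the realized play at time $t$) and $a' = x_t^*$, giving, for each $t$,
\[
\sum_{i=1}^{N} u^i_t\big(x_t^{*,i}, a_t^{-i}\big) \;\geq\; \lambda\, W_t(x_t^*) \;-\; \mu\, W_t(a_t).
\]
Summing over $t \in \llbracket T \rrbracket$, taking expectations under $\pi$, and using $\sum_t W_t(x_t^*) = \beta(C_T)\,\textnormal{OPT-SW}(\mathcal{G}_T)$ yields
\[
\sum_{t=1}^{T} \sum_{i=1}^{N} \mathbb{E}^{\pi}\!\left[ u^i_t\big(x_t^{*,i}, a_t^{-i}\big) \right] \;\geq\; \lambda\,\beta(C_T)\,\textnormal{OPT-SW}(\mathcal{G}_T) \;-\; \mu \sum_{t=1}^{T} \mathbb{E}^{\pi}\!\left[ W_t(a_t) \right].
\]
On the left-hand side, for each fixed $i$ the quantity $\sum_t \mathbb{E}^{\pi}[u^i_t(x_t^{*,i},a_t^{-i}) - u^i_t(a_t)]$ is exactly the expected regret of $\pi^i$ against the comparator $(x_t^{*,i})_t \in \mathcal{S}^i(C_T)$, hence bounded above by $g(T)$ via the hypothesis. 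Therefore
\[
\sum_{t=1}^{T}\sum_{i=1}^{N}\mathbb{E}^{\pi}\!\left[u^i_t(x_t^{*,i},a_t^{-i})\right]
\;\leq\; \sum_{t=1}^{T}\sum_{i=1}^{N}\mathbb{E}^{\pi}\!\left[u^i_t(a_t)\right] + N\,g(T)
\;\leq\; \sum_{t=1}^{T}\mathbb{E}^{\pi}\!\left[W_t(a_t)\right] + N\,g(T),
\]
where the last inequality uses the assumption $W_t(a) \geq \sum_i u^i_t(a)$ together with $u^i_t \geq 0$. Combining the two displays and rearranging — moving the $\mu \sum_t \mathbb{E}^\pi[W_t(a_t)]$ term to the left and dividing by $1+\mu > 0$ — gives exactly \eqref{dyn PoA}.

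The only real obstacle is the bookkeeping in the first step: one must be careful that the maximizing sequence $(x_t^*)$ in the definition of $\beta(C_T)$ is chosen with the per-player constraint $(x_t^{*,i})_t \in \mathcal{S}^i(C_T)$ built in, so that it is simultaneously a legitimate comparator in \emph{every} player's regret bound; this is guaranteed by the way $\beta(C_T)$ is defined (the max is over $(x_t) \in A^T$ with $(x_t^i)_t \in \mathcal{S}^i(C_T)$ for all $i$). A secondary subtlety is that the smoothness inequality must hold for the worst-case $a,a'$, which it does by Definition~\ref{Smoothness definition}; for the relaxed smoothness notion used for simultaneous second-price auctions one instead fixes $a'$ to be the designated optimal outcome and applies the same argument with $(x_t^*)$ derived from that $a'$, which only strengthens the need for $\beta(C_T)$ to measure approximation quality relative to that particular outcome — a remark worth including but not affecting the core computation. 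Everything else is the standard divide-by-$(1+\mu)$ manipulation, with $g(T)$ playing the role of the additive regret error as in the static case.
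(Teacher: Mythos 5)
Your proposal is correct and follows essentially the same argument as the paper's proof: apply $(\lambda,\mu)$-smoothness with $a'=x_t^*$ and $a=a_t$, use DB($C$)-consistency against the per-player components of the $\beta(C_T)$-maximizing sequence (which are admissible comparators in $\mathcal{S}^i(C_T)$ by construction), invoke $W_t(a)\geq\sum_i u^i_t(a)$, and divide by $1+\mu$. The only cosmetic difference is that the paper fixes an arbitrary constrained sequence and maximizes at the end, whereas you plug in the maximizer from the start.
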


\noindent The proof of Proposition~\ref{DynPoA} appears in Appendix~\ref{Proofs of welfare in time-varying games}. Dividing by $T$ both sides of Inequality \eqref{dyn PoA}, when the function $g$ is sub-linear, the right-hand side can be approximated for $T$ large enough by  
\begin{equation*}
\frac{\lambda \beta(C_T)}{1+\mu} \frac{\textnormal{OPT-SW}(\mathcal{G}_T)}{T}.
\end{equation*}
\noindent
This implies that the share of the (centralized) optimal social welfare that can be guaranteed in fact scales with the number of action changes allowed in the benchmark sequence.

\section{Internal dynamic benchmark consistency and correlated equilibria}\label{no-internal regret section}
In this section, we show how our techniques and analysis directly extend to other equilibrium notions beyond the coarse correlated equilibrium concept discussed so far. While the dynamic benchmarks considered above replace all the actions ($a^i_t$) selected by player $i$ with some sequence of actions ($x_t$) taken from $S^i(C_T)$, the single-agent learning literature also includes studies of other methods to select benchmark sequences. 

A well-studied approach (see \citealt{lehrer2003wide}, \citealt{blum2007internalregret}) is to identify a pair of actions $(x,y) \in A^i \times A^i$ and replace in the sequence $(a^i_t)$ all occurrences of action $x$ with action~$y$. A specific action pair ($x,y$) is then selected to maximize the (expected) difference between the payoff player~$i$ accrues under $(a^i_t)$ and the payoff they would have obtained under a modified sequence, in which action $x$ is replaced by action $y$. The corresponding regret notion is typically referred to as internal regret and it is formalized as follows.

\begin{definition}[No-internal regret, \citealt{blum2007internalregret}]\label{no-internal regret static}
A policy $\pi^i$ satisfies no-internal regret if there exists a non-negative function  $g: \mathbb{N} \to \mathbb{R}_{+}$ with~$g(T) = o(T)$, such that, for all~$T \geq 1$ and all (potentially correlated) strategies of the other players $\sigma^{-i}: \bigcup_{t \in \llbracket T \rrbracket} A^{t-1} \rightarrow \Delta(A^{-i})$, the internal regret of policy $\pi^i$ is bounded by
\[ \max_{(x,y) \in A^i \times A^i} \mathbb{E}^{\pi^{i},\sigma^{-i}} \left[ \sum_{t \in \llbracket T \rrbracket} \mathbf{1}(a_{t}^i = x) \left(u_{t}^{i}(y,a^{-i}_t) - u^{i}_{t}(x,a_t^{-i}) \right)\right] \leq g(T).\]

\end{definition}

\noindent
In the repeated games setting, \cite{HMSC1} show that when all players deploy policies with no-internal regret, the distribution of play that emerges converges to the set of correlated equilibria, which is a subset of coarse correlated equilibria, and it is defined as,
\begin{equation}\label{Correlated Equilibrium Definition}
    \mathcal{CE}(\Gamma) := \Bigg \{ q \in \Delta(A) \; : \; \forall i \in \llbracket N \rrbracket, \; \forall (x,y) \in A^i \times A^i, \; \underset{a \sim q}{\mathbb{E}}\left[ \mathbf{1}(a_i = x) \left(u^i(y,a^{-i}) - u^i(x,a^{-i}) \right) \right] \leq 0 \Bigg \}.
\end{equation}
\noindent
The interpretation of the set of correlated equilibria $\mathcal{CE}(\Gamma)$ is similar to that of $\mathcal{H}(\Gamma)$, with the main distinction that in correlated equilibria players' deviations are considered ex-post, that is, with knowledge of the action sampled by the mediator through the joint distribution $q \in \Delta(A)$.

Importantly, in the internal regret benchmark a single action modification (from $x$ to $y$) is selected and employed throughout the sequence of actions played by the agent. Analogously to our approach in \S \ref{sect: dbc and tracking error}, we next study the equilibria that can emerge when 
a dynamic sequence of action modifications is employed to construct the internal benchmark at hand. To this end, we denote by~$\mathcal{I}_{C_T}$ the collection of all possible partitions of the time horizon with at most~$C_T + 1$ elements; that is, 
\[ \mathcal{I}_{C_T} := \bigcup_{c = 1}^{C_T+1} \left\{ 
 (I_k)_{k=1}^{c} : I_k = \llbracket a_k, b_k \rrbracket \; \text{for some } a_k \leq b_k, \; I_{k} \cap I_{k'} = \emptyset \; \textnormal{for } k \neq k', \textnormal{ and } \sqcup_{k=1}^{c} I_k = \llbracket T \rrbracket \right\}.\]
 
\noindent
We define internal dynamic benchmark consistent policies in a similar fashion to Definition~\ref{V-Consistency Definition}, extending the no-internal regret property to dynamic sequences of action modifications.

\begin{definition}[Internal dynamic consistency]\label{no-internal regret}
A policy $\pi^i$ is internally dynamic benchmark consistent with respect to the sequence~$C$ if there exists a non-negative function  $g: \mathbb{N} \to \mathbb{R}_{+}$ with~$g(T) = o(T)$, such that, for all~$T \geq 1$ and all (potentially correlated) strategies of the other players $\sigma^{-i}: \bigcup_{t \in \llbracket T \rrbracket} A^{t-1} \rightarrow \Delta(A^{-i})$, its internal regret is bounded by
\[ \max_{\boldsymbol{I} \in \mathcal{I}_{C_T}} \sum_{k=1}^{|\boldsymbol{I}|} \max_{(x,y) \in A^i \times A^i} \mathbb{E}^{\pi^{i},\sigma^{-i}} \left[ \sum_{t \in I_k} \mathbf{1}(a_{t}^i = x) \left(u_{t}^{i}(y,a^{-i}_t) - u^{i}_{t}(x,a_t^{-i}) \right)\right] \leq g(T).\]

\end{definition}

\noindent To simplify exposition, we use as short-hand IDB($C$) to refer to an internally dynamic benchmark subject to the sequence~$C$, and we denote the class of policies that are internally dynamic consistent relative to the sequence~$C$ by $\mathcal{P}^i_{IDB}(C)$.

The existence of policies that are IDB(C) consistent when~$C \equiv 0$ is guaranteed by \cite{HMSCreinforcement} and \cite{blum2007internalregret}. In the appendix, we show that restarting at a suitable rate any IDB($0$) consistent policy suffices to guarantee IDB($C$) consistency (see Proposition \ref{restarting proposition for internal dbc} in Appendix \ref{proofs of no-internal regret section}), when $C_T = o(T)$. This establishes that $\mathcal{P}^i_{IDB}(C)$ is non-empty.

In the repeated games setting, that is, when $V_T = 0$, the same ideas and techniques developed in the proofs of Proposition \ref{Direction 1} and Theorem~\ref{Hannan Set Survives} apply to the internal regret framework. This yields the following asymptotic characterization of the set of possible distributions of play that can emerge under IDB(C) consistent policies as the set of correlated equilibria of the stage game.

\begin{theorem}[Joint distributions of play under internal dynamic benchmark consistency]\label{Correlated equilibria Survive}
Let $C$ be any non-decreasing sub-linear sequence. Fix any stage game $\Gamma$ with rational payoff functions. Then, the set of correlated equilibria $\mathcal{CE}(\Gamma)$ is the smallest closed set of distributions that can be reached by any profile of IDB($C$) consistent policies. That is, if a closed set of distributions $\mathcal{D} \subseteq \Delta(A)$ is such that, for all~$\pi  \in  \mathcal{P}^{N}_{IDB}(C)$,  $ \mathrm{d}_p\left( \bar\delta^{\pi}_T,\mathcal{D}\right) \rightarrow 0 $, then $\mathcal{D} \supseteq \mathcal{CE}(\Gamma)$. 
\end{theorem}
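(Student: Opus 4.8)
The plan is to mirror the two-step strategy used for the coarse correlated case (Proposition~\ref{Direction 1} and Theorem~\ref{Hannan Set Survives}), adapting each step to the internal-regret benchmark. First, I would establish the analogue of Proposition~\ref{Direction 1} for IDB consistency: given a profile $\tilde\pi$ of IDB($\tilde C$)-consistent policies whose empirical joint distribution of play has a limit point $\tilde q \in \Delta(A)$, and given $\epsilon > 0$, construct a profile $\pi$ of IDB($C$)-consistent policies whose distribution of play converges to within $\epsilon$ of $\tilde q$. Since $V_T = 0$ here, the target $\tilde q$ need not even arise from IDB-consistent policies — any $q \in \mathcal{CE}(\Gamma)$ will do, since correlated equilibria are exactly the distributions that can be approximated by no-internal-regret dynamics (the internal analogue of $C \equiv 0$), by \cite{HMSC1}. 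So the key object to produce, for a given $q \in \mathcal{CE}(\Gamma)$ with rational coordinates, is a profile of IDB($C$)-consistent policies whose play converges to $q$.

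The construction would reuse the trigger-policy idea from the proof of Proposition~\ref{Direction 1}: each player $i$ follows a \emph{cooperative schedule} that cycles deterministically through the support of $q$ with the right frequencies (a repeating block of length $m$, using $q$'s rational masses $m_a/m$), and switches to an underlying IDB($C$)-consistent policy (which exists by the restarting argument, Proposition~\ref{restarting proposition for internal dbc}) as soon as a deviation is detected via the same action-detection test. The two cases to check are (i) no deviation detected — all players follow the cooperative schedule, and I must verify that each player's \emph{internal} dynamic regret against any partition $\boldsymbol I \in \mathcal{I}_{C_T}$ and any swap $(x,y)$ is sub-linear; and (ii) a deviation is detected — the underlying IDB($C$)-consistent policy directly guarantees sub-linear internal dynamic regret. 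For case (i), the argument is that along the cooperative schedule each action profile $a$ in the support appears with empirical frequency converging to $q(a)$, so for each swap $(x,y)$ the cumulative internal regret equals $T$ times $\mathbb{E}_{a \sim q}[\mathbf 1(a_i = x)(u^i(y,a^{-i}) - u^i(x,a^{-i}))] + o(T)$, which is $\le o(T)$ precisely because $q$ satisfies the correlated-equilibrium inequalities~\eqref{Correlated Equilibrium Definition}; summing over the at-most-$C_T+1$ pieces of a partition only multiplies the $o(T)$ cycle-boundary error by $C_T + 1 = o(T)\cdot O(1)$ — here I'd want $C_T$ to enter mildly, and since the per-block error is $O(m)$ independent of the partition, the bound is $O((C_T+1)m) = o(T)$. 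This is the step I expect to carry the most bookkeeping, because the internal benchmark's partition-then-swap structure means I must control, uniformly over all partitions, how a swap applied within a single interval $I_k$ interacts with the cyclic schedule; but because the schedule is periodic with a fixed finite period $m$, the regret on any interval is within $O(m)$ of a frequency-weighted average, so uniformity is not genuinely problematic.

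Second, with the IDB analogue of Proposition~\ref{Direction 1} in hand, Theorem~\ref{Correlated equilibria Survive} follows exactly as Theorem~\ref{Hannan Set Survives} does. If $\mathcal{D}$ is a closed set with $\mathrm{d}_p(\bar\delta^\pi_T, \mathcal{D}) \to 0$ for every $\pi \in \mathcal{P}^N_{IDB}(C)$, then for any $q \in \mathcal{CE}(\Gamma)$ the approximating profile constructed above has $\bar\delta^\pi_T \to q$, forcing $\mathrm{d}_p(q, \mathcal{D}) = 0$, hence $q \in \mathcal{D}$ by closedness; thus $\mathcal{CE}(\Gamma) \subseteq \mathcal{D}$. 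For the reverse inclusion — that $\mathcal{CE}(\Gamma)$ itself is such a set, i.e. every IDB($C$)-consistent profile has play converging to $\mathcal{CE}(\Gamma)$ — I would invoke that IDB($C$) consistency implies ordinary no-internal regret (take the trivial partition with one piece), and then the classical result of \cite{HMSC1} gives $\mathrm{d}_p(\bar\delta^\pi_T, \mathcal{CE}(\Gamma)) \to 0$. Combining the two directions gives that $\mathcal{CE}(\Gamma)$ is the smallest such closed set. The rationality-of-payoffs hypothesis is used, as in the coarse case, only to make the cooperative schedule exactly periodic when $q$ has rational coordinates; the $\epsilon$ in the approximation absorbs the rational-approximation error for general $q \in \mathcal{CE}(\Gamma)$.
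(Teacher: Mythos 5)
Your proposal is correct and follows essentially the same route as the paper, whose own proof of Theorem~\ref{Correlated equilibria Survive} is only a sketch: replicate the trigger-type construction of Proposition~\ref{Direction 1} for the internal benchmark (with the underlying IDB($C$) policy supplied by Proposition~\ref{restarting proposition for internal dbc}), and combine it with the convergence of no-internal-regret play to $\mathcal{CE}(\Gamma)$ from \cite{HMSC1}. Your case analysis (periodic schedule giving per-interval regret $O(m)$, hence $O(m(C_T+1)) = o(T)$ over any partition, plus the rational-approximation/closure step) is exactly the bookkeeping the paper omits.
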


\noindent Therefore, Theorem \ref{Correlated equilibria Survive} establishes that the set of correlated equilibria is a tight characterization for the distribution of play induced by internal dynamic benchmark consistent policies. 

Building on this characterization, we move to analyze time varying multi-agent systems. We leverage the tracking error notion to evaluate the extent to which the joint distribution of play that can emerge under profiles of IDB($C$) policies can approximate the sequence of correlated equilibria sets. Suppose for simplicity that the games in $\mathcal{G}$ are taken from an arbitrarily large but finite set of possible games. Then, we establish that the empirical joint distribution of play that can emerge under any profile of internally dynamic benchmark consistent policies yields sub-linear tracking error, whenever the number of elements~ in the partition $\mathcal{I}_{C_T}$ is sub-linear and bounds from above the number of changes in the stage game, $V(\mathcal{G}_T)$. 

\begin{theorem}[Tracking the sequence of correlated equilibria]\label{thm: restarting = sublinear track error for correlated equilibria}
Let $\pi$~be a collection of internally dynamic benchmark consistent policies relative to a sequence $C$ with $V(\mathcal{G}_T) \leq C_T$ for all $T$, and $C_T =~o(T)$. Then, the tracking error relative to the sequence of correlated equilibrium sets is sub-linear, i.e., 
\vspace{-0.1cm}
\[ \sum_{k=1}^{V(\mathcal{G}_T) + 1} |\mathcal{T}_{(k)}| \textnormal{d}_p(\bar{\delta}^{\pi}_{(k)}, \mathcal{CE}(\Gamma_{(k)})) = o(T).\]
\end{theorem}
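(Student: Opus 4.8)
The plan is to mirror the architecture of the proof of Theorem~\ref{thm: restarting = sublinear track error}, replacing the coarse correlated equilibrium set $\mathcal{H}(\Gamma)$ by the correlated equilibrium set $\mathcal{CE}(\Gamma)$ and the external regret benchmark by the internal regret benchmark. The first step is to establish an analogue of Lemma~\ref{lemma: regret-distance} for correlated equilibria: for a fixed stage game $\Gamma = (u^i)_i$ there is a constant $\textnormal{const}_{\mathcal{CE}}(\Gamma)$ such that, for any $\epsilon \geq 0$, any $T$, and any profile $\pi$, if for every player $i$ and every pair $(x,y) \in A^i \times A^i$ the average internal regret is at most $\epsilon$, that is
\[
\mathbb{E}^{\pi}\!\left[\sum_{t=1}^{T}\mathbf{1}(a_t^i = x)\bigl(u^i(y,a_t^{-i}) - u^i(x,a_t^{-i})\bigr)\right] \leq \epsilon T ,
\]
then $\textnormal{d}_p(\bar\delta^\pi_T, \mathcal{CE}(\Gamma)) \leq \min\{\textnormal{const}_{\mathcal{CE}}(\Gamma)\,\epsilon,\ 2^{1/p}\}$. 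The proof of this lemma is the same Hausdorff-distance argument used for $\mathcal{H}$: the empirical distribution of play lies in the $\epsilon$-approximate correlated equilibrium polytope, this polytope converges to $\mathcal{CE}(\Gamma)$ as $\epsilon \to 0$, and a compactness/polyhedral argument (the constraint system \eqref{Correlated Equilibrium Definition} is linear in $q$, so perturbing the right-hand sides by $\epsilon$ moves the polytope's Hausdorff distance by a linear-in-$\epsilon$ amount as long as the unperturbed polytope is nonempty, which it is since correlated equilibria always exist). One then sets $\textnormal{const}_{\mathcal{CE}}(\mathcal{G}_T) := \max_{t \in \llbracket T\rrbracket} \textnormal{const}_{\mathcal{CE}}(\Gamma_t)$, which is finite because $\mathcal{G}$ draws from a finite set of games.

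The second step is the batch decomposition. Partition $\llbracket T\rrbracket$ into the batches $\mathcal{T}_{(1)},\ldots,\mathcal{T}_{(V_T+1)}$ on which the stage game is constant. Any partition of $\llbracket T\rrbracket$ into at most $V_T + 1 \leq C_T + 1$ consecutive intervals is an element of $\mathcal{I}_{C_T}$; choosing this particular partition inside the $\max_{\boldsymbol I \in \mathcal{I}_{C_T}}$ of Definition~\ref{no-internal regret} shows that $\pi^i$'s IDB($C$) guarantee implies
\[
\sum_{k=1}^{V(\mathcal{G}_T)+1}\ \max_{(x,y)\in A^i\times A^i}\ \mathbb{E}^{\pi}\!\left[\sum_{t\in\mathcal{T}_{(k)}}\mathbf{1}(a_t^i = x)\bigl(u^i_{(k)}(y,a_t^{-i}) - u^i_{(k)}(x,a_t^{-i})\bigr)\right] \leq g(T) = o(T).
\]
Here one should be slightly careful about the truncation issue that appeared in Theorem~\ref{thm: restarting = sublinear track error}: within a batch $\mathcal{T}_{(k)}$ the inner $\max$ over $(x,y)$ is already nonnegative (taking $x=y$ gives zero), so the sum over batches is automatically a sum of nonnegative terms, and no extra $[\ \cdot\ ]_+$ clipping hypothesis is needed — this is why the theorem statement here imposes only IDB-consistency, not a clipped version. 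Applying the Lemma from step one to each batch $\mathcal{T}_{(k)}$ with its own $\epsilon_k := (1/|\mathcal{T}_{(k)}|)\sum_{i}\max_{(x,y)}\mathbb{E}^\pi[\cdots]$ gives $|\mathcal{T}_{(k)}|\,\textnormal{d}_p(\bar\delta^\pi_{(k)},\mathcal{CE}(\Gamma_{(k)})) \leq \textnormal{const}_{\mathcal{CE}}(\mathcal{G}_T)\cdot |\mathcal{T}_{(k)}|\epsilon_k$, and summing over $k$ yields
\[
\sum_{k=1}^{V(\mathcal{G}_T)+1}|\mathcal{T}_{(k)}|\,\textnormal{d}_p(\bar\delta^\pi_{(k)},\mathcal{CE}(\Gamma_{(k)})) \ \leq\ \textnormal{const}_{\mathcal{CE}}(\mathcal{G}_T)\sum_{i=1}^{N}\sum_{k=1}^{V(\mathcal{G}_T)+1}\max_{(x,y)}\mathbb{E}^\pi[\cdots] \ \leq\ N\,\textnormal{const}_{\mathcal{CE}}(\mathcal{G}_T)\,g(T) = o(T),
\]
which is the claimed sub-linear tracking error.

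I expect the main obstacle to be step one — proving the linear-in-$\epsilon$ stability of the correlated equilibrium polytope under relaxation of its defining inequalities, uniformly over the finitely many games that can appear. The difficulty, as the footnote after Theorem~\ref{thm: restarting = sublinear track error} emphasizes, is that being an $\epsilon$-approximate equilibrium does \emph{not} a priori mean being $O(\epsilon)$-close to an exact equilibrium; one genuinely needs a quantitative Hoffman-type bound for the polyhedron \eqref{Correlated Equilibrium Definition}, exploiting that it is defined by finitely many linear inequalities with coefficients drawn from a finite set (hence a uniform Hoffman constant exists) and that it is nonempty (so the relaxed polytope does not degenerate to something far away). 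A secondary, more routine point is bookkeeping the dependence of $\epsilon_k$ on $|\mathcal{T}_{(k)}|$ and checking that short batches (say $|\mathcal{T}_{(k)}|$ small) contribute negligibly because $|\mathcal{T}_{(k)}|\,\textnormal{d}_p(\cdot,\cdot) \leq |\mathcal{T}_{(k)}|\cdot 2^{1/p}$ and there are at most $V_T + 1 = o(T)$ of them; combined with the $g(T) = o(T)$ bound on the long batches this gives the result. Everything else is a direct transcription of the external-regret argument, since the internal regret benchmark restricted to a single batch coincides with the ordinary (static) internal regret of Definition~\ref{no-internal regret static} on that batch.
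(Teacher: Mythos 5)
Your proposal is correct and follows essentially the same route as the paper: the paper proves this theorem by repeating the argument of Theorem~\ref{thm: restarting = sublinear track error} verbatim, with Lemma~\ref{lemma: regret-distance} replaced by its correlated-equilibrium analogue (Lemma~\ref{lemma: regret-distance CE}, proved by the same polyhedral Hausdorff-distance bound since $\mathcal{CE}(\Gamma)$ is also a polytope), and with the batch partition plugged into the $\max_{\boldsymbol{I}\in\mathcal{I}_{C_T}}$ of Definition~\ref{no-internal regret}. Your observation that no clipping hypothesis is needed because the per-batch internal regret is automatically nonnegative (take $x=y$) is exactly the point the paper makes when explaining why Theorem~\ref{thm: restarting = sublinear track error for correlated equilibria} needs no analogue of Condition~\eqref{clipped dbc}.
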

\noindent
Notably, no restriction on the class of internal dynamic benchmark consistent policies is needed in Theorem \ref{thm: restarting = sublinear track error for correlated equilibria}. This distinction relative to DB-consistent policies appears because, when $V(\mathcal{G}_T) \leq C_T$, over each batch $\mathcal{T}_{(k)}$, the internal regret is always non-negative. This non-negativity property prevents bad instances where over some batches a policy might perform poorly (that is, incurring regret that linear in the length of the batch), because it is possible to offset this loss by outperforming the benchmark in other batches. 

Finally,~we observe that the welfare guarantees provided in \S\ref{PoA Section} can be directly extended to internal dynamic benchmark consistent policies.~Analogous results to Corollary~\ref{Cor:1} and \ref{dyn poa without smoothness} can be derived, with the reference equilibrium notions replaced by correlated equilibria, instead of coarse correlated equilibria.~Proposition \ref{DynPoA} continues to hold in the internal regret framework, as any IDB($C$)-consistent policy is also DB($C$)-consistent. We omit the details for brevity.

\vspace{-0.0mm}
\section{Concluding remarks}\label{Concluding Remarks}\vspace{-0.0mm}

\textbf{Summary and implications.} In this paper, we formulated and studied a general time-varying multi-agent system where players adopt policies with performance guarantees relative to dynamic sequences of actions. We characterized the empirical distributions of play that can emerge from their interaction, and showed this joint distribution exhibits a diminishing tracking error relative to evolving sequences of stage-game equilibria.  Furthermore, we applied this characterization to quantify the resulting social efficiency. 

Specifically, we showed that when there is no variation in the sequence of games, the empirical distributions of play that can emerge under these policies asymptotically \emph{coincides} with the set of coarse correlated equilibria. This finding implies that the worst-case social welfare, as captured by the price of anarchy, does not scale with the number of action changes allowed in the benchmark, as it is always equal to the PoA of the underlying set of coarse correlated equilibria. 

Contrarily, when the stage game changes, a key separation emerges relative to traditional no-regret policies. First, we showed that the distribution of play induced by DB-consistent policies approximates the evolving sequences of Hannan sets (while traditional no-regret algorithms might fail to do so). Second, we provided refined bounds on the social welfare guaranteed by DB-consistent policies. These bounds are non-decreasing in the number of action changes allowed in the benchmark at hand. Further, we showed that our techniques can be extended to cover other equilibrium notions beyond coarse correlation: by including time-dependent action changes in an internal benchmark, we established that it is possible to track sequences of correlated equilibria.

Through the formulation of the equilibrium tracking error and the characterization of a broad class of learning algorithms under which the tracking error is guaranteed to diminish, our study establishes a strong connection between single-agent learning algorithms that are designed to perform well and non-stationary systems, and time-varying equilibria in these systems. 

Our results have implications on the design and operations of online marketplaces such as ad and retail platforms where non-stationarity is prevalent, and on the design of algorithms for competition in these markets. Our findings suggest that dynamic benchmark consistent policies can provide improved individual and welfare guarantees relative to traditional no-regret policies. Therefore, these policies can be a natural candidate for implementation by competing agents as well as by a market designer that provides algorithms for the participating agents.

\paragraph{Future research directions.}{
Our study provides grounds for further research on few fronts. First, we conjecture that Condition \eqref{clipped dbc} used in Theorem \ref{thm: restarting = sublinear track error} is satisfied by all policies that are dynamic benchmark consistent. Second, in Appendix \ref{a.s. guarantess}, we show that dynamic benchmark consistency can be guaranteed (asymptotically) even with probability one, which implies that Proposition \ref{Direction 1} and Theorem \ref{Hannan Set Survives} can be adapted to obtain an almost surely counterpart, and we leave the investigation of whether the tracking guarantees can be provided with probability one to future research. Finally, we believe it is an interesting and important direction to understand how to modify the techniques used in this paper to cover convex action sets.}

\small
\bibliography{bibliography.bib}

\normalsize 

\renewcommand{\thesection}{\Alph{section}}
\newpage
\appendix

\section*{Appendix}

\section{Proofs of \S\ref{sect: dbc and tracking error}}

\subsection{Proofs of \S \ref{Dynamic Hannan Set Section}} \label{Proofs of Dynamic Hannan Set Section}

\begin{proof}[Proof of Proposition \ref{Direction 1}]
The argument is divided into three steps. In \textbf{Preliminaries}, we maintain assumptions over the payoff structure of each player that will be leveraged in the remaining two steps and we argue that these assumptions are without loss of generality. In \textbf{policy design}, we design a profile of policies, one for each player, whose analysis is then provided in \textbf{Analysis}.

\bigskip\noindent \textbf{Part I: Preliminaries.} Fix player $i \in \llbracket N \rrbracket$. We claim that it is without loss of generality to assume that, for each action $a^i \in A^i$, player $i$'s payoff $u^i(a^i,\cdot)$ is injective in her opponents' actions. This makes the detection step devised in Part II~\emph{infallible}. Let~$\beta^i := (\beta^i(a^{-i}))_{a^{-i} \in A^{-i}} \in \mathbb{R}^{|A^{-i}|}$ be a collection of real numbers and $\alpha^i > 0$ a positive real number. Define a new payoff function $\tilde{u}^i(a^i,a^{-i}) := \alpha^i \left[ u^i(a^i,a^{-i})  + \beta^i(a^{-i}) \right]$. Then, for any belief~$\mu^i \in \Delta(A^{-i})$, player $i$'s preferences remain the same; formally, for actions $a^i,\bar{a}^{i} \in A^i$, it holds $\mathbb{E}^{\mu^i}\left[u^i(a^i,a^{-i})\right] \geq \mathbb{E}^{\mu^i}\left[u^i(\bar{a}^{i},a^{-i}) \right]$ if and only if $\mathbb{E}^{\mu^i}\left[\tilde{u}^{i}(a^i,a^{-i})\right] \geq  \mathbb{E}^{\mu^i}\left[\tilde{u}^{i}(\bar{a}^{i},a^{-i})\right]$. We claim that there exist~$\alpha^i > 0$ and $\beta^i \in \mathbb{R}^{|A^{-i}|}$, such that, for all $a_i \in A_i$, the $a_i$-section of~$\tilde{u}^i$, i.e.,~$\tilde{u}^i(a^i,\cdot)$, is injective. Formally, for each action 
$ a^i \in A^i$, and $h,j  \in A^{-i}$,
\[ h \neq j \implies \tilde{u}^i(a^i,h) \neq \tilde{u}^i(a^i,j).\] 

Indeed, collect the payoffs of player~$i$ into a $|A^i| \times |A^{-i}|$ matrix, denoted by $U^{i}$. Recall that, by assumption,~$|u^i| \leq M$, for all players. Set $\beta^{i}_{0} = -2M$ and define, recursively, $\beta^i_s = \beta^{i}_{s-1} + 3M = \beta^{i}_0 + s(3M)$, for all~$s \in \llbracket|A^{-i}|\rrbracket$. For each $(r,s) \in A^i \times A^{-i}$, let $\hat{U}^{i}_{r,s} = U^{i}_{r,s}+\beta^i_s$. Then, by construction, it holds that, for any $r \in \llbracket |A_i| \rrbracket$
\begin{equation*}
\begin{aligned}
&\hat{U}^{i}_{r,1} &&\in  [0,2M] \\
&\hat{U}^{i}_{r,2} &&\in  [3M,5M] \\
&\vdots \\
&\hat{U}^{i}_{r,|A^{-i}|} &&\in  [(3|A^{-i}|-3)M,(3|A^{-i}|-1)M].
\end{aligned}
\end{equation*}
Thus, for $r \in \llbracket |A_i| \rrbracket$ and $s \neq \tilde{s} \in \llbracket |A^{-i}| \rrbracket$, we have $\hat{U}^{i}_{r,s} \neq \hat{U}^{i}_{r,\tilde{s}}$. Finally, set $\alpha^i = (3|A^{-i}|-1)M$, and define $\tilde{U}^{i}_{r,s} = \alpha^i \hat{U}^{i}_{r,s}$. Given the above arguments, we can assume, without loss of generality, that the sections of the payoff functions of all players in the underlying stage game $\Gamma$ are injective. 

\bigskip\noindent \textbf{Part II: Policy Design.} Observe that, for each~$T \in \mathbb{N}$, the larger $C_T$ is, the stricter the requirement imposed by a dynamic
benchmark endowed with $C_T$ changes. Thus, it suffices to establish the result for $\tilde{C}$ equal to the constant sequence at zero. Let $C$ be any sequence of non-decreasing real numbers such that $C_T = o(T)$. Fix any profile of Hannan consistent policies~$\tilde{\pi} = (\tilde{\pi}^i)_{i = 1}^{N}$. By applying the Bolzano-Weierstrass theorem for bounded sequences, the sequence of empirical distribution of play, $(\bar{\delta}^{\tilde{\pi}}_T: T \in \mathbb{N})$ admits a convergent sub-sequence, whose limit we denote by $\tilde{q} \in \Delta(A)$. As usual, we identify distributions with vectors. 

Note that, by construction, for all $i \in \llbracket N \rrbracket$ and $x \in A_i$,

\begin{equation}\label{non-neg eq}
    \sum_{a \in A}(u^i(x,a^{-i}) - u^i(a))\tilde{q}(a) \leq 0.
\end{equation}

Denote the ordered support of $\tilde{q}$ (under the lexicographical order) by $(a_{(1)},...,a_{(S)})$. We suppose for now that the corresponding probability masses are rational numbers, and we express them as~$\tilde{q}_s =~\frac{m_s}{m}$ where $m_s,m \in~\mathbb{N}$, for all $s \in \llbracket S \rrbracket$.  We construct the policy chosen by player $i$, denoted by $\pi^i$, as a \emph{trigger}-type policy that consists of three elements: a \emph{cooperative} policy, an underlying DB($C$) consistent policy,~$\hat{\pi}_i$, and a detection test.  The cooperative schedule is designed to ensure convergence of the distribution of play to the target distribution $\tilde{q}$, when all players are cooperating. The detection test serves to identify deviations by the other players from the cooperative outcome, and the underlying DB($C$) consistent policies ensures robust guarantees also in case of defection by the other players. Specifically, the cooperative schedule prescribes player $i$ to cycle through her part in the elements of the support of $\tilde{q}$, i.e., $(a_{(1)}^i,...,a_{(S)}^i)$, with a per-cycle frequency that equals to the target distribution~$\tilde{q}$. When cooperating, player $i$ continues to play the following cycle until reaching $T$:\vspace{-0.1cm}
\[\underbrace{\underbrace{a_{(1)}^i,...,a_{(1)}^i}_{m_1 \; \text{times}},\underbrace{a_{(2)}^i,...,a_{(2)}^i}_{m_2 \; \text{times}},\;... \;,\underbrace{a_{(S)}^i,...,a_{(S)}^i}_{m_S \; \text{times}}}_{\sum_{s=1}^{S} m_s = m \; \text{times}} \; \mid \; \underbrace{\underbrace{a_{(1)}^i,...,a_{(1)}^i}_{m_1 \; \text{times}},\underbrace{a_{(2)}^i,...,a_{(2)}^i}_{m_2 \; \text{times}},\;... \;,\underbrace{a_{(S)}^i,...,a_{(S)}^i}_{m_S \; \text{times}}}_{\sum_{s=1}^{S} m_s = m \; \text{times}} \; \mid ...\vspace{-0.1cm}
\]

\noindent
Denote by $\mathcal{W}_T^s$ all periods in which the profile $a_{(s)}^i$ is scheduled to be played when the horizon has length $T$. The detection test compares the payoff obtained at the end of each period $t$ with the one player~$i$ would have obtained if all other players except $i$ were cooperating. For $t \in \llbracket 2,T \rrbracket$, denote by $D_{T}^i(t)$ the collection of all periods before time $t-1$ in which the other players cooperated, i.e., 
\[ D_{T}^i(t) := \left \{ \tau \in \llbracket t-1 \rrbracket \; : \; \sum_{s=1}^{S} \mathbf{1}(\tau \in \mathcal{W}^T_s) \left( \underbrace{u^i(a_{(s)}^i,a^{-i}_{\tau})}_{\text{feedback $i$ receive at $\tau$}} - \underbrace{u^i(a_{(s)}^i,a_{(s)}^{-i})}_{\text{input of the policy}} \right) = 0\right\}.\]
\noindent
Therefore, if $D_{T}^i(t) = \llbracket t-1 \rrbracket$, player $i$ infers that the other players $-i$ cooperated until time~$t-1$. Finally, let $\hat{\pi}_i$ be any DB($C$) consistent policy that is adopted by player $i$ in case of detected defection by the other players. This policy exists by Proposition \ref{prop: restarting = sublinear external-internal dynamic regret}. Then, the trigger-type policy~$\pi^i$ prescribes to cooperate until a deviation is detected. If a deviation is detected, $\pi^i$ forgets all information acquired before the deviation time and immediately switches to the underlying~DB($C$) consistent policy $\hat{\pi}_{i}$, which is played until the end of the horizon~$\llbracket T \rrbracket$.



{\bigskip\noindent \textbf{Part III: Analysis.} We show that the trigger-type policies designed above are DB($C$) consistent and, when implemented by all players, produce a distribution of play that converges to the target distribution $\tilde{q}$. First, note that as all players start being cooperative, none of them detects a deviation by the other players, and so they continue to cooperate in the future. Therefore, by construction, $ \bar{\delta}^{\pi}_T \to \tilde{q}$ as $T \to \infty$. It remains to show that $\pi^i$ is DB($C$) consistent. Fix $T \in \mathbb{N}$. We assume without loss of generality that~$T \geq m$, and we let $\sigma^{-i} : \cup_{t=1}^{T} A^{t-1} \to \prod_{j \neq i} \Delta(A^{-i})$ be any (potentially correlated) strategies by the other players $-i$.
There are two cases to be discussed. 

\noindent
\textbf{Case 1}: Assume first that $\mathbb{P}^{\pi^i,\sigma^{-i}}\left( \forall t \in \llbracket T \rrbracket, \; a_{t}^{-i} = \sum_{s=1}^{S} \mathbf{1}(\tau \in \mathcal{W}_{T}^s)a_{(s)}^{-i} \right) = 1$, that is the other players~$-i$ do not deviate from their cooperative policy. Therefore, by using the notation~$c_t := (c_{t}^i,c_{t}^{-i})$ to refer to the cooperative actions played at time $t$, we obtain 
\begin{equation*}
    \begin{aligned}
     \underset{x \in \mathcal{S}^i(C_T)}{\text{max}} \sum_{t=1}^{T} \mathbb{E}^{\pi^i,\sigma^{-i}}\left[u^i(x_t,a^{t}_{i}) - u^i(a^{t})\right]  
    &=  \underset{x \in \mathcal{S}^i(C_T)}{\text{max}} \sum_{t=1}^{T} \left( u^i(x_t,c_{t}^{-i}) - u^i(c^{t}_{i},c_{t}^{-i})\right). \\
    \end{aligned}
\end{equation*}

Next, we show that, under the above cycling structure,  
\begin{equation}\label{key equation} \underset{x \in \mathcal{S}^i(C_T)}{\text{max}} \sum_{t=1}^{T} \left( u^i(x_t,c_{t}^{-i}) - u^i(c_{t}^{i},c_{t}^{-i})\right) \leq h(T),
\end{equation}
\noindent
for some non-negative function $h = o(T)$.

Pick $(x_t)_{t=1}^{T} \in \mathcal{S}^i(C_T)$. Denote by $U_T$ the actual number of changes in the fixed sequence~$(x_t)_{t=1}^{T}$. Such a sequence generates  $U_T+1$ batches $\{\mathcal{T}_1 , \ldots, \mathcal{T}_{U_T+1}\}$ where the action is constant on each batch, i.e., for each $j$ we have $x^{t} =x$ for all $t$ in the batch $\mathcal{T}_j$ for some action $x$. Fix a batch $\mathcal{T}_j$ and suppose that $x_t = x_j$, for all $t \in \mathcal{T}_j$. Denote by $|\mathcal{T}_{j}|$ the cardinality of $\mathcal{T}_{j}$, and by~$\lfloor x \rfloor$ the integral value lower than $x$.  Recall that $m$ is the size of a cooperative cycle. Suppose that there are $k \geq 0$ cycles that are included in the batch  $\mathcal{T}_{j}$ and let $Q_{j} \subseteq \mathcal{T}_j$ be the set of periods that includes all the periods that belong to cycles in $\mathcal{T}_j$, i.e., $Q_{j} = \{ t_{1} , \ldots , t_{km} \}$ where $t_{1}$ starts the first cycle in $\mathcal{T}_{j}$ and $t_{km}$ ends the last cycle in $\mathcal{T}_{j}$. If $k=0$ then $W_{j}$ is the empty set. We have

\begin{equation} \label{Ineq:Bar1}
\begin{aligned}
 \sum_{t \in \mathcal{T}_j} u^i(x_j,c_{t}^{-i}) & =  \sum_{t \in Q_{j}} u^i(x_j,c_{t}^{-i}) +  \sum_{t \in \mathcal{T}_j \setminus Q_{j}} u^i(x_j,a_{t}^{-i} ) \\
 & = k \sum_{t =1 }^{m} u^i(x_j,c_{t}^{-i}) +  \sum_{t \in \mathcal{T}_j \setminus Q_{j}} u^i(x_j,c_{t}^{-i} ) \\
& \leq  \left \lfloor \frac{|\mathcal{T}_j|}{m} \right \rfloor \sum_{t=1}^{m} u^i(x_j,c_{t}^{-i}) + 2m,
\end{aligned}
\end{equation}
where the inequality follows from the facts that $k \leq \lfloor |\mathcal{T}_j| / m  \rfloor $, payoffs are bounded in $[0,1]$, and from noting that the cardinality of $\mathcal{T}_j \setminus Q_{j}$ is at most $2m$. Then, 

\begin{equation}\label{crucial ineq}
\begin{aligned}
\sum_{t \in \mathcal{T}_j} u^i(x_j,c_{t}^{-i}) &= \left \lfloor \frac{|\mathcal{T}_j|}{m} \right \rfloor \left (\sum_{t=1}^{m} u^i(x_j,c_{t}^{-i}) \right) + \left( \sum_{t \in \mathcal{T}_j} u^i(x_j,c_{t}^{-i}) -   \left \lfloor \frac{|\mathcal{T}_j|}{m} \right \rfloor \sum_{t=1}^{m} u^i(x_j,c_{t}^{-i})\right) \\
&\leq  |\mathcal{T}_j| \; \underset{y \in A_i}{\text{max}} \; \frac{1}{m} \;\sum_{t=1}^{m} u^i(y,c_{t}^{-i}) + 2m,
\end{aligned}
\end{equation}
\vspace{-0.1cm}
\noindent
where the inequality holds because $\lfloor x \rfloor \leq x$ and Inequality ($\ref{Ineq:Bar1}$). By summing over all elements of the partition, we obtain
\begin{equation}
\begin{aligned}
\sum_{j=1}^{U_T+1}\sum_{t \in \mathcal{T}_j} u^i(x_j,c_{t}^{-i}) &\overset{(a)}{\leq} \sum_{j=1}^{U_T+1} \left( |\mathcal{T}_j| \; \underset{y \in A_i}{\text{max}} \; \frac{1}{m} \;\sum_{t=1}^{m} u^i(y,c_{t}^{-i}) + 2 m \right)  \\
&=  T \; \underset{y \in A_i}{\text{max}} \;\frac{1}{m}\sum_{t=1}^{m} u^i(y,c_{t}^{-i}) + 2 m \left(U_T+1 \right) \\
&\overset{(b)}{\leq} T \; \underset{y \in A_i}{\text{max}} \;\frac{1}{m}\sum_{t=1}^{m} u^i(y,c_{t}^{-i}) + 2 m \left(C_T+1 \right), 
\end{aligned}
\end{equation}
\noindent
where ($a$) holds because of (\ref{crucial ineq}), ($b$) holds because $U_T \leq C_T$.
As the sequence of $(x_t)_{t=1}^{T} \in \mathcal{S}^i(C_T)$ was arbitrary, by using the assumption $C_T = o(T)$, we obtain
\begin{equation}\label{eq: small benchmark gap}
    \underset{x \in \mathcal{S}^i(C_T)}{\text{max}}  \sum_{t=1}^{T}  u^i(x_t,c_{t}^{-i}) - T \underset{y \in A_i}{\max} \;  \frac{1}{m}\sum_{t=1}^{m} u^i(y,c_{t}^{-i}) \leq 2m(C_T + 1) =: L(T) = o(T).
\end{equation}

On the other hand, by using Inequality \eqref{non-neg eq} and recalling that $(1/T) \sum_{t=1}^{T} \delta_{c_T} = \bar{\delta}^{\pi}_T \to \tilde{q}$ as $T \to \infty$, it follows 

\begin{equation}\label{R(T) = o(T)}
     R(T) := \left(T\underset{y \in A_i}{\max} \frac{1}{m} \sum_{t=1}^{m} u^i(y,c_{t}^{-i}) -  \sum_{t=1}^{T} u^i(c_{t}^{i},c_{t}^{-i}) \right)_{+} = o(T).
\end{equation}

\noindent
The following steps hold
\begin{equation*}
\begin{aligned}
  &\phantom{=}\underset{x \in \mathcal{S}^i(C_T)}{\text{max}}  \sum_{t=1}^{T}  \left( u^i(x_t,c_{t}^{-i}) - u^i(c_{t}^{i},c_{t}^{-i}) \right) \\
  &= \Bigg(\underset{x \in \mathcal{S}^i(C_T)}{\text{max}}  \sum_{t=1}^{T}  u^i(x_t,c_{t}^{-i}) - T\underset{y \in A_i}{\text{max}} \frac{1}{m}  \sum_{t=1}^{m}  u^i(y,c_{t}^{-i}) \Bigg) + \left(T\underset{y \in A_i}{\max} \frac{1}{m}  \sum_{t=1}^{m} u^i(y,c_{t}^{-i}) -  \sum_{t=1}^{T} u^i(c_{t}^{i},c_{t}^{-i}) \right)  \\
  &\leq \Bigg(\underset{x \in \mathcal{S}^i(C_T)}{\text{max}}  \sum_{t=1}^{T}  u^i(x_t,c_{t}^{-i}) - T\underset{y \in A_i}{\text{max}} \frac{1}{m}  \sum_{t=1}^{m}  u^i(y,c_{t}^{-i}) \Bigg) +  \left(T\underset{y \in A_i}{\max} \frac{1}{m}  \sum_{t=1}^{m} u^i(y,c_{t}^{-i}) -  \sum_{t=1}^{T} u^i(c_{t}^{i},c_{t}^{-i}) \right)_{+} \\
  &\leq L(T) + R(T) =: h(T) = o(T).
\end{aligned}
\end{equation*}

\noindent
For the following part of the proof, we assume without loss of generality that $h$ is non-decreasing. Indeed, if not, we can just replace the function $h$ with $\tilde{h}$ defined as $\tilde{h}(T) := \max_{\tau \in \llbracket T \rrbracket}h(\tau)$, and we still have $\tilde{h}(T) = o(T)$.

\noindent
\textbf{Case 2}: Suppose now that $\mathbb{P}^{\pi,\sigma^{-i}}\left(\forall t \in \llbracket T \rrbracket, \; a_{t}^{-i} = \sum_{s=1}^{S} \mathbf{1}(\tau \in \mathcal{W}_{T}^s)a_{(s)}^{-i}\right) <~1$. Denote by $D_T$ the (random) deviation time by the other players $-i$.
Recall that we use the notation $c_t = (c_{t}^i,c_{t}^{-i})$ to denote the cooperative actions played at time $t$. For each $(x_t: t \in \llbracket T \rrbracket) \in \mathcal{S}^i(C_T)$, we have

\begin{equation*}
\begin{aligned}
    \mathbb{E}^{\pi^i,\sigma^{-i}}\left[ \sum_{t=1}^{T} u^i(x_t,a_{t}^{-i}) - u^i(a_t)\right] &= 
    \sum_{d=1}^{T}\mathbb{E}^{\pi^i,\sigma^{-i}}\left[ \sum_{t=1}^{T} u^i(x_t,a_{t}^{-i}) - u^i(a_t) \; \Bigg| \; D_T = d \right]\mathbb{P}^{\pi^i,\sigma^{-i}}(D_T = d) \\
    &+ \mathbb{E}^{\pi^i,\sigma^{-i}}\left[ \sum_{t=1}^{T} u^i(x_t,a_{t}^{-i}) - u^i(a_t) \; \Bigg| \; D_T = \infty\right]\mathbb{P}^{\pi^i,\sigma^{-i}}(D_T = \infty).
\end{aligned}
\end{equation*}

In the following, to ease notation we define
\begin{equation}\label{finite time dev}
\mathbb{E}^{\pi^i,\sigma^{-i}}_{d}\left[ \sum_{t=1}^{T} u^i(x_t,a_{t}^{-i}) - u^i(a_t)\right] := \mathbb{E}^{\pi^i,\sigma^{-i}}\left[ \sum_{t=1}^{T} u^i(x_t,a_{t}^{-i}) - u^i(a_t) \; \Bigg| \; D_T = d \right].
\end{equation}

and similarly
\begin{equation}\label{infinite time dev}
\mathbb{E}^{\pi^i,\sigma^{-i}}_{\infty}\left[ \sum_{t=1}^{T} u^i(x_t,a_{t}^{-i}) - u^i(a_t)\right] := \mathbb{E}^{\pi^i,\sigma^{-i}}\left[ \sum_{t=1}^{T} u^i(x_t,a_{t}^{-i}) - u^i(a_t) \; \Bigg| \; D_T = \infty\right].
\end{equation}

Now, by using the same analysis for Case 1, we immediately conclude that
\[ \mathbb{E}^{\pi^i,\sigma^{-i}}_{\infty}\left[ \sum_{t=1}^{T} u^i(x_t,a_{t}^{-i}) - u^i(a_t)\right] \leq h(T).\]

On the other hand,

\begin{equation*}
\begin{aligned}
    \mathbb{E}^{\pi^i,\sigma^{-i}}_{d}\left[ \sum_{t=1}^{T} u^i(x_t,a_{t}^{-i}) - u^i(a_t)\right] &= \mathbb{E}^{\pi^i,\sigma^{-i}}_{d}\left[ \sum_{t=1}^{d} u^i(x_t,a_{t}^{-i}) - u^i(a_t)\right] + \mathbb{E}^{\pi^i,\sigma^{-i}}_{d}\left[ \sum_{t=d +1}^{T} u^i(x_t,a_{t}^{-i}) - u^i(a_t)\right] \\
    &\overset{(a)}{\leq}  \mathbb{E}^{\pi^i,\sigma^{-i}} _{d}\left[\sum_{t=1}^{d -1} u^i(x_t,c_{t}^{-i}) - u^i(c_t)  \right] + 1 + \mathbb{E}^{\pi^i,\sigma^{-i}}_{d}\left[ \sum_{t=d +1}^{T} u^i(x_t,a_{t}^{-i}) - u^i(a_t)\right],
\end{aligned}
\end{equation*}
\noindent
where ($a$) holds because, by definition, all players stick to their cooperative schedule up to $d - 1$, so that 
\[ \sum_{t=1}^{d -1}  \left(u^i(x_t,a_{t}^{-i}) - u^i(a_t) \right) = \sum_{t=1}^{d -1}  \left(u^i(x_t,c_{t}^{-i}) - u^i(c_t) \right),\]

and because, as payoffs are in $[0,1]$,
\[ \mathbb{E}^{\pi^i,\sigma^{-i}}_{d}\left[ u^i(x_{d},a_{d }^{-i})-u^i(a_{d }^i,a_{d}^{-i}) \right] \leq 1.\]

Further, the following holds

\begin{equation}
    \begin{aligned}
        \mathbb{E}^{\pi^i,\sigma^{-i}}_{d}\left[ \sum_{t=d+1}^{T} u^i(x_t,a_{t}^{-i}) - u^i(a_t)\right] 
        &\overset{(a)}{=} 
        \sum_{a^{-i} \neq a_{d}^{-i}} \mathbb{E}^{\hat{\pi}_i,\sigma^{-i} \mid a^{-i}}\left[ \sum_{t=1}^{T-d} u^i(x_t,a_{t}^{-i}) - u^i(a_t) \right]\mathbb{P}^{\pi^i,\sigma^{-i}}(a_{d}^{-i} = a^{-i} \mid D_T = d)   \\
        &\overset{(b)}{\leq} f(T-d) \overset{(c)}{\leq} f(T) = o(T),
    \end{aligned}
\end{equation}

where ($a$) holds because immediately after detecting a defection, player $i$ switches to~$\hat{\pi}_i$, and by defining $\sigma^{-i} \mid a^{-i}$ as follows: for $t \in \llbracket T-d \rrbracket$ and $(a^{\tau} : \tau \in \llbracket t \rrbracket)$,

\[ \sigma^{-i} \mid a^{-i}((a^{\tau} : \tau \in \llbracket t \rrbracket)) := \sigma^{-i}((c_1,\ldots,c_{d-1},(c_{d}^i,a^{-i})) \oplus (a^{\tau} : \tau \in \llbracket t \rrbracket)),\]

where $\oplus$ denotes the concatenation operator between two sequences. ($b$) holds because $\hat{\pi}_i$ is DB($C$) consistent; that is there exists a non-negative (and wlog) non-decreasing function $f$ such that ($b$) holds. Finally, ($c$) uses the monotonicity of $f$. Similarly, we have 

\begin{equation}
    \begin{aligned}
    \mathbb{E}^{\pi^i,\sigma^{-i}}_{d}\left[\sum_{t=1}^{d-1} u^i(x_t,c_{t}^{-i}) - u^i(c_t) \right] &=  \sum_{t=1}^{d-1} \left(u^i(x_t,c_{t}^{-i}) - u^i(c_t) \right) \\
    &\overset{(a)}{\leq} (d-1) \mathbf{1}(d \leq m) + h(d-1) \mathbf{1}(d \geq m+1).
    \end{aligned}
\end{equation}

\noindent
where ($a$) holds by bounding (as $u^i(a) \in [0,1]$ for all $a$), $\sum_{t=1}^{d-1} u^i(x_t,c_{t}^{-i}) - u^i(c_t) \leq d-1$, if $d \leq m$, and, with the same steps underlying Case $1$, by bounding
\[ \sum_{t=1}^{d-1} u^i(x_t,c_{t}^{-i}) - u^i(c_t) \leq h(d-1),\]
if $d \geq m+1$. Define $g(T) := m + f(T) + h(T)$. Then, on the one hand, $g(T) = o(T)$, and, on the other hand, as we can assume wlog that $h$ is non-decreasing,
\[ \mathbb{E}^{\pi^i,\sigma^{-i}}\left[ \sum_{t=1}^{T} u^i(x_t,a_{t}^{-i}) - u^i(a_t)\right] \leq g(T).\]

As the sequence $(x_t: t \in \llbracket T \rrbracket)$ was an arbitrary element of $\mathcal{S}^i(C_T)$, this shows that under Case~$2$, the policy $\pi^i$ satisfies no-regret relative to a dynamic benchmark with $C_T$ action changes (when the horizon has length $T$). Altogether, Case $1$ and Case $2$ establish that $\pi^i$ is DB($C$) consistent.

Finally, the generalization to irrational probability masses is established through standard density arguments. This concludes the proof.
}
\end{proof}

\begin{proof}[Proof of Theorem \ref{Hannan Set Survives}]
Let $C$ be any sequence of non-decreasing real-numbers such that $C_T = o(T)$ and $\pi = (\pi^i)_{i =1}^{N} \in \mathcal{P}^{N}_{DB}(C)$ be a profile of DB($C$) consistent policies. First of all, observe that, for any closed set $\mathcal{D} \subseteq \Delta(A)$,
\[  d(\bar{\delta}^{\pi}_T,\mathcal{D}) \rightarrow 0  \iff \text{for any convergent sub-sequence} \; (\bar{\delta}_{T_{K}}^{\pi})_{K}, \; \text{its limit belongs to }\mathcal{D}.\]

\noindent
In order to establish the result, let

\[ \mathcal{H}^{D}(C) :=  \bigcup_{\pi \in \mathcal{P}_{DB}^N(C)} \left\{ q \in \Delta(A) \; : \;  q \in \mathrm{Lim} \left( \bar{\delta}^{\pi}_T \right) \right\}  \]

\noindent
be the collection of all joint distributions that can be reached as limit by a convergent sub-sequence of the joint distribution of play generated by a profile of DB($C$) consistent policies. By applying the Bolzano-Weierstrass Theorem for bounded sequences, we conclude that $\mathcal{H}^{D}(C) \neq \emptyset$. 

Standard arguments (see for example \citealt{NisaRougTardVazi07}) imply that $ \mathcal{H}^{D}(C) \subseteq \mathcal{H} $. Denote by~$\Delta(A)_{\mathbb{Q}} := \left \{ q \in \mathbb{Q}^{|A|} : \forall a \in A, \; q(a) \geq 0, \; \text{and} \; \sum_{a \in A}q(a) = 1\right\}$ the collection of all joint distributions made only of rational components. 
As, by assumption, the payoff functions of the players have rational values, it follows that the vertices of $\mathcal{H}$ have rational coordinates; hence, $\mathcal{H} \cap \Delta(A)_{\mathbb{Q}} \neq \emptyset$ and $\text{cl}_{\Delta(A)} \left(\mathcal{H} \cap \Delta(A)_{\mathbb{Q}} \right) = \mathcal{H}$, where $\text{cl}_{\Delta(A)}(B)$ denotes the closure of a set~$B \subseteq \Delta(A)$. Let $q \in \mathcal{H} \cap \Delta(A)_{\mathbb{Q}}$. The trigger-type policies devised in Proposition \ref{Direction 1} are~DB($C$) consistent and can be used to ensure that the corresponding distribution of play converges to $q$. Therefore, it holds that~$\mathcal{H} \cap \Delta(A)_{\mathbb{Q}} \subseteq \mathcal{H}^{D}(C)$. We have established 
\[ \mathcal{H}^{D}(C) \subseteq \mathcal{H} \quad \textnormal{and} \quad \mathcal{H} \cap \Delta(A)_{\mathbb{Q}} \subseteq \mathcal{H}^{D}(C).\]
The proof is concluded by taking the closure in both inclusions.

\end{proof}

\section{Proofs of \S \ref{sect: time varying games}}\label{proofs of sect: sect time varying games}

\begin{lemma}\label{lemma: regret-distance}
Let $\Gamma = (u^i : i \in \llbracket N \rrbracket)$ be a stage game with corresponding Hannan set $\mathcal{H}$. Then, there exists a constant, $\textnormal{const}(\Gamma)$, such that for any~$\epsilon \geq 0$, $ T \in \mathbb{N}$, and any profile profile of policies $\pi \in \mathcal{P}$, if, 
$$
\max_{(x_t) \in \mathcal{S}^i(0)} \mathbb{E}^{\pi} \left[ \sum_{t=1}^{T} u^i(x_t,a^{-i}_t) - u^i(a_t) \right] \leq \epsilon T
$$
for each player $i$, then, the distance between the empirical distribution of play $\bar{\delta}^{\pi}_{T}$ and the Hannan set $\mathcal{H}$ is bounded by$$\textnormal{d}_p(\bar{\delta}^{\pi}_{T},\mathcal{H}) \leq \min \{ \textnormal{const}(\Gamma) \epsilon , 2^{1/p}\}.$$
\end{lemma}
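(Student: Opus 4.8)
The plan is to prove the bound by a compactness/continuity argument relating the $\epsilon$-approximate Hannan set to the exact Hannan set, and then noting that the hypothesis on regret forces the empirical distribution of play into the former. First I would recall that the hypothesis
\[
\max_{(x_t) \in \mathcal{S}^i(0)} \mathbb{E}^{\pi} \left[ \sum_{t=1}^{T} u^i(x_t,a^{-i}_t) - u^i(a_t) \right] \leq \epsilon T
\]
says exactly that, for every player $i$ and every fixed action $x \in A^i$, $\mathbb{E}_{a \sim \bar{\delta}^{\pi}_T}[u^i(x,a^{-i}) - u^i(a)] \leq \epsilon$. Hence $\bar{\delta}^{\pi}_T$ lies in the $\epsilon$-approximate Hannan set
\[
\mathcal{H}_\epsilon(\Gamma) := \Big\{ q \in \Delta(A) : \forall i, \forall x \in A^i,\ \mathbb{E}_{a \sim q}[u^i(x,a^{-i}) - u^i(a)] \leq \epsilon \Big\}.
\]
So it suffices to bound $\mathrm{d}_p(q, \mathcal{H})$ uniformly over all $q \in \mathcal{H}_\epsilon(\Gamma)$ by $\mathrm{const}(\Gamma)\,\epsilon$; the trivial bound $\mathrm{d}_p(q,\mathcal{H}) \leq \mathrm{diam}_p(\Delta(A)) \leq 2^{1/p}$ always holds (any two probability vectors differ by at most $2$ in $\ell_1$, hence at most $2^{1/p}$ in $\ell_p$), which gives the $\min\{\cdot, 2^{1/p}\}$ in the statement. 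Note the case $\epsilon = 0$ is immediate since $\mathcal{H}_0 = \mathcal{H}$.

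Next I would make this a statement about polytopes. Both $\mathcal{H} = \mathcal{H}_0(\Gamma)$ and $\mathcal{H}_\epsilon(\Gamma)$ are polytopes defined by the same finite system of linear inequalities $A q \leq b + \epsilon \mathbf{1}$ (restricted to the simplex), where the rows of $A$ are indexed by pairs $(i,x)$ and each entry is $u^i(x,a^{-i}) - u^i(a)$ for the corresponding $a$; only the right-hand side is perturbed by $\epsilon$. This is precisely the setting of Hoffman's lemma / Lipschitz continuity of polyhedra under right-hand-side perturbations: there is a Hoffman constant $\theta(A)$, depending only on the constraint matrix $A$ (hence only on $\Gamma$, since $A$ is built from the payoff entries), such that for any $q$ feasible for the perturbed system,
\[
\mathrm{d}_p(q,\mathcal{H}) \leq \theta(A)\,\big\| (Aq - b)_+ \big\|_{p'} \leq \theta(A) \, \big\| \epsilon \mathbf{1}\big\|_{p'} = \theta(A)\, (\text{number of constraints})^{1/p'}\,\epsilon,
\]
where the first inequality is Hoffman's bound and the second uses $Aq \leq b + \epsilon\mathbf{1}$. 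Setting $\mathrm{const}(\Gamma) := \theta(A)\,(\sum_i K^i)^{1/p'}$ (a finite quantity determined by the stage game and the chosen norm) finishes the argument, combining with the $2^{1/p}$ bound above. One must check $\mathcal{H}$ is nonempty for Hoffman's lemma to apply in the clean form — but this is classical, every finite game has a coarse correlated equilibrium.

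The main obstacle is making the dependence of the constant on $\Gamma$ explicit and verifying it is genuinely finite and well-defined even in degenerate cases. Two subtleties deserve care: (i) if $\mathcal{H}$ happens to be a single point or lower-dimensional, Hoffman's lemma still applies but one should confirm the constant $\theta(A)$ does not blow up — it does not, since $\theta(A)$ depends only on the (finitely many) subsets of rows of $A$ that are linearly independent, via the formula $\theta(A) = \max$ over such subsets of a reciprocal-smallest-singular-value type quantity; and (ii) the simplex constraints $q \geq 0$, $\sum q(a) = 1$ must be incorporated into the system (they are unperturbed, so they only help). Alternatively, if one wants to avoid quoting Hoffman's lemma, the same conclusion follows from a compactness argument: the set-valued map $\epsilon \mapsto \mathcal{H}_\epsilon(\Gamma)$ is a polyhedral (hence Lipschitz) multifunction at $\epsilon = 0$, so $\sup_{q \in \mathcal{H}_\epsilon} \mathrm{d}_p(q, \mathcal{H}) = O(\epsilon)$ as $\epsilon \downarrow 0$, and for $\epsilon$ bounded away from $0$ the trivial $2^{1/p}$ bound absorbs everything; one then picks $\mathrm{const}(\Gamma)$ large enough to cover the transition region. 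The paper's footnote referencing Example~\ref{small regret but big distance} shows the $2^{1/p}$ truncation is unavoidable, so no sharper uniform bound of the form $c \cdot \epsilon$ alone can hold; the linear-in-$\epsilon$ behavior is only a local (small-$\epsilon$) phenomenon, which is exactly what the $\min$ encodes.
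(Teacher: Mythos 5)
Your proposal is correct and follows essentially the same route as the paper: both reduce the hypothesis to membership of $\bar{\delta}^{\pi}_T$ in the $\epsilon$-approximate Hannan set, view $\mathcal{H}$ and $\mathcal{H}_{\epsilon}$ as polytopes differing only by a right-hand-side perturbation, and apply a Hoffman-type error bound (the paper cites the sharp Lipschitz constants of Li, 1993, for polyhedral multifunctions, which is the same mechanism as the classical Hoffman lemma you invoke), truncating by the simplex diameter $2^{1/p}$. The only differences are cosmetic, e.g.\ the precise norm pairing in the constant.
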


\begin{proof}
To ease notation let $K := \sum_{i \in \llbracket N \rrbracket} K^i$ and $Z := \prod_{i \in \llbracket N \rrbracket} K^i$. Further, we recall that $\textnormal{d}$ is induced by some $p$-norm with $p \geq 1$ or $ p = \infty$, and that the outcome space $A$ is endowed with the lexicographical order. We write the ordered (wrt lexicographic order) sequence of outcomes as
\[ (a_{(1)}, a_{(2)},\ldots, a_{(Z)}).\]
By assumption, for each player $i$, 
$$\max_{(x_t) \in \mathcal{S}^i(0)} \mathbb{E}^{\pi} \left[ \sum_{t=1}^{T} u^i(x_t,a^{-i}_t) - u^i(a_t) \right] \leq \epsilon T.$$
\noindent 
By definition, it follows that $\bar{\delta}^{\pi}_T$ is an $\epsilon$-approximate Hannan equilibrium, that is, $\bar{\delta}^{\pi}_T \in \mathcal{H}_{\epsilon}$,  where
    \[ \mathcal{H}_{\epsilon} := \left\{q \in \Delta(A) : \forall i \in \llbracket N \rrbracket, \forall x \in A_i, \; \mathbb{E}_{a \sim q}\left[u^i(x,a^{-i}) - u^i(a)\right] \leq \epsilon \right\}. \]
For $x,y \in \mathbb{R}^{K}$, we use the notation $x \preceq y$ if $x_k \leq y_k$, for all $k$. Then, we conveniently rewrite the Hannan set and its $\epsilon$-approximate version in matrix form 
\[ \mathcal{H} = \left\{q \in \mathbb{R}^{Z} : Q q \preceq 0, 1^{\top} q = 1 \right\} \textnormal{ and } \mathcal{H}_{\epsilon} = \left\{q \in \mathbb{R}^{Z} : Q q \preceq \begin{bmatrix}
\epsilon 1 \\
0
\end{bmatrix}, 1^{\top} q = 1 \right\},\]

where $Q = \begin{bmatrix}
R \\
-I
\end{bmatrix}$ with $I$ the $Z \times Z$ identity matrix and $R$ a $K \times Z$ matrix containing the possible incentives to deviate for each player, that is, 

\[ R := \begin{bmatrix}
    u^{1}(1,a^{-1}_{(1)}) - u^{1}(a_{(1)})  & \cdots & u^{1}(1,a^{-1}_{(Z)}) - u^{1}(a_{(Z)}) \\
    \vdots & \vdots & \vdots \\
    u^{1}(K^1,a^{-1}_{(1)}) - u^{1}(a_{(1)})  & \cdots & u^{1}(K^1,a^{-1}_{(Z)}) - u^{1}(a_{(Z)}) \\
    \vdots & \vdots & \vdots \\
    u^{N}(1,a^{-N}_{(1)}) - u^{N}(a_{(1)})  & \cdots & u^{N}(1,a^{-N}_{(Z)}) - u^{N}(a_{(Z)}) \\
    \vdots & \vdots & \vdots \\
    u^{N}(K^N,a^{-N}_{(1)}) - u^{N}(a_{(1)})  & \cdots & u^{N}(K^N,a^{-N}_{(Z)}) - u^{N}(a_{(Z)})
\end{bmatrix}. \]

Clearly, $\textnormal{rank}(Q) = Z$. Since the profile of policies $\pi$ is arbitrary, we do not know exactly where~$\bar{\delta}^{\pi}_T$ lives inside~$\mathcal{H}_{\epsilon}$. Thus, we take a worst-case approach and provide a bound for the Hausdorff distance between the polytopes $\mathcal{H}$ and $\mathcal{H}_{\epsilon}$. Indeed, it holds $\textnormal{d}_p(\bar{\delta}^{\pi}_T, \mathcal{H}) \leq \textnormal{d}_{H}(\mathcal{H}, \mathcal{H}_{\epsilon})$, where 
\[ \textnormal{d}_{H}(\mathcal{H}, \mathcal{H}_{\epsilon}) = \max_{p \in \mathcal{H}_{\epsilon}} \min_{q \in \mathcal{H}} \textnormal{d}_p(p,q).\footnote{Since $\mathcal{H} \subseteq \mathcal{H}_{\epsilon}$, we wrote directly the one-side Hausdorff distance.}\]

By specializing Theorem 2.4 and Theorem 3.2 in \cite{li1993sharp} to our setting, we obtain 
\[ \textnormal{d}_{H}(\mathcal{H}, \mathcal{H}_{\epsilon}) \leq  \alpha(\Gamma) K^{1/p} \epsilon,\]

where 

\[ \alpha(\Gamma) := \max_{I \in \mathcal{M}} \sup_{u,v} \left\{  \textnormal{d}_p \left( \begin{bmatrix}
Q_{I,0} \\
1^{\top}
\end{bmatrix}^{+} \begin{bmatrix}
u \\
v
\end{bmatrix}, S_{I}\right) : \left\|\begin{bmatrix}
u \\
v
\end{bmatrix} \right\|_{p} = 1\right\},\]

and
\begin{itemize}
\item $Q_{I}$ denotes the matrix consisting of rows of $Q$ whose indices are in the index set $I \subseteq \llbracket K+Z \rrbracket$.
\item $P^{+}$ denotes the pseudo-inverse of matrix $P$.
\item $Q_{I,0}$ denotes the matrix obtained by replacing the rows
of $Q$ whose indices are not in $I$ with~$0$.
\item $\mathcal{M} := \left\{ I \subseteq \llbracket K+Z \rrbracket: |I| = Z-1, \textnormal{rank}\left(\begin{bmatrix}
Q_{I} \\
1^{\top}
\end{bmatrix}\right) = Z\right\}$.
\item $S_{I} := \{ x \in \mathbb{R}^{Z}: Q_{I} x \preceq 0, 1^{\top}x = 0\}$.
\end{itemize}

Finally, by setting $\textnormal{const}(\Gamma) := \alpha(\Gamma) K^{1/p}$ and using the fact that distributions belong to the unit simplex, which implies that $\|q_0-q_1\|_p \leq 2^{1/p}$ for $q_0,q_1 \in \Delta(A)$, we get the more refined bound 
\[\textnormal{d}_{H}(\mathcal{H}, \mathcal{H}_{\epsilon}) \leq \min \{ \textnormal{const}(\Gamma) \epsilon, 2^{1/p}\}. \]

\end{proof}

\begin{remark}\label{remark for constant}For a sequence of games $\mathcal{G}_T$, we denote by $\textnormal{const}(\mathcal{G}_T) := \max_{\Gamma \in \mathcal{G}_T} \textnormal{const}(\Gamma)$ the largest constant identified in Lemma \ref{lemma: regret-distance}, and similarly we denote $\textnormal{const}(\mathcal{G}) := \sup_{T \in \mathbb{N}} \textnormal{const}(\mathcal{G}_T)$. In all the results below we assume that $\textnormal{const}(\mathcal{G}) < \infty$.
\end{remark}

\begin{example}\label{small regret but big distance}
We provide a simple example where for some distributions in the $\epsilon$-approximated Hannan set, the closest distribution in the Hannan set has distance $\sqrt{2}$ in the $\ell_2$-distance (i.e., the largest distance that can be obtained), even for small values of $\epsilon$. Consider the following game

\begin{center}
    \setlength{\extrarowheight}{1pt}
    \begin{tabular}{cc|c|c|}
      & \multicolumn{1}{c}{} & \multicolumn{1}{c}{{c}}  & \multicolumn{1}{c}{d} \\\cline{3-4}
      & a & $\delta\hspace{0.1cm};\hspace{0.1cm}\delta$ & $1\hspace{0.1cm};\hspace{0.1cm}0$ \\\cline{3-4}
      & b & $0\hspace{0.1cm};\hspace{0.1cm}1$ & $0\hspace{0.1cm};\hspace{0.1cm} 0$ \\\cline{3-4}
    \end{tabular}
\end{center}
\noindent
for some $1 > \delta > 0$. Observe that the only Hannan equilibrium is $\mathcal{H} = \{(1,0,0,0)\}$, i.e., a degenerate distribution assigning probability mass equal to one to the unique Nash equilibrium $(a,c)$. Now, suppose we relax the Hannan equilibria as follows:
\[
\mathcal{H}_{\epsilon} = \left\{q \in \Delta(A) \mid \forall i \in [2], \forall x \in A_i, \; \mathbb{E}_{a \sim q}\left[u^i(x,a^{-i}) - u^{i}(a)\right] \leq \epsilon \right\},
\]
\noindent
for some $1 > \epsilon \geq \delta$. It is not difficult to verify that the degenerate distributions assigning probability mass equal to one to the outcomes $(a,d)$ and $(b,c)$ are $\epsilon$-approximate Hannan equilibria, that is, under our usual abuse notation for distributions, it holds $(0,1,0,0)  \in \mathcal{H}_{\epsilon}$ and $(0,0,1,0) \in \mathcal{H}_{\epsilon}$. However, we clearly have that $\|(0,1,0,0) -  (1,0,0,0)\|_2 = \sqrt{2}$.
\end{example}

\subsection{Proofs of \S \ref{subsect: single best-reply games}}\label{proof of single-best reply games}

\begin{proof}[Proof of Theorem \ref{suff result}]

We abuse notation and write the more compact $V_T$ instead of $V(\mathcal{G}^{\textnormal{sbr}}_T)$. We denote by $\mathcal{T}_{(k)}$ the batches where the stage game $\Gamma^{\textnormal{sbr}}_{(k)}$ is constant. Recall that $\bar{\delta}_{(k)}^{\pi}$ denotes the distribution of play restricted to the $k$-th batch induced by a profile of policies $\pi \in \mathcal{P}$.

Suppose that, each player $i$ is employing a DB($C^i$)-consistent policy, denoted by $\pi^i$, with~$C^i_T \geq V_T$, for all $T \geq 1$. Since the underlying games have the single best-reply property, and the benchmarks allow $C^i_T \geq V_T$ action changes, for each batch, they can select the corresponding single best-reply actions. Thus, the regret of any policy relative to these benchmarks is non-negative. 

For each batch $\mathcal{T}_{(k)}$, denote the static regret for player $i$ over that batch by

\[ \textnormal{R}_k^i := \max_{x \in A_i} \; \mathbb{E}^{\pi} \left[ \sum_{t \in \mathcal{T}_{(k)}} u_{(k)}^{i}(x,a^{-i}_t) - u_{(k)}^{i}(a_t) \right].\]
\noindent
Similarly, we denote by $\textnormal{R}_{k} := \max_{i \in \llbracket N \rrbracket} \textnormal{R}_{k}^i$ the largest regret over the $k$-th batch. Then, we have
\begin{equation}\label{tracking error from definition dominant}
    \begin{aligned}
    err_p(\mathcal{G}^{\textnormal{sbr}}_T, \pi,\mathcal{H}) = \sum_{k =1}^{V_T+1} |\mathcal{T}_{(k)}|\textnormal{d}_p(\bar{\delta}_{(k)}^{\pi},\mathcal{H}(\Gamma_{(k)})) 
    &\overset{(a)}{\leq}  \sum_{k \in \llbracket V_T+1 \rrbracket } \textnormal{const}(\Gamma^{\textnormal{sbr}}_{(k)}) \textnormal{R}_{k} \\
    &\leq \textnormal{const}(\mathcal{G}^{\textnormal{sbr}}) \sum_{i \in \llbracket N \rrbracket}\sum_{k \in \llbracket V_T+1 \rrbracket} \textnormal{R}_{k}^i \\
    &\overset{(b)}{=} \textnormal{const}(\mathcal{G}^{\textnormal{sbr}}) \sum_{i \in \llbracket N \rrbracket} \max_{(x_t) \in \mathcal{S}^{i}(C^i_T)} \mathbb{E}^{\pi}\left[ \textnormal{Reg}^i(\mathcal{U}^i_T,(x_t),(a_t))\right]. 
    \end{aligned}
\end{equation}
\noindent
where ($a$) holds by recalling that $R^{k} \geq 0$, and by applying Lemma \ref{lemma: regret-distance} with $\epsilon = \textnormal{R}_{k}/|\mathcal{T}_{(k)}|$. ($b$) holds because by assumption~$C^i_T \geq V_T$, and because $\Gamma^{\textnormal{sbr}}_{(k)}$'s are single best-reply games. 

Finally, as, by assumption, for each player $i$, $\pi^i$ is DB($C^i$)-consistent, we conclude that $$err_p(\mathcal{G}^{\textnormal{sbr}}_T, \pi,\mathcal{H}) = o(T).$$
\end{proof}

\begin{proof}[Proof of Theorem \ref{imp result}]
Throughout the proof, for a sequence of games $\mathcal{G}_T$, we use the notation $(\mathcal{G}_T)^i$ to refer to the sequence of player $i$'s payoff functions inside $\mathcal{G}_T$. Let $\mathcal{U}^i$ be single best-replying functions for player $i$ such that 
$$S_T := \sum_{t=1}^{T-1} \mathbf{1}(u^i_{t+1} \neq u^i_t ) = o(T).$$
Observe that (as in the proof of Theorem \ref{suff result}) for each $T \geq 1$, and for any sequence $(a_t) \in A^T$,
\[ \max_{(x_t) \in \mathcal{S}^{i}(S_T)} \textnormal{Reg}^i(\mathcal{U}_T^{i},(x_t),(a_t)) \geq 0,\]
as $\mathcal{U}_T^i$ are single best-reply payoff functions with $S_T$ changes. 

Therefore, in this context, a policy $\bar{\pi}^i$ is DB($(S_T : T\in \mathbb{N})$) consistent if and only if
\begin{equation}\label{dbc alt}
g(T) = \sup_{\sigma^{-i} : \cup_{t=1}^{T} A^{t-1} \to \Delta(A^{-i})} \max_{(x_t) \in \mathcal{S}^{i}(S_T)} \mathbb{E}^{\bar{\pi}^i,\sigma^{-i}}\left[ \textnormal{Reg}^i(\mathcal{U}_T^{i},(x_t),(a_t))\right] = o(T).    
\end{equation}

Observe that player $i$'s regret is a linear function in the (joint) probability of the actions selected by the other players. Therefore, no additional power is provided by allowing $\sigma^{-i}$ to choose distributions in $\Delta(A^{-i})$, instead of deterministic profiles of actions $a^{-i} \in A^{-i}$. In the following, we focus, without loss of generality, on deterministic joint strategies, that is, $\sigma^{-i} : \cup_{t=1}^{T} A^{t-1} \to A^{-i}$. Notably, this observation also implies that the sup in \eqref{dbc alt} can be replaced by a max over deterministic joint strategies, as for each $T$, there exist only finitely many deterministic joint strategies.

As by assumption $\pi^i$ is not DB($(S_T : T \in \mathbb{N})$) consistent, we conclude that, for all $T \geq 1$ there exists a (deterministic) function $\bar{\sigma}^{-i} : \cup_{t=1}^{T} A^{t-1} \to A^{-i}$ such that 
\begin{equation}\label{cond 1}
\max_{(x_t) \in \mathcal{S}^{i}(S_T)} \mathbb{E}^{\pi^i,\bar{\sigma}^{-i}}\left[ \textnormal{Reg}^i(\mathcal{U}_T^{i},(x_t),(a_t))\right] = \Omega(T).
\end{equation}

Importantly, note that, for each $T \geq 1$, $\bar{\sigma}^{-i}$ can be equivalently expressed by specifying $|A^{-i}|$ individual strategies $\bar{\sigma}^j : \cup_{t=1}^{T} A^{t-1} \to A^{j}$, such that, for all $h \in \cup_{t=1}^{T} A^{t-1}$, $\bar{\sigma}^{-i}(h) = \prod_{j \neq i} \bar{\sigma}^j(h)$.

For each $T \in \mathbb{N}$, we construct a sequence of single-best reply games $\mathcal{G}_{T}^{\textnormal{sbr}}$ as follows:
\begin{enumerate}[label = (\roman*)]
    \item The payoff functions of player $i$ coincide with $\mathcal{U}_T^i = (u^i_t : t \in \llbracket T \rrbracket)$, that is $(\mathcal{G}_{T}^{\textnormal{sbr}})^i = \mathcal{U}_T^i$.
    \item For each $j \neq i$, player $j$'s payoff function remains constant, that is, $u^j_t = u^j$, for all $t \in \llbracket T \rrbracket$.
    \item For each $j \neq i$, player $j$'s payoff function is fully determined by the actions of the other players, that is, for all $a^{-j} \in A^{-j}$,   $u^j(x,a^{-j}) = u^j(y,a^{-j})$, for all $(x,y) \in A^j \times A^j$. 
    \item For each player $j \neq i$, and for each action $x \in A^j$, the $x$-section of $u^j$ is injective in the other players' actions, that is, 
    \[ \forall x \in A^j, \forall (a^{-j}, \bar{a}^{-j}) \in A^{-j} \times A^{-j}, \; \left( a^{-j} \neq \bar{a}^{-j} \implies u^j(x,a^{-j}) \neq u^j(x,\bar{a}^{-j}) \right).\]
\end{enumerate}
As usual, we denote $\mathcal{G}^{\textnormal{sbr}} = (\mathcal{G}^{\textnormal{sbr}}_T : T \in \mathbb{N})$. Observe that assumptions (i) and (ii) imply that $V(\mathcal{G}^{\textnormal{sbr}}_T) = S_T$ for all~$T$.
Further, (i), (ii) and (iii) together with the assumption that $\mathcal{U}^i_T$ is a sequence of single best-reply payoff functions ensure that $\mathcal{G}_{T}^{\textnormal{sbr}}$ is a sequence of single best-reply games. In particular, assumption (iii) also implies that any policy $\pi^{j}$ employed by player $j \neq i$ is DB($(T-1 : T \in \mathbb{N})$) consistent for the payoffs~$((\mathcal{G}^{\textnormal{sbr}}_T)^j : T \in \mathbb{N})$. Finally, via assumption (iv), we can infer the action played by the other players by exploiting the injectivity of $u^j$. Therefore, for each $T$ and $j \neq i$, we can implement the strategies $\bar{\sigma}^j$ (that have knowledge of history of actions played before) via some policy $\bar{\pi}^{j}$ that does not have direct access to the previous history of actions. Denote the corresponding profile by $\bar{\pi}^{-i} = (\bar{\pi}^{j})_{j \neq i}$. By using \eqref{cond 1}, for each $(x_t) \in \mathcal{S}^i(S_T)$, we have
\begin{equation}\label{cond 2}
    \mathbb{E}^{\pi^i,\bar{\pi}^{-i}}\left[ \textnormal{Reg}^i((\mathcal{G}_{T}^{\textnormal{sbr}})^i,(x_t),(a_t))\right] = \mathbb{E}^{\pi^i,\bar{\sigma}^{-i}}\left[ \textnormal{Reg}^i(\mathcal{U}_T^{i},(x_t),(a_t))\right] = \Omega(T),
\end{equation}

that is,  $\limsup_{T \to \infty} \frac{1}{T} \max_{(x_t) \in \mathcal{S}^{i}(S_T)} \mathbb{E}^{\pi^i,\bar{\pi}^{-i}}\left[ \textnormal{Reg}^i((\mathcal{G}_{T}^{\textnormal{sbr}})^i,(x_t),(a_t))\right] >0.$ To shorten notation, denote by $L$ the value of this $\limsup$. By using Condition \eqref{cond 2} and the fact that payoffs are bounded, we conclude that there exists~$(T_r)_{r \in \mathbb{N}}$ (a subsequence of possible horizons) such that
\[ \frac{1}{T_r} \max_{(x_t) \in \mathcal{S}^{i}(S_T)} \mathbb{E}^{\pi^i,\bar{\pi}^{-i}}\left[ \textnormal{Reg}^i((\mathcal{G}_{T_r}^{\textnormal{sbr}})^i,(x_t)_{t=1}^{T_r},(a_t)_{t=1}^{T_r})\right] \rightarrow L, \quad \textnormal{as $r \to \infty$}.\]

Let $R \in (0,L)$. Then, there is $k(R) \geq 1$, such that, for all $r \geq k(R)$, it holds

\[ \frac{1}{T_r} \max_{(x_t) \in \mathcal{S}^{i}(S_{T_r})} \mathbb{E}^{\pi^i,\bar{\pi}^{-i}}\left[ \textnormal{Reg}^i((\mathcal{G}_{T_r}^{\textnormal{sbr}})^i,(x_t)_{t=1}^{T_r},(a_t)_{t=1}^{T_r})\right] \geq R.\]

Fix $r \geq k(R)$ and the horizon length $T_r$. Denote the batches where the stage game remains constant by $(\mathcal{T}_{(k)})_{k \in [V_T+1]}$, and denote the corresponding Hannan set and payoff of player $i$ by~$\mathcal{H}_k$ and $u_{(k)}^i$. Similarly, we denote by $\bar{\delta}^{(\pi^i,\bar{\pi}^{-i})}_{(k)}$ the distribution of play induced by $(\pi^i,\bar{\pi}^{-i})$ when restricted to $\mathcal{T}_{(k)}$. We also denote by 
$$\textnormal{dev}_{(k)} := \left( u_{(k)}^i(d_{(k)}^i,a^{-i}) - u_{(k)}^i(a) \right)_{a \in A}$$
the vectors of possible deviations to the single best-reply (for player $i$) $d_{(k)}^i$ over the batch $\mathcal{T}_{(k)}$. Finally, for each batch $\mathcal{T}_{(k)}$, let 
\[ q_{(k)} \in \textnormal{arg min}_{q \in \mathcal{H}_k} \textnormal{d}_p(\bar{\delta}^{(\pi^i,\bar{\pi}^{-i})}_{(k)}, \mathcal{H}_k),\]

be any distribution in the $k$-th Hannan set with minimum distance from the restricted distribution of play $\bar{\delta}^{(\pi^i,\bar{\pi}^{-i})}_{(k)}$. Let $Z := |A|$. The following holds 
\begin{equation}\label{linear tracking error}
    \begin{aligned}
    T_r R &\overset{(a)}{\leq} 
    \sum_{k = 1}^{S_{T_r} + 1} \sum_{t \in \mathcal{T}_{(k)}} \mathbb{E}^{\pi^i,\bar{\pi}^{-i}} \left[ u^i_t(d_{(k)}^i,a_{t}^{-i}) - u^i_t(a_t) \right] \\
    &= \sum_{k = 1}^{S_{T_r} + 1} |\mathcal{T}_{(k)}| \sum_{a \in A} \left( u_{(k)}^i(d_{(k)}^i,a^{-i}) - u_{(k)}^i(a) \right) \bar{\delta}^{(\pi^i,\bar{\pi}^{-i})}_{(k)}(a)  \\
    &= \sum_{k = 1}^{S_{T_r} + 1} |\mathcal{T}_{(k)}|\textnormal{dev}_{(k)}^\top \bar{\delta}^{(\pi^i,\bar{\pi}^{-i})}_{(k)}\\
    &\overset{(b)}{\leq} \sum_{k = 1}^{S_{T_r} + 1} |\mathcal{T}_{(k)}| \textnormal{dev}_{(k)}^\top \left( \bar{\delta}^{(\pi^i,\bar{\pi}^{-i})}_{k} - q_{(k)} \right) \\
    &\overset{(c)}{\leq} \sum_{k = 1}^{S_{T_r} + 1} |\mathcal{T}_{(k)}| \| \textnormal{dev}_{(k)} \|_{q} \|\bar{\delta}^{(\pi^i,\bar{\pi}^{-i})}_{(k)} - q_{(k)}\|_p \\
    &\overset{(d)}{\leq} 2 M Z^{1/q} \sum_{k = 1}^{S_{T_r} + 1} |\mathcal{T}_{(k)}| \|\bar{\delta}^{(\pi^i,\bar{\pi}^{-i})}_{(k)} - q_{(k)}\|_p \\
    &\overset{(e)}{=} 2 M Z^{1/q} \sum_{k = 1}^{V_{T_r} + 1} |\mathcal{T}_{(k)}| \textnormal{d}_p\left(\bar{\delta}^{(\pi^i,\bar{\pi}^{-i})}_{(k)}, \mathcal{H}_k \right) = 2M Z^{1/q} \; err_p(\mathcal{G}_{T_r}^{\textnormal{sbr}},(\pi^i,\bar{\pi}^{-i}),\mathcal{H}),
\end{aligned}
\end{equation}

where ($a$) holds by recalling that $d_{(k)}^i$ is the single best-reply for player $i$ over the $k$-th batch $\mathcal{T}_{(k)}$. Therefore, for the dynamic benchmark (with $S_{T_r}$ action changes), playing $x_t = d_{(k)}^i$, when $t \in \mathcal{T}_{(k)}$ maximizes player $i$'s cumulative regret. $(b)$ holds because $q_{(k)} \in \mathcal{H}_k$, so that $\textnormal{dev}_{(k)}^\top q_{(k)} \leq 0$, by definition of Hannan equilibrium. $(c)$ holds by applying H\"{o}lder's inequality. ($d$) holds by recalling that $u_{(k)}^i \in [-M,M]$, for $k \in \llbracket V_T + 1 \rrbracket$, and ($e$) holds by definition of $q_{(k)}$. 

The proof of the proposition is concluded by observing that \eqref{linear tracking error} implies that
\[ \underset{T \to \infty}{\limsup} \; \frac{err_p(\mathcal{G}^{\textnormal{sbr}}_T,(\pi^i,\bar{\pi}^{-i}),\mathcal{H})}{T} \geq \frac{R}{2 M Z^{1/q}} > 0.\]
\end{proof}

\subsection{Linear tracking error of Hannan consistent policies}\label{Simulations for Example 2}

In the context of Example \ref{pricing game example}, we illustrate that Hannan consistent policies can lead to linear tracking error even when the other players are playing their single best-reply actions at each time. Recall that in Example \ref{pricing game example} there is only on stage game change, that is, $V_T = 1$. We assume that seller one is using Exp3 (the prototypical Hannan consistent policy), while seller two always plays his dominant actions. We refer to player two's policy by $\pi^2$.

\begin{figure}[H]
    \centering
    \begin{minipage}[t]{0.48\textwidth}  
        \centering
        \includegraphics[width=\linewidth,height=5cm]{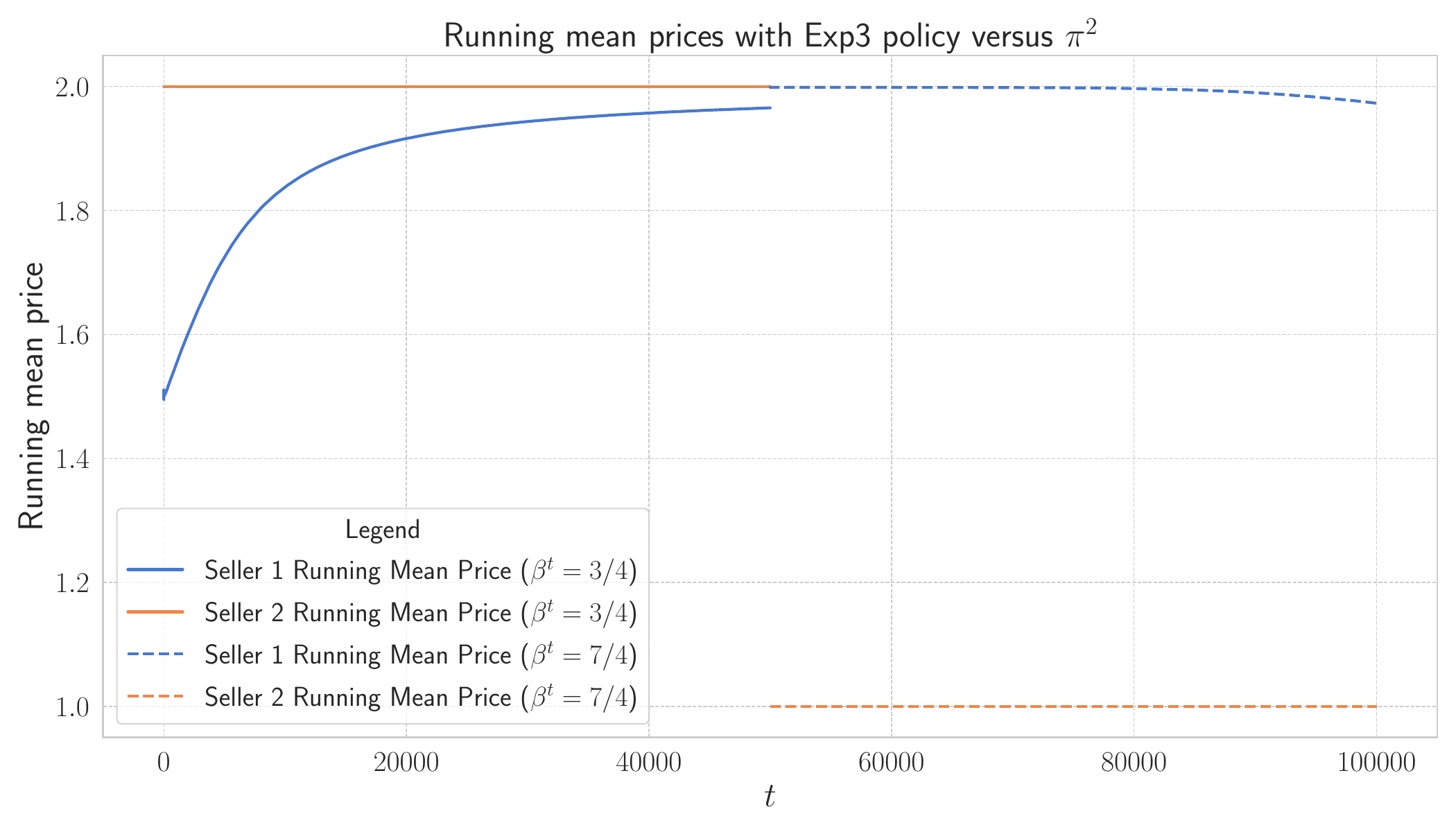}
        \caption*{(a) \footnotesize Running mean prices for the two sellers (restricted to each time batch where the game is constant) when $T =~10^5$.}
        \label{fig:rmp_Exp3}
    \end{minipage}
    \hfill
    \begin{minipage}[t]{0.48\textwidth}  
        \centering
        \includegraphics[width=\linewidth,height=5cm]{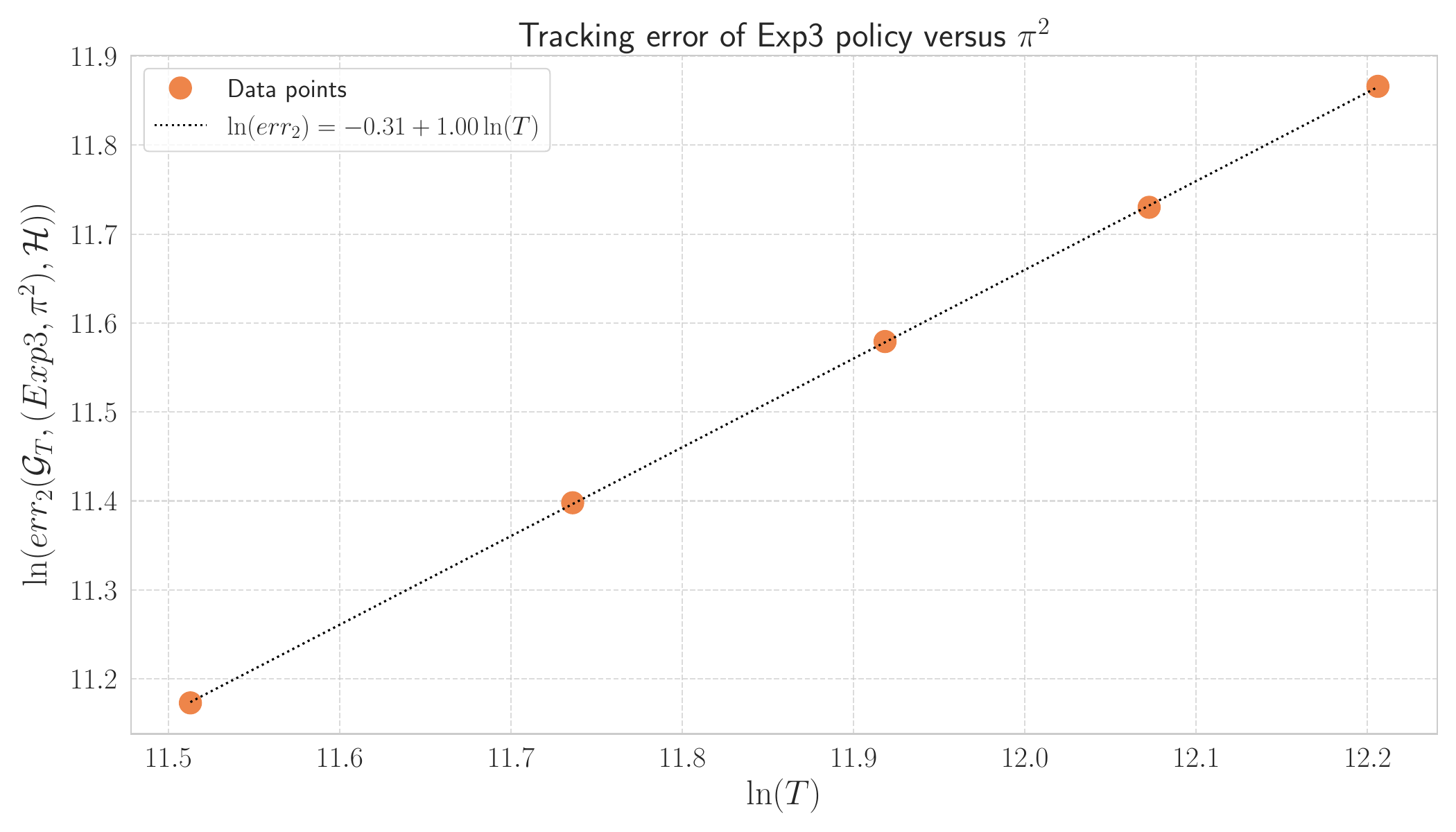}
        \caption*{(b) \footnotesize Tracking error as a function of the horizon length $T$.}
        \label{fig:tracking_Exp3}
    \end{minipage}
    \caption{\footnotesize Pricing game of Example \ref{pricing game example} with $N=1$. While seller two always play his dominant actions, seller one employs the standard Exp3 policy with an exploration parameter $\gamma_T = \sqrt{2 \ln(2)/((e-1)T)}$. The distribution of play is estimated by averaging the results over $750$ simulations. We recall that the Hannan sets are $\mathcal{H}(\Gamma_1) = \{\delta_{p_h,p_h}\}$ and $\mathcal{H}(\Gamma_2) = \{\delta_{p_l,p_l}\}$, where $p_h = 2$ and $p_l = 1$.}
    \label{fig:comparison_Exp3}
\end{figure}

Observe that, in Example \ref{pricing game example}, an increase in the price sensitivity of customers leads to a change in the optimal way to play the game: While with a low price sensitivity ($\beta^t = 3/4$) charging $p_h = 2$ is the dominant action for both sellers, when~$\beta^t$ increase to $7/4$, $p_l = 1$ becomes the dominant price. Thus, a dynamic benchmark that allows one action change can indeed play the optimal sequence of actions that can be selected in hindsight. On the other hand, static prices are sub-optimal in at least one batch in which the stage game is constant, and their performance gap relative to a dynamic benchmark that allows one action change grows linearly with $T$. Notably, the behavior of the Exp3 policy, designed to guarantee only the performance of the best static action, reflects this performance gap (see Figure \ref{fig:comparison_Exp3} (a)). While, on average, seller one's price approaches the dominant $p_h$ in the first batch, as~$p_h$ is already the single best action that can be selected in hindsight, it adapts to the evolved market conditions in the second batch too slowly. As a result, the tracking error is $\Theta(T)$ (see Figure \ref{fig:comparison_Exp3} (b)) despite seller two consistently playing his dominant actions, thereby behaving ``compatibly'' with the unique equilibrium in the Hannan sets.

\subsection{Proofs of \S \ref{subsect: general-sum games}}\label{proofs time-varying general sum-games}

\begin{proof}[Proof of Theorem \ref{thm: restarting = sublinear track error}]

Fix $T \in \mathbb{N}$ and let  

\[ V(\mathcal{G}_T) = \sum_{t=1}^{T-1} \mathbf{1}(\Gamma_t \neq \Gamma_{t+1})\]
\noindent
denoting the variation of the sequence of games $\mathcal{G}_{T} = (\Gamma_t : t \in \llbracket T \rrbracket)$. To ease the exposition, we abuse notation, and write the more compact $V_T$ instead of $V(\mathcal{G}_T)$.

We consider the partition of the horizon of play induced by the intervals $(\mathcal{T}_{(j)})_{j=1}^{V_T+1}$ where the stage game is constant. Formally, let $t_0 := 0$. We  define recursively,~$t_k := \max\{ t \in \llbracket T \rrbracket : \Gamma_t = \Gamma_{t_{k-1}+1}\}$ and $\mathcal{T}_{(k)} := \llbracket t_{k-1}+1,t_{k} \rrbracket$, for $k \in \llbracket V_T+1 \rrbracket$. For each batch $\mathcal{T}_{(k)}$, we denote the regret for player $i$, relative to the single best action in hindsight, over that batch by

\[ \textnormal{R}_k^i := \max_{x \in A_i} \; \mathbb{E}^{\pi} \left[ \sum_{t \in \mathcal{T}_{(k)}} u_{(k)}^{i}(x,a^{-i}_t) - u_{(k)}^{i}(a_t) \right].\]
\noindent
Similarly, we denote by $\textnormal{R}_{k} := \max_{i \in \llbracket N \rrbracket} \textnormal{R}_{k}^i$ the largest regret over the $k$-th batch. Then, we have

\begin{equation}
    \begin{aligned}
    err_p(\mathcal{G}_T,\pi, \mathcal{H}) = \sum_{k = 1}^{V_T+1} |\mathcal{T}_{(k)}| \textnormal{d}_{p}(\bar{\delta}^{\pi}_{(k)}, \mathcal{H}_k) 
    &\overset{(a)}{=} \sum_{k : R_{k} > 0} |\mathcal{T}_{(k)}| \textnormal{d}_{p}(\bar{\delta}^{\pi}_{(k)}, \mathcal{H}_k) \\
    &\overset{(b)}{\leq} \sum_{k : R_{k} > 0} \textnormal{const}(\Gamma_{(k)}) R_{k} \\
    &\leq \textnormal{const}(\mathcal{G}) \sum_{k : R_{k} > 0} \sum_{i \in \llbracket N \rrbracket} R^i_{k} \overset{(c)}{=} o(T),
    \end{aligned}
\end{equation}

where ($a$) holds because if $R_{k} \leq 0$ then $\bar{\delta}^{\pi}_{(k)} \in \mathcal{H}_{k}$, and so $\textnormal{d}_{p}(\bar{\delta}^{\pi}_{(k)}, \mathcal{H}_k)  = 0$. ($b$) holds by applying Lemma \ref{lemma: regret-distance} with $\epsilon = R_k/|\mathcal{T}_{(k)}|$. Finally, ($c$) holds because by assumption, for each player $i$,

 \[\sum_{k=1}^{V(\mathcal{G}_T) + 1} \max_{x \in A^i} \mathbb{E}^{\pi} \left[ \sum_{t \in \mathcal{T}_{(k)}}  \left(u_{(k)}^{i}(x,a^{-i}_t) - u^{i}_{(k)}(a_t^{-i}) \right)\right]_{+} = o(T).\]

We next show that both the restarted policies of Proposition \ref{prop: restarting = sublinear external-internal dynamic regret} and the Exp3S policy of \cite{auer2002nonstochastic} satisfy condition \ref{clipped dbc} in the statement of Theorem \ref{thm: restarting = sublinear track error}. Therefore, suppose that player $i$ is playing one of the following policies.

\paragraph{Restarted Hannan consistent policy.}{
By using Equation \eqref{split regret}, and Inequalities \eqref{split regret 1}, and \eqref{split regret 2} in the proof of Proposition \ref{prop: restarting = sublinear external-internal dynamic regret}, we obtain that for each $\mathcal{T}_{(k)}$, 
\[ \max_{x \in A^i}  \mathbb{E}^{\pi}\left[\sum_{t \in \mathcal{T}_{(k)}} u^i_{(k)}(x_t,a^{-i}_t) - u^i_{(k)}(a_t) \right] \leq |\mathcal{T}_{(k)}| \frac{g^i(\Delta^{i}_T)}{\Delta^{i}_T} + 4M \Delta^{i}_T.\]

In particular, note that in the left-hand-side, we can clip the expected value to be non-negative, and the inequality continues to hold. Finally, by summing over all batches, we obtain 

\[ \sum_{k = 1}^{V_T+1} \max_{x \in A^i}  \mathbb{E}^{\pi}\left[\sum_{t \in \mathcal{T}_{(k)}} u^i_{(k)}(x,a^{-i}_t) - u^i_{(k)}(a_t) \right]_{+} \leq T \frac{g^i(\Delta^{i}_T)}{\Delta^{i}_T} + 4M \Delta^{i}_T (V_T + 1),\]

where the right-hand-side is sub-linear if $\Delta^i_T = o(T/(V_T+1))$, and $\Delta^i_T \to \infty$, as $T \to \infty$.

}

\paragraph{Exp3S.}{The pseudo-code of the Exp3S policy is presented in Appendix \ref{structure of DB policies}. By using Equations (26) and (27) in \cite{auer2002nonstochastic}, we obtain that for each batch $\mathcal{T}_{(k)}$,

\[ \max_{x \in A^i}  \mathbb{E}^{\pi}\left[\sum_{t \in \mathcal{T}_{(k)}} u^i_{(k)}(x_t,a^{-i}_t) - u^i_{(k)}(a_t) \right] \leq (e-1)\gamma_T |\mathcal{T}_{(k)}| + \frac{K^i \ln(K^i / \alpha_T)}{\gamma_T} + \frac{e \alpha_T K^i |\mathcal{T}_{(k)}|}{\gamma_T}.\]

}

Again, note that in the left-hand-side, we can clip the expected value to be non-negative, and the inequality continues to hold. Therefore, by summing over all batches, we obtain 

\[ \sum_{k = 1}^{V_T+1} \max_{x \in A^i}  \mathbb{E}^{\pi}\left[\sum_{t \in \mathcal{T}_{(k)}} u^i_{(k)}(x,a^{-i}_t) - u^i_{(k)}(a_t) \right]_{+} \leq \frac{K^i((V_T+1)\ln(K^i/\alpha_T) + e \alpha_T T)}{\gamma_T} + (e-1)\gamma_T T.\]

Finally, observe that if $\gamma_T = o(1)$, $\alpha_T/\gamma_T = o(1)$, and $\ln(K^i /  \alpha_T)/\gamma_T = o(T/(V_T+1))$, then the right-hand-side is sub-linear.

\end{proof}

\section{Proofs of \S \ref{PoA Section}}\label{proofs of PoA section }

\subsection{Proofs of \S \ref{Welfare in repeated games}}\label{Proofs of welfare in repeated games}

\begin{proof}[Proof of Corollary \ref{Cor:1}]

Fix any profile of DB($C$) consistent policies $(\pi^i)_{i =1}^{N} \in \mathcal{P}^N_{DB}(C)$. Recall that $A$ is endowed with the lexicographical order. We rewrite the average social welfare up to time $T$ as 

\begin{equation*}
    \begin{aligned}
        \frac{1}{T} \sum_{t=1}^{T} \mathbb{E}^{\pi}\left[W(a_t) \right] &= \frac{1}{T} \sum_{t=1}^{T} \sum_{a \in A} W(a)  \mathbb{E}\left[ \delta_{a_t}(a) \right] \\
        &= \sum_{a \in A} W(a)  \mathbb{E}^{\pi} \left[\frac{1}{T} \sum_{t=1}^{T} \delta_{a_t}(a) \right] = w^\top q_T,
    \end{aligned}
\end{equation*}
\noindent
where we are denoting $w := \left(W(a)\right)_{a \in A}$ and $q_{T} := \left( \mathbb{E}^{\pi}\left[ \frac{1}{T} \sum_{t=1}^{T} \delta_{a_t}(a) \right] \right)_{a \in A}$.

\noindent
Define $\Pi(\Gamma) := \{w^\top q \mid q \in \mathcal{H}\}.$ Note that $\Pi(\Gamma)$ is closed and bounded as $\mathcal{H}$ is closed and bounded, and $w^\top q$ is continuous. By applying H\"older's inequality and Theorem \ref{Hannan Set Survives}, we obtain 
\begin{equation}\label{conv}
0 \leq \textnormal{d}_1(w^{\top}q_T,\Pi(\Gamma)) = \underset{q \in \mathcal{H}}{\text{min}} |w^\top q_{T} - w^\top q| \leq \| w\|_{\infty} \underset{q \in \mathcal{H}}{\text{min}} \|q_{T} -  q\|_{1} = \| w\|_{\infty} \text{d}_1(q_{T},\mathcal{H}) \longrightarrow 0.
\end{equation}

\noindent
As $\Pi(\Gamma)$ is compact, \eqref{conv} implies that  any convergent sub-sequence $\left( w^\top q_{T_K} \right)_{K \geq 1}$ converges to $\underset{a \sim q}{\mathbb{E}}[W(a)]$ for some~distribution $q \in \mathcal{H}$. Thus, we have

\begin{equation*}
\underset{T \rightarrow +\infty}{\text{liminf}} \; \frac{1}{T} \sum_{t=1}^{T} \mathbb{E}^{\pi} \left[W(a_t) \right] \geq \min_{q \in \mathcal{H}} \mathbb{E}_{a \sim q} \left[  W(a) \right], 
\end{equation*}

As $\pi$ was arbitrary, we conclude

\begin{equation}\label{easy version}
 \underset{(\pi^i)_{i =1}^{N} \in \mathcal{P}_{DB}^N(C)}{\text{inf}} \; \underset{T \rightarrow +\infty}{\text{liminf}} \; \frac{1}{T} \sum_{t=1}^{T} \mathbb{E}^{\pi} \left[W(a_t) \right] \geq \min_{q \in \mathcal{H}} \mathbb{E}_{a \sim q} \left[  W(a) \right]. 
\end{equation}

\noindent
On the other hand, 
\begin{equation}\label{hard version}
    \begin{aligned}
    \underset{(\pi^i)_{i =1}^{N} \in \mathcal{P}_{DB}^N(C)}{\text{inf}} \; \underset{T \rightarrow +\infty}{\text{liminf}} \; \frac{1}{T} \sum_{t=1}^{T} \mathbb{E}^{\pi} \left[W(a_t) \right] &\overset{(a)}{\leq}   \inf_{q \in \mathcal{H} \cap \Delta(A)_{\mathbb{Q}}} \mathbb{E}_{a \sim q} \left[W(a) \right] \\
    &\overset{(b)}{=} \min_{q \in \mathcal{H}} \mathbb{E}_{a \sim q} \left[W(a) \right],
    \end{aligned}
\end{equation}
\noindent
where ($a$) holds because any rational Hannan equilibrium can be supported as limit of the empirical distribution of play generated by some trigger-type policies, which are DB($C$) consistent by Proposition \ref{Direction 1}. The existence of a rational Hannan equilibrium is guaranteed by our assumption that the payoff functions have rational values. ($b$) holds by using the following argument. First, as the welfare is linear in $q$ and $\mathcal{H}$ is a polytope, the minimum is achieved at some vertex of $\mathcal{H}$. Further, our assumption that the payoff functions have rational values implies that the vertices of $\mathcal{H}$ have rational coordinates. Therefore, by combining equations (\ref{easy version}) and (\ref{hard version}), we obtain
\[ \underset{(\pi^i)_{i =1}^{N} \in \mathcal{P}_{DB}^N(C)}{\text{inf}} \; \underset{T \rightarrow +\infty}{\text{liminf}} \; \frac{1}{T} \sum_{t=1}^{T} \mathbb{E}^{\pi} \left[W(a_t) \right] = \min_{q \in \mathcal{H}} \mathbb{E}_{a \sim q} \left[W(a) \right].\]
\end{proof}

\subsection{Proofs of \S \ref{sect Welfare in time-varying games}}\label{Proofs of welfare in time-varying games}

\begin{proof}[Proof of Corollary \ref{dyn poa without smoothness}]
First of all, observe that
\begin{equation}\label{trivial bound}
    \begin{aligned}
        \sum_{t=1}^{T} \mathbb{E}^{\pi}\left[ W_t (a) \right] &= \sum_{k=1}^{V_T + 1} |\mathcal{T}_{(k)}| \sum_{a \in A} W_{(k)}(a) \bar{\delta}^{\pi}_{(k)}(a) \pm \sum_{k=1}^{V_T + 1} |\mathcal{T}_{(k)}| \sum_{a \in A} W_{(k)}(a) q_{(k)}(a) \\
        &\geq \sum_{k=1}^{V_T + 1} |\mathcal{T}_{(k)}| \min_{q \in \mathcal{H}(\Gamma_{(k)})} \; \underset{a \sim q}{\mathbb{E}}\left[ W_{(k)}(a)\right] + \underbrace{\sum_{k=1}^{V_T + 1} |\mathcal{T}_{(k)}| \sum_{a \in A} W_{(k)}(a) \left( \bar{\delta}^{\pi}_{(k)}(a) - q_{(k)}(a)\right)}_{\textnormal{Werr}_T},
    \end{aligned}
\end{equation}

where, for each $k \in \llbracket V_T + 1\rrbracket$, $q_{(k)} \in \argmin \{ q \in \mathcal{H}(\Gamma_{(k)}) : \|q - \bar{\delta}^{\pi}_{(k)}\|_p \}$. Divide both sides of \eqref{trivial bound} by~$T$. Then, by applying Theorem \ref{thm: restarting = sublinear track error} together with our assumption that $\sup_{T} \max_{W_t \in \mathcal{W}_{T}} \| W_t \|_{p} \leq \bar{W}$, we obtain that $\textnormal{Werr}_T/T \to 0$, as $T \to \infty$. This establishes that 

\[ \liminf_{T \to \infty} \frac{1}{T} \sum_{t=1}^{T} \mathbb{E}^{\pi}\left[ W_t (a) \right] \geq \liminf_{T \to \infty} \frac{1}{T} \sum_{t=1}^{T} \min_{q \in \mathcal{H}(\Gamma_{t})} \; \underset{a \sim q}{\mathbb{E}}\left[ W_{t}(a)\right].\]

After re-arranging terms, the result follows.

\end{proof}

\begin{proof}[Proof of Proposition \ref{DynPoA}]
Let $\pi$ be any profile of DB($C$) consistent policies, and let $(x_t: t \in \llbracket T \rrbracket) \in A^T$ be such that, for $i \in \llbracket N \rrbracket$, $(x_t^i: t \in \llbracket T \rrbracket) \in \mathcal{S}^i(C_T)$. Further, let $g(T) = \max_{i \in \llbracket N \rrbracket}g^i(T)$. We have
\begin{equation}\label{dynamic lambda-mu eq}
    \begin{aligned}
    \sum_{i = 1}^{N} \sum_{t=1}^{T} \mathbb{E}^{\pi}\left[ u^{i}_{t}(a_{t})\right] &\overset{(a)}{\geq} \sum_{i = 1}^{N} \sum_{t=1}^{T} \mathbb{E}^{\pi}\left[ u^{i}_{t}(x^i_t,a_{t}^{-i})\right] - N g(T) \\
    &\overset{(b)}{\geq} \lambda \sum_{t=1}^{T}  W_t(x_t) -\mu  \sum_{t=1}^{T} \mathbb{E}^{\pi}\left[ W_t(a_{t})\right] - N g(T),
    \end{aligned}
\end{equation}
\noindent
where ($a$) holds because $\pi^i$ is DB($C$) consistent for each player $i$, and ($b$) holds because, by assumption, $\Gamma_t$ is ($\lambda,\mu$)-smooth for all $t \in \llbracket T \rrbracket$.

As the sequence $(x_t)_{t=1}^{T}$ was arbitrary, by using \eqref{dynamic lambda-mu eq}, and the assumptions that 
\[ \max_{(x_t)_{t=1}^{T} \in A^{T}: \forall i, \; (x^i_t)_{t=1}^{T} \in \mathcal{S}^i(C_T)} \sum_{t=1}^{T} W_t(x_t) = \beta(C_T) \sum_{t=1}^{T} \max_{x \in A} W_t(x),\]
\noindent
and
\[ W_t(a) \geq \sum_{i=1}^{N}  u^{i}_{t}(a), \quad \textnormal{for all $a$ and $t$},\]

after re-arranging terms we obtain 
\[ \sum_{t=1}^{T} \mathbb{E}^{\pi}\left[ W_t(a_{t})\right] \geq \frac{\lambda \beta(C_T)}{1+\mu} \sum_{t=1}^{T} \max_{x_t \in A} W_t(x_t) - \frac{N}{1+\mu} g(T),\]
\noindent
where $g(T) = o(T)$. 
\end{proof}

\section{Proofs of \S \ref{no-internal regret section}}\label{proofs of no-internal regret section}

\begin{proposition}\label{restarting proposition for internal dbc}
Let $\mathcal{A}^i$ be a IDB(0) policy for all payoff functions. Denote by~$\pi^i$ the policy obtained by restarting $\mathcal{A}^i$ every~$\Delta_T$ periods when the horizon has length $T$. If $C_T = o(T)$, then, for any~$\Delta_T$ such that~$\Delta_T = o(T/(C_T+1))$ and~$\Delta_T \rightarrow \infty$ as $T \rightarrow \infty$, we have that $\pi^i$ is internally dynamic benchmark consistent for all payoff functions with respect to the sequence~$C$.

\end{proposition}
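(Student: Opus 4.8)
The plan is to adapt the restart argument behind Proposition~\ref{prop: restarting = sublinear external-internal dynamic regret} to the internal-regret benchmark. Fix a horizon length $T$, an arbitrary payoff sequence $\mathcal{U}^i$, and an arbitrary (possibly correlated, history-dependent) strategy $\sigma^{-i}$ of the opponents. Partition $\llbracket T \rrbracket$ into the $B := \lceil T/\Delta_T \rceil$ consecutive restart blocks $\mathcal{B}_1, \dots, \mathcal{B}_B$, each of length at most $\Delta_T$. By construction of $\pi^i$, on each block $\mathcal{B}_b$ the policy acts as a fresh instance of $\mathcal{A}^i$ run over a horizon of length $|\mathcal{B}_b|$ that ignores all play before $\mathcal{B}_b$. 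Conditioning on the realized play prior to $\mathcal{B}_b$, the continuation of $\sigma^{-i}$ is a valid (history-dependent) opponents' strategy over that shorter horizon, and the restriction of $\mathcal{U}^i$ to $\mathcal{B}_b$ is an arbitrary payoff sequence of that length; hence IDB($0$)-consistency of $\mathcal{A}^i$ \emph{for all payoff functions} gives, after taking expectations over the pre-block play, a per-block internal-regret bound of $g^i(|\mathcal{B}_b|) \le g^i(\Delta_T)$, where $g^i$ is the sublinear function witnessing IDB($0$)-consistency of $\mathcal{A}^i$ and may be taken non-decreasing.

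Next I would fix an arbitrary partition $\boldsymbol{I} = (I_1, \dots, I_c) \in \mathcal{I}_{C_T}$ (so $c \le C_T+1$) and compare it to the restart partition. Call a restart block \emph{clean} if it lies inside a single $I_k$ and \emph{dirty} otherwise; since $\boldsymbol{I}$ has at most $C_T$ internal endpoints and every dirty block contains one in its interior, there are at most $C_T$ dirty blocks. For each $I_k$, the sum $\sum_{t \in I_k} \mathbf{1}(a^i_t = x)\big(u^i_t(y,a^{-i}_t) - u^i_t(x,a^{-i}_t)\big)$ decomposes over the clean blocks contained in $I_k$ and the pieces of dirty blocks met by $I_k$; using $\max_{(x,y)} \mathbb{E}[\sum_j Z_j] \le \sum_j \max_{(x,y)} \mathbb{E}[Z_j]$, I would bound each clean block by $g^i(\Delta_T)$ (the per-block bound above) and each dirty piece $P$ trivially by $2M|P|$, since $|u^i_t| \le M$. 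Summing over $k$, the clean blocks are disjoint so there are at most $B$ of them, while the dirty pieces reassemble into the (at most $C_T$) dirty blocks, whose total length is at most $C_T\Delta_T$. This would give, uniformly over $\boldsymbol{I} \in \mathcal{I}_{C_T}$ and over $\sigma^{-i}$,
\[
\max_{\boldsymbol{I} \in \mathcal{I}_{C_T}} \sum_{k=1}^{|\boldsymbol{I}|} \max_{(x,y) \in A^i \times A^i} \mathbb{E}^{\pi^i,\sigma^{-i}}\Big[ \sum_{t \in I_k} \mathbf{1}(a^i_t = x)\big(u^i_t(y,a^{-i}_t) - u^i_t(x,a^{-i}_t)\big) \Big] \le \Big\lceil \tfrac{T}{\Delta_T} \Big\rceil g^i(\Delta_T) + 2M\, C_T\, \Delta_T =: g(T).
\]

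Finally I would check that $g(T) = o(T)$. Since $\Delta_T \to \infty$ and $g^i(n)/n \to 0$, we have $g^i(\Delta_T)/\Delta_T \to 0$, so $\lceil T/\Delta_T \rceil g^i(\Delta_T) \le \tfrac{T}{\Delta_T} g^i(\Delta_T) + g^i(\Delta_T) = o(T)$ (the second summand is $o(\Delta_T) = o(T)$ because $\Delta_T = o(T)$); and $2M C_T \Delta_T \le 2M (C_T+1)\Delta_T = o(T)$ exactly because $\Delta_T = o(T/(C_T+1))$. As $T$, $\mathcal{U}^i$, and $\sigma^{-i}$ are arbitrary, this shows $\pi^i$ satisfies Definition~\ref{no-internal regret} for all payoff functions with respect to $C$. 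The one point requiring care is the bookkeeping in the middle step — a clean statement of the clean/dirty decomposition (handling dirty blocks that carry several $\boldsymbol{I}$-endpoints, and intervals $I_k$ contained in a single restart block) and the justification that the per-block IDB($0$) bound survives conditioning on the pre-block history; everything else is the same monotonicity-and-asymptotics accounting as in the external (DB-consistency) case.
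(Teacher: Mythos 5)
Your proof is correct and follows essentially the same route as the paper's: restart $\mathcal{A}^i$ on blocks of length $\Delta_T$, invoke the IDB($0$) guarantee block-by-block (your conditioning-on-pre-block-history justification is the right one, and is in fact spelled out more carefully than in the paper), and bound the blocks straddling the partition $\boldsymbol{I}$ trivially by $2M$ per period. The only difference is bookkeeping of the boundary blocks — you count at most $C_T$ dirty blocks globally (giving $2MC_T\Delta_T$), whereas the paper counts at most two per interval $I_k$ (giving $4M(C_T+1)\Delta_T$) — which changes only constants and not the sublinearity conclusion.
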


\begin{proof}
Recall that for integers $1 \leq a \leq b$, we use the notation $\llbracket a,b\rrbracket := \{a,\ldots,b\}$ and $\llbracket a \rrbracket := \llbracket 1,a \rrbracket$. For convenience, we also define $\llbracket a,b\rrparen := \llbracket a,b-1 \rrbracket$. Let $\mathcal{A}^i$ be a IDB($0$) consistent policy. Then, there exists a sub-linear function $g$ such that, for all $T \geq 1$ and $\sigma^{-i}: \bigcup_{t \in \llbracket T \rrbracket} A^{t-1} \rightarrow \Delta(A^{-i})$,

\[ \max_{(x,y) \in A^i \times A^i} \mathbb{E}^{\pi^i,\sigma^{-i}} \left[ \sum_{t=1}^{T} \mathbf{1}(a^{i}_t = x) \left(u^{i}_{t}(y,a^{-i}_t) - u^{i}_{t}(x,a_t^{-i}) \right)\right] \leq  g(T). \]
\noindent
Fix $T \in \mathbb{N}$, $\sigma^{-i}: \bigcup_{t \in \llbracket T \rrbracket} A^{t-1} \rightarrow \Delta(A^{-i})$, $\boldsymbol{I} = (I_k)_{k=1}^{c} \in \mathcal{I}_{C_T}$, for some $c \in \llbracket C_T+1 \rrbracket$, and $((x_k,y_k))_{k=1}^{c} \in (A^i \times A^i)^c$. 

The restarting of $\mathcal{A}^i$ takes place over batches of size $\Delta_T$ and it is performed in total $m = \lceil T/\Delta_T \rceil$ times. We denote the batches identified via this restarting procedure by $(\mathcal{B}_i)_{i=1}^{m}$, where $\mathcal{B}_1 := \llbracket 1,\Delta_T \rrbracket$, $\mathcal{B}_2 := \llbracket \Delta_T+1,2\Delta_T \rrbracket$,$\ldots$, $\mathcal{B}_m := \llbracket(m-1)\Delta_T+1,\min \{m \Delta_T,T\} \rrbracket$.

Fix an interval $ I_{k} = \llbracket a_k, b_k \rrbracket \in \boldsymbol{I}$, for some $k \in \llbracket c \rrbracket$, with $a_k < b_k$. We separate the batches that intersect the interval $I_k$ into two groups: the ones that are fully included in $\llbracket a_k, b_k \rrparen$ and the ones that are not fully included in~$\llbracket a_k, b_k \rrparen$. Specifically, define $\mathcal{Q}_k := \{ j \in [m] : \mathcal{B}_{j} \subseteq \llbracket a_k, b_k \rrparen \}$ and $\mathcal{W}_k := \{ j \in [m] : \mathcal{B}_{j} \cap  \llbracket a_k, b_k \rrbracket \neq \emptyset, \textnormal{ but } \mathcal{B}_{j} \nsubseteq  \llbracket a_k, b_k \rrparen \}$. Then, it holds 

\begin{equation}\label{split regret}
    \begin{aligned}
        \mathbb{E}^{\pi^i,\sigma^{-i}} \left[ \sum_{t \in I_k} \mathbf{1}(a^{i}_t = x_k)\left(u^{i}_{t}(y_k,a^{-i}_t) - u^{i}_{t}(a_t) \right)\right] &= \underbrace{\sum_{j \in \mathcal{Q}_k} \mathbb{E}^{\mathcal{A}^i,\sigma^{-i}} \left[ \sum_{t \in \mathcal{B}_j} \mathbf{1}(a^{i}_t = x_k)\left(u^{i}_{t}(y_k,a^{-i}_t) - u^{i}_{t}(a_t) \right)\right]}_{Q_k}\\
        &+ \underbrace{\sum_{j \in \mathcal{W}_k} \mathbb{E}^{\pi^i,\sigma^{-i}} \left[ \sum_{t \in \mathcal{B}_j \cap I_k} \mathbf{1}(a^{i}_t = x_k)\left(u^{i}_{t}(y_k,a^{-i}_t) - u^{i}_{t}(a_t) \right)\right]}_{W_k},
    \end{aligned}
\end{equation}
\noindent
where the equality holds because, over each batch $\mathcal{B}_j$, $\pi^i$ coincides with $\mathcal{A}^i$. We proceed by bounding both terms $Q_k$ and $W_k$ separately.

\begin{equation}\label{split regret 1}
    Q_k \overset{(a)}{\leq} \sum_{j \in \mathcal{Q}_k} g(\Delta_T) \overset{(b)}{\leq}  \frac{|I_k|}{\Delta_T} g(\Delta_T),
\end{equation}

\noindent
where $(a)$ holds because $\mathcal{A}^i$ satisfies no-internal regret, and by observing that, by construction, $m \notin \mathcal{Q}_k$, which, in turn, implies that $|\mathcal{B}_j| = \Delta_T$, for all $j \in \mathcal{Q}_k$. $(b)$ holds because there can be at most~$\lfloor |I_k|/\Delta_T \rfloor$ batches of size $\Delta_T$ fully included in $I_k$. On the other hand, we have

\begin{equation}\label{split regret 2}
    W_k \overset{(a)}{\leq}  \sum_{j \in \mathcal{W}_k} \sum_{t \in \mathcal{B}_j \cap I_k} 2 M \overset{(b)}{\leq}  \sum_{j \in \mathcal{W}_k} 2 M \Delta_T \overset{(c)}{\leq} 4 M \Delta_T,
\end{equation}

where $(a)$ holds because $|u^i(a)| \leq M$, for each $a \in A$; $(b)$ holds because $|\mathcal{B}_j \cap I_k| \leq |\mathcal{B}_j| \leq \Delta_T$, and $(c)$ holds because there can be at most two batches that intersect $I_k$ but that are not fully included in $I_k$, that is $|\mathcal{W}_k| \leq 2$. Therefore, by using Equality $(\ref{split regret})$ and summing over~$k \in [c]$, we obtain 
\begin{equation}
    \begin{aligned}
        \sum_{k=1}^{c} \mathbb{E}^{\pi^i,\sigma^{-i}} \left[ \sum_{t \in I_k} \mathbf{1}(a^{i}_t = x_k)\left(u^{i}_{t}(y_k,a^{-i}_t) - u^{i}_{t}(a_t) \right)\right] 
        &\leq \frac{T}{\Delta_T}g(\Delta_T) + 4 cM \Delta_T  \\
        &\leq \frac{T}{\Delta_T}  g(\Delta_T) + 4 M \Delta_T (C_T+1).
    \end{aligned}
\end{equation}


As $\boldsymbol{I} = (I_k)_{k=1}^{c} \in \mathcal{I}_{C_T}$, for $c \in \llbracket C_T + 1 \rrbracket$, and $((x_k,y_k))_{k=1}^{c} \in (A^i \times A^i)^c$ were arbitrary, we have 

\[ \max_{\boldsymbol{I} \in \mathcal{I}_{C_T}} \sum_{k=1}^{|\boldsymbol{I}|} \max_{(x,y) \in A^{2}_i} \mathbb{E}^{\pi^i,\sigma^{-i}} \left[ \sum_{t \in I_k} \mathbf{1}(a^{i}_t = x)\left(u^{i}_{t}(y,a^{-i}_t) - u^{i}_{t}(a_t) \right)\right] \leq \frac{T}{\Delta_T}  g(\Delta_T) + 4 M \Delta_T (C_T+1).\]


Finally, assume that $g(T) = o(T)$ and $C_T = o(T)$. Let $\Delta_T$ such that $\Delta_T = o(T/(C_T+1))$ and~$\Delta_T \rightarrow \infty$ as $T \rightarrow \infty$. We conclude

\[ \frac{T}{\Delta_T}  g(\Delta_T) + 4 M \Delta_T (C_T+1) = o(T).\]

As $\sigma^{-i}$ was arbitrary, this concludes the proof.

\end{proof}

\begin{lemma}\label{lemma: regret-distance CE}
Let $\Gamma$ be a stage game with corresponding set of correlated equilibria $\mathcal{CE}$. Then, there exists a constant~$\textnormal{const}_1(\Gamma)$ (depending on the underlying stage game) such that for any~$\epsilon \geq 0$, $ T \in \mathbb{N}$, and any profile profile of policies $\pi \in \mathcal{P}$,
\[ \max_{i \in \llbracket N \rrbracket} \frac{1}{T} \sum_{t=1}^{T} \max_{(x,y) \in A^i \times A^i} \mathbb{E}^{\pi} \left[ \sum_{t=1}^{T} \mathbf{1}(a^{i}_t = x) \left(u^i(y,a^{-i}_t) - u^i(a_t) \right)\right] \leq \epsilon \implies \textnormal{d}_p(\bar{\delta}^{\pi}_{T},\mathcal{CE}) \leq \textnormal{const}_1(\Gamma) \epsilon .\]
\end{lemma}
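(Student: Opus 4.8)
The plan is to mirror the proof of Lemma~\ref{lemma: regret-distance} almost verbatim, replacing the coarse-correlated deviation structure by the internal-deviation structure, and then invoke the same sharp Hoffman-type bound from \cite{li1993sharp}.

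First I would observe that the hypothesis is precisely the statement that the empirical distribution of play $\bar{\delta}^{\pi}_T$ is an $\epsilon$-approximate correlated equilibrium. Indeed, on the event $\{a^i_t = x\}$ one has $u^i(a_t) = u^i(x,a^{-i}_t)$, so for every player $i$ and every ordered pair $(x,y) \in A^i \times A^i$,
\[ \frac{1}{T}\,\mathbb{E}^{\pi}\!\left[\sum_{t=1}^{T} \mathbf{1}(a^i_t = x)\bigl(u^i(y,a^{-i}_t) - u^i(a_t)\bigr)\right] = \sum_{a : a_i = x} \bar{\delta}^{\pi}_T(a)\bigl(u^i(y,a^{-i}) - u^i(x,a^{-i})\bigr), \]
so the hypothesis yields $\bar{\delta}^{\pi}_T \in \mathcal{CE}_{\epsilon}$, where $\mathcal{CE}_{\epsilon}$ is obtained from the definition~\eqref{Correlated Equilibrium Definition} of $\mathcal{CE}$ by relaxing the right-hand side of each inequality from $0$ to $\epsilon$. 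Since relaxing constraints only enlarges the feasible set, $\mathcal{CE} \subseteq \mathcal{CE}_{\epsilon}$, and therefore $\textnormal{d}_p(\bar{\delta}^{\pi}_T, \mathcal{CE}) \leq \textnormal{d}_H(\mathcal{CE}, \mathcal{CE}_{\epsilon})$, the (one-sided) Hausdorff distance $\max_{q \in \mathcal{CE}_{\epsilon}} \min_{q' \in \mathcal{CE}} \textnormal{d}_p(q,q')$.

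Second, I would put both polytopes in the matrix form used in Lemma~\ref{lemma: regret-distance}. Writing outcomes in lexicographic order and letting $q \in \mathbb{R}^{Z}$ with $Z = \prod_i K^i$, we have $\mathcal{CE} = \{ q : Q q \preceq 0,\ \mathbf{1}^\top q = 1\}$ and $\mathcal{CE}_\epsilon = \{ q : Q q \preceq \begin{bmatrix}\epsilon \mathbf{1}\\ 0\end{bmatrix},\ \mathbf{1}^\top q = 1\}$, where $Q = \begin{bmatrix} R_1 \\ -I \end{bmatrix}$, $I$ is the $Z\times Z$ identity (encoding $q \succeq 0$), and $R_1$ is the $L \times Z$ matrix whose rows, indexed by triples $(i,x,y)$ with $i \in \llbracket N \rrbracket$ and $(x,y) \in A^i \times A^i$, have entries $\mathbf{1}(a_i = x)\bigl(u^i(y,a^{-i}) - u^i(x,a^{-i})\bigr)$ as $a$ ranges over $A$; here $L = \sum_{i} (K^i)^2$. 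Then $\textnormal{rank}(Q) = Z$ because of the $-I$ block, $\mathcal{CE} \neq \emptyset$ since every finite game has a correlated equilibrium, and $\mathcal{CE}_\epsilon$ is bounded as it lies in the simplex; so the structural hypotheses of Theorems~2.4 and~3.2 of \cite{li1993sharp} are met. Applying them exactly as in the proof of Lemma~\ref{lemma: regret-distance} gives $\textnormal{d}_H(\mathcal{CE}, \mathcal{CE}_\epsilon) \leq \alpha_1(\Gamma)\, L^{1/p}\, \epsilon$, where $\alpha_1(\Gamma)$ is defined just like $\alpha(\Gamma)$ there but with $Q$ now built from $R_1$ — a maximum over index sets $I$ of size $Z-1$ making the augmented matrix full rank, of a supremum over unit-$p$-norm right-hand sides of the distance from the corresponding pseudo-inverse point to the associated cone. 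Setting $\textnormal{const}_1(\Gamma) := \alpha_1(\Gamma)\, L^{1/p}$ and combining with the first step yields $\textnormal{d}_p(\bar{\delta}^{\pi}_T, \mathcal{CE}) \leq \textnormal{const}_1(\Gamma)\,\epsilon$, as claimed (and, as in Lemma~\ref{lemma: regret-distance}, one may intersect this with the trivial simplex bound $2^{1/p}$).

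There is no genuinely hard step here: the argument is a transcription of the coarse-correlated case, the only change being the combinatorial bookkeeping of which linear inequalities cut out the polytope (internal swaps $(x,y)$ in place of unilateral deviations to a fixed action). The one point that deserves an explicit check is that the structural requirements of \cite{li1993sharp} — nonemptiness of $\mathcal{CE}$, the rank condition on the constraint matrix, and the consistency of the perturbed system $\mathcal{CE}_\epsilon$ (nonempty, with the same recession cone, which holds since both sets lie in the bounded simplex and differ only in the affine offsets of their inequalities) — all hold for correlated equilibria; these are immediate but should be stated. It is also worth recording, in parallel with Remark~\ref{remark for constant}, that $\textnormal{const}_1(\mathcal{G}) := \sup_{T} \max_{\Gamma \in \mathcal{G}_T} \textnormal{const}_1(\Gamma)$ is finite under the finiteness assumption on the set of possible games, since this is what the tracking-error bound for correlated equilibria in Theorem~\ref{thm: restarting = sublinear track error for correlated equilibria} will use.
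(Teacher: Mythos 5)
Your proof is correct and is exactly the argument the paper intends: the authors omit the proof, stating only that it is "analogous to Lemma \ref{lemma: regret-distance} (as also the set of correlated equilibria is a polytope)", and your write-up is precisely that analogy carried out — identify $\bar{\delta}^{\pi}_T$ as an $\epsilon$-approximate correlated equilibrium, express $\mathcal{CE}$ and $\mathcal{CE}_\epsilon$ as polytopes with the internal-swap deviation matrix in place of $R$, and apply the Hausdorff-distance bound of \cite{li1993sharp} to get $\textnormal{const}_1(\Gamma)$. No substantive difference from the paper's approach; your explicit checks of the rank/nonemptiness conditions and the remark on $\textnormal{const}_1(\mathcal{G})$ are appropriate bookkeeping rather than a new route.
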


\begin{proof}
The proof is analogous to the one of Lemma \ref{lemma: regret-distance} (as also the set of correlated equilbiria is a polytope) and therefore omitted.
\end{proof}

\begin{proof}[Proof of Theorem \ref{Correlated equilibria Survive}]
The proof follows the same strategy we used to establish Theorems \ref{Direction 1} and \ref{Hannan Set Survives}. First, we can show that any distribution of play generated by a profile of IDB($\tilde{C}$)-consistent policies can be approximated by some trigger-type policies that are IDB($C$)-consistent. Then, by using these trigger-type policies to approximate distributions in the set of correlated equilibria and using the results in \cite{HMSC1}, the result follows. We omit the details.
\end{proof}

\begin{proof}[Proof of Theorem \ref{thm: restarting = sublinear track error for correlated equilibria}]
The proof follows the same steps we used to establish Theorem \ref{thm: restarting = sublinear track error}, with the only difference that it builds on Lemma \ref{lemma: regret-distance CE}. Therefore, we omit it.
\end{proof}

\section{Discussion on the structure of DB-consistent policies}\label{structure of DB policies}

Broadly speaking, in order to guarantee the performance of a dynamic sequence of actions selected in hindsight, a policy has to identify effective actions but also keep certain flexibility that would allow transition among them. There are two algorithmic approached that can be leveraged. 

The first approach relies on ideas developed in the tracking the best expert setting (e.g., \citealt{HERBSTER1995286}), and then extended to the bandit setting by \cite{auer2002nonstochastic}. Classical Hannan consistent policies maintain a sampling distribution over actions based on weights that can be interpreted as an expression of the decision maker's confidence about the identity of the best action. By following this structure, they tend to concentrate significant probability mass on a single action, thereby limiting the selection of other actions. A decision maker can increase the flexibility of such policies by sharing part of the information gathered with each new sample to all the weights, and by that, guaranteeing that the probability of selecting each action is bounded away from zero. The prototypical policy that follows this structure is Exp3S.

\medskip
{\small
\begin{algorithm}[H]
\SetKwInput{Init}{Inputs}
 \Init{Exploration parameter $\gamma \in (0,1]$ and sharing factor $\alpha > 0$.}
 Initialize to one the weight for each action, i.e., $\omega_1(k) = 1$, for all $k \in \llbracket K^i \rrbracket$.\\
 \For{t = 1,2,...,T}{
 \vspace{0.5cm}
    \begin{enumerate}
    \item[1] Set, for all $k \in \llbracket K^i \rrbracket$,
    \[ p_t(k) = (1-\gamma) \frac{w_t(k)}{\sum_{j \in \llbracket K^i \rrbracket} w_t(j)} + \frac{\gamma}{K^i}\]
    \item[2] Draw $a^i_t \sim p_t$ and receive reward $\textnormal{Rew}_t(a^i_t) := u^i_t(a^i_t, a^{-i}_t)$, for some $a^{-i}_t \in A^{-i}$
    \item[3] For $k \in \llbracket K^i \rrbracket$,
    \[ \widehat{\textnormal{Rew}_t}(k) = \begin{cases}
        \textnormal{Rew}_t(k)/p_t(k) &\textnormal{if } k = a^i_t\\
        0 &\textnormal{otherwise},
    \end{cases}\]
    \item[4]  For $k \in \llbracket K^i \rrbracket$, update the weights as $w_{t+1}(k) = w_{t}(k) \exp\left(\frac{\gamma \widehat{\textnormal{Rew}_t}(k)}{K^i}\right) + \frac{e \alpha}{K^i} \sum_{j \in \llbracket K^i \rrbracket} w_t(j)$ 
    \end{enumerate}}\caption{Exp3S \citep{auer2002nonstochastic}}
\end{algorithm}
}

This algorithm has been proposed by \cite{auer2002nonstochastic} and analyzed in a context where the sequence of action changes $C$ does not scale with the horizon, that is, 
$C_T = \bar{C}$, for all $T \geq 1$. In particular, Corollary 8.3. in \cite{auer2002nonstochastic}, shows that the Exp3S policy with parametrization
\[ \gamma_T = \min\left\{1,\sqrt{\frac{K^i((\bar{C}+1) \ln(K^i T) + e)}{(e-1)T}}\right\} \textnormal{ and } \alpha_T = \frac{1}{T}\]
\noindent
has expected regret (adapted to our notation) that is at most 
\begin{equation}\label{auer bound}
 \max_{(x_t) \in \mathcal{S}^i(\bar{C})} \mathbb{E}^{\pi^i,\sigma^{-i}}\left[ \mathrm{Reg}^i(\mathcal{U}_T^i,(x_t),(a_t)) \right] \leq 2 \sqrt{e-1} \sqrt{K^i T((\bar{C}+1) \ln(K T) + e)},    
\end{equation}
\noindent
for all payoff functions $\mathcal{U}^i_T$ and all strategies of the other players $\sigma^{-i}: \cup_{t=1}^{T} A^{t-1} \to \Delta(A^{-i})$. Notably, the bound in \eqref{auer bound} becomes vacuous, by allowing $C$ to depend on $T$, whenever $C_T = \Omega(T/\ln(T))$. However, we can still achieve dynamic benchmark consistency relative to all sub-linear sequences of action changes by modifying the tuning of Corollary 8.3 in \cite{auer2002nonstochastic}. In particular, the following lemma holds.
\begin{lemma}
Let $C$ be such that $C_T = o(T)$, and assume that, for each $T \geq 1$, the Exp3S policy is run with sharing factor $\alpha_T = (C_T+1)/T$ and exploration parameter
\[ \gamma_T = \min \left\{1, \sqrt{ K^i \frac{(C_T+1)}{T} \ln \left(\frac{K^i T}{C_T+1}\right)} \right\}.\]
Then, for all payoff functions $\mathcal{U}^i$, horizon lengths $T$, and all (potentially correlated) strategies of the other players $\sigma^{-i}: \cup_{t=1}^{T} A^{t-1} \to \Delta(A^{-i})$, it holds
\[ \max_{(x_t) \in \mathcal{S}^i(C_T)} \mathbb{E}^{\pi^i,\sigma^{-i}}\left[ \mathrm{Reg}^i(\mathcal{U}_T^i,(x_t),(a_t)) \right] = O\left(\sqrt{K^i T (C_T+1) \ln\left(\frac{K^i T}{C_T+1}\right)}\right).\]
In particular, observe that $\frac{C_T+1}{T} \ln(\frac{T}{C_T + 1}) = o(1)$, so that Exp3S is DB($C$)-consistent.
\end{lemma}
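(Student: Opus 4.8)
The plan is to leverage the existing regret bound for Exp3S established by \citet{auer2002nonstochastic} (their Corollary 8.3 and the underlying analysis in their equations (26)--(27)), but re-run it with a $T$-dependent number of allowed switches, namely $\bar{C} \mapsto C_T$, and then choose the free parameters $\gamma_T$ and $\alpha_T$ to optimize the resulting bound. The point of departure is the per-segment regret decomposition that Exp3S enjoys: for any comparator sequence $(x_t)\in\mathcal{S}^i(C_T)$, writing the bound as a function of $\gamma$, $\alpha$, $K^i$, $T$ and the number of action changes $C_T$, one has (up to the truncation issues already handled in the proof of Theorem~\ref{thm: restarting = sublinear track error})
\[
\max_{(x_t)\in\mathcal{S}^i(C_T)} \mathbb{E}^{\pi^i,\sigma^{-i}}\big[\mathrm{Reg}^i(\mathcal{U}_T^i,(x_t),(a_t))\big]
\;\le\; (e-1)\gamma_T T + \frac{K^i}{\gamma_T}\Big((C_T+1)\ln\!\frac{K^i}{\alpha_T} + e\,\alpha_T T\Big).
\]
This inequality does not depend on the payoff functions $\mathcal{U}^i$ or on the opponents' (possibly correlated) strategy $\sigma^{-i}$, exactly as required by Definition~\ref{V-Consistency Definition}; that uniformity is inherited verbatim from \citet{auer2002nonstochastic} and needs only to be cited.

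The first step is to plug in $\alpha_T = (C_T+1)/T$, which makes the term $e\,\alpha_T T/\gamma_T = e(C_T+1)/\gamma_T$ of the same order as the $(C_T+1)\ln(K^i/\alpha_T)/\gamma_T$ term (indeed smaller, up to the logarithmic factor, since $\ln(K^iT/(C_T+1))\ge \ln K^i$ when $C_T+1\le T$). With this choice the bound collapses, up to constants, to
\[
(e-1)\gamma_T T + \frac{K^i(C_T+1)}{\gamma_T}\,\ln\!\frac{K^i T}{C_T+1}.
\]
The second step is the standard $\gamma$-optimization: balancing the two terms $\gamma T$ and $K^i(C_T+1)\ln(K^iT/(C_T+1))/\gamma$ gives $\gamma_T = \Theta\big(\sqrt{K^i (C_T+1)\ln(K^iT/(C_T+1))/T}\big)$, which is precisely the stated tuning (with the outer $\min\{1,\cdot\}$ to keep $\gamma_T\in(0,1]$, which is harmless once $T$ is large since the argument is $o(1)$). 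Substituting back yields the claimed
\[
\max_{(x_t)\in\mathcal{S}^i(C_T)} \mathbb{E}^{\pi^i,\sigma^{-i}}\big[\mathrm{Reg}^i(\mathcal{U}_T^i,(x_t),(a_t))\big]
= O\!\left(\sqrt{K^i T (C_T+1)\ln\!\frac{K^i T}{C_T+1}}\right).
\]
The third step is to read off $\mathrm{DB}(C)$-consistency: divide by $T$ to get a per-period bound of order $\sqrt{K^i(C_T+1)T^{-1}\ln(K^iT/(C_T+1))}$, and note that $\tfrac{C_T+1}{T}\ln\tfrac{T}{C_T+1}=o(1)$ whenever $C_T=o(T)$ (the function $x\mapsto x\ln(1/x)$ tends to $0$ as $x\to 0^+$, applied with $x=(C_T+1)/T$), so the right-hand side is $o(T)$; this supplies the function $g$ with $g(T)=o(T)$ demanded by Definition~\ref{V-Consistency Definition}. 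One should also note that $C$ can be taken non-decreasing and that $C_T+1\le T$ for $T$ large, which legitimizes the logarithm's argument being $\ge 1$; for the finitely many small $T$ the bound is trivially $O(MT)$.

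The only genuinely delicate point is the first step --- verifying that \citet{auer2002nonstochastic}'s analysis goes through unchanged when the switching budget $\bar C$ is replaced by a horizon-dependent $C_T$ that may grow with $T$. Their equations (26)--(27) are in fact stated for a fixed hardness parameter, but the derivation is a finite-horizon, per-instance argument: for any fixed $T$ and any comparator in $\mathcal{S}^i(C_T)$, the weight-sharing update with parameter $\alpha$ guarantees that the potential $\ln\big(\sum_k w_{T+1}(k)/\sum_k w_1(k)\big)$ telescopes with an additive cost of $\ln(K^i/\alpha)$ per switch plus the usual Exp3 exploration terms. Nothing in that argument requires $C_T$ to be constant in $T$; it only requires that the parameters $\gamma,\alpha$ used at horizon $T$ are chosen as functions of $T$ (which is legitimate since we assumed $T$ is known to the players). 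Hence the re-tuning is sound, and the rest is the routine optimization sketched above. I would present this by first citing the per-segment bound, then doing the two-line parameter substitution and the $x\ln(1/x)\to 0$ remark, and flagging the horizon-dependence of $C_T$ as the one place where a sentence of justification (rather than a citation) is needed.
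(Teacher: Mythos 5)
Your proposal is correct and follows essentially the same route as the paper: the paper likewise invokes the general parameter-free bound from \citet{auer2002nonstochastic} (their Theorem 8.1, valid for every fixed $T$, every $\gamma_T\in(0,1]$, $\alpha_T>0$, and every switching budget, so the horizon-dependence of $C_T$ poses no issue), substitutes $\alpha_T=(C_T+1)/T$ and the stated $\gamma_T$ for $T$ large enough that $\gamma_T<1$, and concludes DB($C$)-consistency from $\frac{C_T+1}{T}\ln\frac{T}{C_T+1}=o(1)$. Your extra discussion of why the per-switch potential argument is per-instance is a fine (if not strictly necessary) elaboration of the same point.
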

\begin{proof}
By Theorem 8.1 in \cite{auer2002nonstochastic}, for any $T \geq 1$, $\gamma_T \in (0,1]$ and $\alpha_T > 0$, we have
\[ \max_{(x_t) \in \mathcal{S}^i(C_T)} \mathbb{E}^{\pi^i,\sigma^{-i}}\left[ \mathrm{Reg}^i(\mathcal{U}_T^i,(x_t),(a_t)) \right] \leq \frac{K^i\left((C_T+1) \ln(K^i/\alpha_T) + e \alpha_T T \right)}{\gamma_T} + (e-1) \gamma_T T.\]

Suppose that $T$ is large enough, so that $\gamma_T < 1$. Then, by using the parametrization in the statement of the lemma, we obtain 
\begin{equation}
\begin{aligned}
    \max_{(x_t) \in \mathcal{S}^i(C_T)} \mathbb{E}^{\pi^i,\sigma^{-i}}\left[ \mathrm{Reg}^i(\mathcal{U}_T^i,(x_t),(a_t)) \right] &\leq e\sqrt{K^i (C_T+1) T \ln\left(\frac{ K^i T}{(C_T+1)}\right)} + e \sqrt{ \frac{K^i (C_T+1) T}{\ln(K^i T/(C_T+1))}} \\
    &= O\left(\sqrt{K^i T (C_T+1) \ln\left(\frac{K^i T}{(C_T+1)}\right)}\right).
\end{aligned}
\end{equation}
Therefore, the Exp3S policy is DB($(C_T: T \geq 1)$)-consistent.
\end{proof}

We next propose a second approach to generate DB-consistent policies that leverages the structure of Hannan consistent policies, but increases the rate at which they perform exploration. In the non-stationary stochastic optimization literature (e.g., \citealt{besbes2015,besbes2019}), it has been shown that designing a schedule of time periods in which the policy will be restarted and all the observations that have been collected so far discarded is minimax rate optimal in a slowly changing environment, relative to a benchmark that selects the best action in hindsight in every period. In Proposition \ref{prop: restarting = sublinear external-internal dynamic regret}, we show that this restarting idea turns out to be effective in the complementary setting of this paper, where the changes in the environment (that is, the total number of changes in the payoffs an agent receive because of changes in the stage game and changes in the actions of the other players) are unconstrained, but the number of actions changes allowed in the benchmark is constrained. Our procedure can be applied to many Hannan consistent policies (for example, ones that are based on potential functions; see \citealt{HMSC2} and \citealt{cesa2006prediction}), which implies that the corresponding class of dynamic benchmark consistent policies is large, containing as many policies as the traditional no-regret class.
\begin{proposition}\label{prop: restarting = sublinear external-internal dynamic regret}
Let $\mathcal{A}^i$ be a Hannan consistent policy for all payoff functions. Denote by~$\pi^i$ the policy obtained by restarting $\mathcal{A}^i$ every~$\Delta_T$ periods when the horizon has length $T$. If $C_T = o(T)$, then, for any~$\Delta_T$ such that~$\Delta_T = o(T/(C_T+1))$ and~$\Delta_T \rightarrow \infty$ as $T \rightarrow \infty$, we have that $\pi^i$ is dynamic benchmark consistent for all payoff functions with respect to the sequence~$C$.

\end{proposition}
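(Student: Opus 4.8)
The plan is to analyze the restarted policy $\pi^i$ block by block. Fix a horizon length $T$ and a restarting parameter $\Delta_T$ with $\Delta_T = o(T/(C_T+1))$ and $\Delta_T \to \infty$. Since $\mathcal{A}^i$ is Hannan consistent for all payoff functions, there is a sublinear function $g^i$ such that on \emph{any} block of length $\ell$, regardless of the opponents' (adaptively chosen) play during that block, the static regret of a fresh run of $\mathcal{A}^i$ is at most $g^i(\ell)$. The restart partitions $\llbracket T\rrbracket$ into $m = \lceil T/\Delta_T\rceil$ blocks $\mathcal{B}_1,\dots,\mathcal{B}_m$, each of length $\Delta_T$ (the last possibly shorter). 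First I would fix an arbitrary comparator sequence $(x_t)_{t=1}^T \in \mathcal{S}^i(C_T)$ and an arbitrary (possibly correlated) opponent strategy $\sigma^{-i}$, and write
\[
\mathbb{E}^{\pi^i,\sigma^{-i}}\!\left[\mathrm{Reg}^i(\mathcal{U}^i_T,(x_t),(a_t))\right] = \sum_{j=1}^{m} \mathbb{E}^{\pi^i,\sigma^{-i}}\!\left[\sum_{t \in \mathcal{B}_j} u^i_t(x_t,a^{-i}_t) - u^i_t(a_t)\right].
\]

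The key decomposition, exactly mirroring the argument in the proof of Proposition~\ref{restarting proposition for internal dbc}, is to split the blocks into those on which the comparator $(x_t)$ is constant and those on which it changes. Call $\mathcal{B}_j$ \emph{clean} if $x_t$ is the same value, say $x_{(j)}$, for all $t \in \mathcal{B}_j$, and \emph{dirty} otherwise. On a clean block, the block-regret of $\pi^i$ against the static action $x_{(j)}$ is bounded by $g^i(|\mathcal{B}_j|) \le g^i(\Delta_T)$, because over that block $\pi^i$ coincides with a fresh run of $\mathcal{A}^i$ and the comparator is static there; summing over all (at most $m$) clean blocks gives at most $\frac{T}{\Delta_T}\,g^i(\Delta_T)$. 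On a dirty block we use only the trivial bound $2M|\mathcal{B}_j| \le 2M\Delta_T$ on the block-regret, since payoffs lie in $[-M,M]$. The crucial counting point is that the number of dirty blocks is at most $C_T$: each dirty block must contain at least one index where the comparator changes, and $(x_t) \in \mathcal{S}^i(C_T)$ allows at most $C_T$ such changes, so the dirty blocks contribute at most $2M\Delta_T\,C_T$. Combining,
\[
\mathbb{E}^{\pi^i,\sigma^{-i}}\!\left[\mathrm{Reg}^i(\mathcal{U}^i_T,(x_t),(a_t))\right] \le \frac{T}{\Delta_T}\,g^i(\Delta_T) + 2M\Delta_T\,C_T .
\]

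Since $(x_t) \in \mathcal{S}^i(C_T)$ and $\sigma^{-i}$ were arbitrary, this bound holds for the max over $\mathcal{S}^i(C_T)$ and over all opponent strategies. It remains to check the right-hand side is $o(T)$: the first term is $\frac{T}{\Delta_T}g^i(\Delta_T)$, which is $o(T)$ because $g^i(\Delta_T)/\Delta_T \to 0$ as $\Delta_T \to \infty$ (using $g^i(\ell) = o(\ell)$ and monotonicity of $g^i$, taken without loss of generality); the second term is $2M\Delta_T C_T = o(T)$ precisely because $\Delta_T = o(T/(C_T+1))$. Hence $\pi^i$ is DB($C$)-consistent for all payoff functions.

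The only place requiring a little care — and what I would flag as the main subtlety rather than a serious obstacle — is the ``fresh run on any block'' claim: one must observe that because $\pi^i$ discards all history at each restart, the run of $\mathcal{A}^i$ on block $\mathcal{B}_j$ faces some induced opponent behavior on that block (obtained by conditioning $\sigma^{-i}$ on the realized history up to the start of $\mathcal{B}_j$), and Hannan consistency of $\mathcal{A}^i$ holds uniformly over \emph{all} such (adaptively chosen, possibly correlated) opponent strategies on a horizon of length $|\mathcal{B}_j|$. This is exactly the quantifier structure in Definition~\ref{V-Consistency Definition} with $C_T = 0$, so the per-block bound $g^i(\Delta_T)$ applies after taking expectations over the starting history; formally one writes the block expectation as an average over starting histories of the conditional block-regret, each term of which is $\le g^i(\Delta_T)$. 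Everything else is the routine counting above.
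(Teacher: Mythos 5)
Your proof is correct and follows essentially the same route as the paper's (the paper defers to the proof of Proposition~\ref{restarting proposition for internal dbc}, which uses exactly this restart-block decomposition: per-block Hannan regret bound $g^i(\Delta_T)$ when the comparator is effectively static on the block, a trivial $O(M\Delta_T)$ bound on the $O(C_T)$ blocks that straddle comparator changes, and the same asymptotics $\tfrac{T}{\Delta_T}g^i(\Delta_T)+O(M\Delta_T(C_T+1))=o(T)$). Your ``clean/dirty block'' counting is just the dual bookkeeping of the paper's ``batches fully inside vs.\ straddling a comparator interval,'' and your handling of the conditioning on the pre-block history is the same point the paper treats implicitly, so there is nothing substantive to add.
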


\begin{proof}
The proof is analogous to the one for Proposition \ref{restarting proposition for internal dbc} in Appendix \ref{proofs of no-internal regret section}, and therefore~omitted.
\end{proof}

\section{Discussion on dynamic benchmark consistency and tracking error}\label{dbc and tracking error}

The following example illustrates that not all dynamic benchmark consistent policies can lead to diminishing tracking error. In particular, we show for a sequence of games $\mathcal{G}_T$ that changes only once that the tracking error is linear even though all players are using DB-consistent policies.

Let $T \geq 8$, and consider the following two stage games

\captionsetup[table]{labelformat=empty}

\vspace{-0.1cm}
\begin{table}[H]
\centering
\begin{minipage}{.45\textwidth}
    \centering
    \setlength{\extrarowheight}{1pt}
    \begin{tabular}{cc|c|c|}
      & \multicolumn{1}{c}{} & \multicolumn{1}{c}{{c}}  & \multicolumn{1}{c}{d} \\\cline{3-4}
      & u & $1\hspace{0.1cm};\hspace{0.1cm}\epsilon$ & $-1\hspace{0.1cm};\hspace{0.1cm}\epsilon$ \\\cline{3-4}
      & b & $-1\hspace{0.1cm};\hspace{0.1cm}\epsilon$ & $1\hspace{0.1cm};\hspace{0.1cm} \epsilon$ \\\cline{3-4}
    \end{tabular}
    \caption{$\quad\quad\quad\Gamma_1$}
\end{minipage}%
\hspace{0.1cm}
\begin{minipage}{.45\textwidth}
    \centering
    \setlength{\extrarowheight}{1pt}
    \begin{tabular}{cc|c|c|}
      & \multicolumn{1}{c}{} & \multicolumn{1}{c}{{c}}  & \multicolumn{1}{c}{d} \\\cline{3-4}
      & u & $1/4\hspace{0.1cm};\hspace{0.1cm}\epsilon$ & $1/4\hspace{0.1cm};\hspace{0.1cm}\epsilon$ \\\cline{3-4}
      & b & $-1/4\hspace{0.1cm};\hspace{0.1cm}\epsilon$ & $-1/4\hspace{0.1cm};\hspace{0.1cm}\epsilon$ \\\cline{3-4}
    \end{tabular}
    \caption{$\quad\quad\quad\Gamma_2$}
\end{minipage}
\end{table}
\noindent
with player one being the row player, and player two being the column player. Suppose that $\Gamma_1$ is played for the first $2 \lceil T/4 \rceil$ periods and $\Gamma_2$ for the remaining periods. Observe that for the column player any policy is dynamic benchmark consistent, even relative to the best dynamic benchmark that can be selected in hindsight. Suppose that~$\pi_{2}$ (when the horizon has length $T$) plays actions accordingly
\begin{equation}
    a_{t}^2 \sim 
    \begin{cases}
    \delta_{\textnormal{c}}, & \text{if } t \leq \lceil T/4 \rceil \\
    \delta_{\textnormal{d}}, & \text{otherwise.}
    \end{cases}
\end{equation}

Let $u^{0}_{1} = 0$. Suppose that policy $\pi_{1}$ (when the horizon has length $T$) takes the form:
\begin{equation}
    a^{t}_1 \sim 
    \begin{cases}
    \delta_{\textnormal{u}}, & \text{if } t \leq \lceil T/4 \rceil \text{ and } u^{t-1}_1 \geq 0 \\
    \delta_{\textnormal{b}}, & \text{if } \lceil T/4 \rceil < t \leq 2 \lceil T/4 \rceil \text{ and } u^{t-1}_1 \geq 0 \\
    \delta_{\textnormal{b}}, & \text{if } t = 2 \lceil T/4 \rceil 
 +1 \text{ and } u^{t-1}_1 \geq 0 \\
 \delta_{\textnormal{b}}, & \text{if } t > 2 \lceil T/4 \rceil
 +1 \text{ and } u^{t-1}_1 \leq 0 \\
 \textnormal{Exp3S}(T-t+1), & \text{otherwise}, 
    \end{cases}
\end{equation}
\noindent
where $u^{t-1}_1$ is the realized payoff obtained by the row player at time $t-1$, and we use the notation Exp3S($T$) to indicate the Exp3S algorithm (see \citealt{auer2002nonstochastic}) specified over a horizon of length~$T$. In particular, \cite{auer2002nonstochastic} show that Exp3S($T$) is a DB($1$) consistent policy when tuned with parameters
\[ \alpha_T = \frac{1}{T} \textnormal{ and } \gamma_T = \left \{ 1, \sqrt{\frac{ 4 \log(2T) + 2 e}{(e-1)T}} \right \}.\]








We first show that $\pi_1$ is DB($1$) consistent. Let $\sigma_{2}$ be any strategy for player two, and denote by~$D_T$ the random time at which $\pi_{1}$ deviates to Exp3S when the horizon has length $T$. We understand that if $D_T = \infty$, then no deviation occurred. Let $(x_t) \in \mathcal{S}^{i}(1)$. There are two cases to be discussed. First, define 
\[ \mathbb{E}^{\pi}_{\infty}\left[ \sum_{t=1}^{T}u_t^1(x_t,a_{t}^2) - u_t^1(a_t) \right] := \mathbb{E}^{\pi}\left[ \sum_{t=1}^{T}u_t^1(x_t,a_{t}^2) - u_t^1(a_t) \; \Bigg \lvert \; D_T = \infty \right].\]

Then, no deviation to the Exp3S policy happened. Therefore, we have
\begin{equation*}
    \begin{aligned}
    \mathbb{E}^{\pi}_{\infty}\left[ \sum_{t=1}^{T}u_t^1(x_t,a_{t}^2) - u_t^1(a_t) \right] 
    &\overset{(a)}{=} \mathbb{E}^{\pi_1,\pi_2}\left[ \sum_{t=1}^{T}u_t^1(x_t,a_{t}^2)\right] - \left(2 \lceil T/4 \rceil - \frac{1}{4}(T - 2\lceil T/4 \rceil) \right) \\
    &\overset{(b)}{\leq} \left(2 \lceil T/4 \rceil - \frac{1}{4}(T - 2\lceil T/4 \rceil) \right) - \left(2 \lceil T/4 \rceil - \frac{1}{4}(T - 2\lceil T/4 \rceil) \right) = 0,
    \end{aligned}
\end{equation*}

where ($a$) holds because when $\pi_1$ and $\pi_2$ interact, player one obtains a payoff of $2 \lceil T/4 \rceil$ for the first $2 \lceil T/4 \rceil$ time periods, and of $-(1/4)(T-2 \lceil T/4 \rceil)$ for the remaining $T-2 \lceil T/4 \rceil$ periods. ($b$) holds by observing that, given the actions selected by $\pi_2$, the best dynamic sequence of actions with at most one action change that can be selected by the benchmark is to play $\textnormal{u}$ for the first $\lceil T/4 \rceil$ time periods, and then switching to $\textnormal{b}$ for the remaining ones.\footnote{Indeed, by playing $\textnormal{u}$ for the first $\lceil T/4 \rceil$ time periods, and then switching to $\textnormal{b}$ for the remaining ones, the benchmark accrues a cumulative payoff of $2 \lceil T/4 \rceil - \frac{1}{4}(T - 2\lceil T/4 \rceil) \geq 3T/8$. On the other hand, the second most promising sequence (with at most one action change) is to play $\textnormal{u}$ throughout the horizon, yielding a cumulative payoff of $(T-2\lceil T/4 \rceil)/4 \leq T/8$.   } Further, we have

\begin{equation*}
    \begin{aligned}
   \sum_{d = 1}^{T} \mathbb{E}^{\pi}\left[\sum_{t=1}^{T}u_t^1(x_t,a_{t}^2) - u_t^1(a_t) \mid D_T = d\right] \mathbb{P}^{\pi}(D_T = d)  &= \sum_{d = 1}^{T} \mathbb{E}^{\pi}\left[\sum_{t=1}^{d}u_t^1(x_t,a_{t}^2) - u_t^1(a_t) \mid D_T = d\right] \mathbb{P}^{\pi}(D_T = d)  \\
    &+ \sum_{d = 1}^{T} \mathbb{E}^{\pi}\left[\sum_{t=d+1}^{T}u_t^1(x_t,a_{t}^2) - u_t^1(a_t) \mid D_T = d\right] \mathbb{P}^{\pi}(D_T = d) \\
    &\overset{(a)}{\leq} \sum_{d = 1}^{T} 2 \cdot \mathbf{1}(d \leq 2 \lceil T/4 \rceil) \mathbb{P}^{\pi}(D_T = d)  \\
    &+ \sum_{d = 1}^{T} \mathbb{E}^{\pi}\left[\sum_{t=d+1}^{T}u_t^1(x_t,a_{t}^2) - u_t^1(a_t) \mid D_T = d\right] \mathbb{P}^{\pi}(D_T = d) \\
    &\leq 2 +  \sum_{d=1}^{T} \mathbb{E}^{\pi}\left[\sum_{t=d+1}^{T}u_t^1(x_t,a_{t}^2) - u_t^1(a_t) \mid D_T = d \right]\mathbb{P}^{\pi}(D_T = d) \\
    &\overset{(b)}{\leq} 2 + \sum_{d = 1}^{T} 2\sqrt{e-1} \sqrt{2(T-d)(2 \ln(2 (T-d)) + e)} \mathbb{P}^{\pi}(D_T = d) \\
    &\leq 2 + 2\sqrt{e-1} \sqrt{2T(2 \ln(2 T) + e)} = o(T),
    \end{aligned}
\end{equation*}
\noindent
where ($a$) holds by the following argument. Suppose $d \geq 2\lceil T / 4 \rceil + 1$. Then, $u^1(x,\textnormal{c}) = u^1(x,\textnormal{d})$ for~$x \in \{\textnormal{u},\textnormal{b}\}$, so that, by applying the same argument outlined above for the case $D_T = \infty$, we conclude that no sequence with at most one action change can accrue a strictly larger payoff than~$\pi_1$. On the other hand, if $d \leq 2\lceil T / 4 \rceil$, then the regret is at most $2$. Indeed, up to period $d-1$, the cumulative payoff achieved by $\pi^1$ is equal to $d-1$, which is the maximum that can be obtained. Finally, note that the regret incurred at period $d$ is at most equal to $2$. ($b$) holds because over $\llbracket d+1,T \rrbracket$, $\pi_{1}$ switches to Exp3S, 
and, over a horizon of length~$T-d$, the expected regret of Exp3S relative to the best dynamic sequence of actions with at most one action change can be bounded (see \citealt{auer2002nonstochastic}) by 
\[ 2\sqrt{e-1} \sqrt{2(T-d)(2 \ln(2 (T-d)) + e)}.\]

This shows that $\pi_1$ is DB($1$) consistent. To conclude the proof, denote the distribution of play (induced by $\pi$) restricted to the $k$-th batch by $\bar{\delta}^{\pi}_{(k)}$. We have 
\[ err_2(\mathcal{G}_T,\pi,\mathcal{H}) \geq (T-2\lceil T/4\rceil) \textnormal{d}_2(\bar{\delta}^{\pi}_2,\mathcal{H}(\Gamma_2)) \geq \left(\frac{T}{2} - 2 \right)\textnormal{d}_2(\bar{\delta}^{\pi}_2,\mathcal{H}(\Gamma_2)) = \sqrt{6} \frac{T-4}{4}, \]
\noindent
where the last step follows by observing that $\bar{\delta}^{\pi}_{(2)} = \delta_{\textnormal{b,d}}$, and 
$$\mathcal{H}(\Gamma_2) = \{ \alpha \delta_{\textnormal{u,c}} + (1-\alpha) \delta_{\textnormal{u,d}} : \alpha \in [0,1]\}.$$

\section{Almost-surely guarantees}\label{a.s. guarantess}

In this section, we adapt the definition of dynamic benchmark consistency to require almost surely performance guarantees and provide necessary and sufficient conditions for the existence of such a policy.

\begin{definition}
For some payoff functions $\mathcal{U}^i = (u^i_t : t \in \mathbb{N})$, we say that a policy $\pi^i = \cup_{t=1}^{\infty} (A^i \times [-M,M])^{t-1} \to \Delta(A^i)$ is dynamic benchmark consistent with respect to the sequence~$C$ if for all (potentially correlated) strategies of the other players $\sigma^{-i}: \cup_{t=1}^{\infty} A^{t-1} \to \Delta(A^{-i})$, 

    \[ 
\mathbb{P}^{\pi^i,\sigma^{-i}} \left(  \underset{T \rightarrow +\infty} \limsup \; \frac{1}{T} \max_{(x_t) \in \mathcal{S}^{i}(C_T)}\mathrm{Reg}^i (\mathcal{U}^{i}_T,(a_t),(x_t)) \leq 
    0  \right) = 1.
\]
\end{definition}

Then, the following proposition holds.

\begin{proposition}[Existence of DB-consistent strategies] \label{V Class is non-empty}
Let $C$ be any sequence of non-negative and non-decreasing real numbers. We have the following:\vspace{-0.1cm}
\begin{itemize}
	\item[(i)] If $C_T = o(T)$ then the set  $\mathcal{P}^{DB}_i(C)$ of dynamic benchmark consistent policies subject to the sequence~$C$ is non-empty. \vspace{-0.1cm}
	\item[(ii)] If $\underset{T \rightarrow \infty}{\limsup} \frac{C_T}{T} > 0$ then the set  $\mathcal{P}^{DB}_i(C)$ of dynamic benchmark consistent policies subject to the sequence $C$  is non-empty if and only if there exists a common best reply to all profiles of actions of the other players.
 
\end{itemize}
\end{proposition}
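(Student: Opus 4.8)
\textbf{Proof proposal for Proposition \ref{V Class is non-empty}.}

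The plan is to treat the two parts separately, with part (i) being essentially a corollary of results already established and part (ii) requiring both a construction and an impossibility argument. For part (i), when $C_T = o(T)$, I would invoke the restarting construction of Proposition \ref{prop: restarting = sublinear external-internal dynamic regret}: starting from any Hannan consistent policy $\mathcal{A}^i$ (whose existence for all payoff functions is classical, e.g., Exp3), restart it every $\Delta_T$ periods with $\Delta_T \to \infty$ and $\Delta_T = o(T/(C_T+1))$. That proposition gives a sublinear bound on the worst-case expected regret relative to a DB($C$) benchmark. To upgrade this to the almost-sure statement in the current definition, I would combine the per-block regret decomposition used there with a concentration argument: the regret within each restart block is a bounded martingale-difference-type quantity, so a Hoeffding/Azuma bound plus Borel--Cantelli gives that $\frac{1}{T}\max_{(x_t)\in\mathcal{S}^i(C_T)}\mathrm{Reg}^i \le 0$ in the $\limsup$ sense with probability one. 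Alternatively, one may cite a high-probability version of the Exp3S bound and adapt the restarting argument verbatim. This step is routine.

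For part (ii), first the easy direction: if there is a common best reply $a^\star_i \in A^i$ to all $a^{-i} \in A^{-i}$ (for the given $\mathcal{U}^i$, i.e., $u^i_t(a^\star_i, a^{-i}) \ge u^i_t(x, a^{-i})$ for all $x, a^{-i}, t$), then the constant policy playing $a^\star_i$ every period has nonpositive regret against \emph{any} comparator sequence, with no restriction on the number of changes — in particular against any sequence in $\mathcal{S}^i(C_T)$ — so it is DB($C$) consistent for every $C$, and a fortiori when $\limsup C_T/T > 0$. The harder direction is the converse. Suppose no common best reply exists. I would adapt the lower-bound machinery behind Theorem \ref{imp result}: because for every action $x \in A^i$ there is some $a^{-i}$ and time $t$ at which $x$ is strictly suboptimal, an adversary controlling $\sigma^{-i}$ can, at each period, choose the opponent profile so as to penalize whatever action the policy is currently likely to play. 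When $C_T$ is linear in $T$, the comparator is allowed $\Omega(T)$ changes, so the benchmark can follow the per-period best reply, and a standard minimax/anti-concentration argument (using that $A^i$ is finite and reducing to deterministic opponent strategies as in the proof of Theorem \ref{imp result}) forces $\frac{1}{T}\mathrm{Reg}^i$ to stay bounded away from zero along a subsequence with probability one. Formalizing ``no common best reply'' requires care about quantifiers over $t$; the cleanest route is to first show it suffices to exhibit a single stage game $u^i$ in the sequence with no dominant action, then run the adversarial argument on the corresponding sub-block of periods, which still has linear length because $\limsup C_T/T>0$ lets the benchmark absorb the changes across blocks.

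The main obstacle I anticipate is the converse in part (ii): upgrading an expected-regret lower bound to an almost-sure $\limsup$ lower bound. The subtlety is that the policy is randomized and the adversary's penalization is only effective in expectation each round, so one needs a second-moment or martingale argument to rule out that favorable fluctuations accumulate to make the realized regret sublinear on a probability-one event. I would handle this by lower-bounding $\mathbb{E}^{\pi^i,\sigma^{-i}}[\mathrm{Reg}^i]$ by $c T$ for a constant $c > 0$ along a subsequence of horizons, noting $\mathrm{Reg}^i \ge -M T$ deterministically, and then using the Paley--Zygmund inequality together with a bounded-differences concentration bound on $\mathrm{Reg}^i$ to conclude $\mathrm{Reg}^i \ge (c/2)T$ with probability bounded away from zero; a careful Borel--Cantelli argument over a sparse subsequence of horizons then yields the almost-sure statement. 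The rest of the proof is bookkeeping and can be deferred to the appendix.
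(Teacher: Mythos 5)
Part (i) of your proposal and the easy direction of (ii) are essentially sound and close to the paper. Two remarks on (i): the almost-sure definition quantifies over a \emph{single} infinite-horizon policy with $\limsup_{T\to\infty}$ along one play path, so a restart schedule $\Delta_T$ tuned to a known horizon $T$ does not literally qualify; the paper resolves this with a doubling-trick policy (Rexp3P) that restarts a high-probability bandit algorithm (Exp3P, via Bubeck--Slivkins' Theorem 22) on exponentially growing pulls, takes a union bound over pulls with $\delta = 1/T^2$, and applies Borel--Cantelli. Your "adapt the restarting argument with a high-probability bound" is the right idea, but the anytime construction and the boundary case where $T$ falls inside a pull are the actual content. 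Also note that upgrading an \emph{expected}-regret bound to the realized $\max_{(x_t)\in\mathcal{S}^i(C_T)}\mathrm{Reg}^i$ is not a single Azuma application, since the maximum is over a comparator class whose log-cardinality scales with $C_T\log T$; this is why the paper imports a bound that is already high-probability and uniform over switching sequences. Your constant-policy argument for the easy direction of (ii) is fine and in fact simpler than the paper's.

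The converse of (ii) has two genuine gaps. First, your adversary re-targets the policy's likely action \emph{every period}, but then the hindsight comparator may need up to $T-1$ switches to collect the per-period best replies, while it is only granted $C_{T_k}\ge \alpha T_k$ of them; so the claimed linear lower bound on regret relative to $\mathcal{S}^i(C_T)$ does not follow from your construction. The paper's construction is engineered around exactly this budget: it takes two opponent profiles $a^{-1}_{(1)}\neq a^{-1}_{(2)}$ with disjoint best-reply sets, fixes blocks of length $d\ge 3/\alpha$, keeps the opponent constant at $a^{-1}_{(1)}$ for most of each block and randomizes between the two profiles only in $\lfloor\alpha d\rfloor-2$ periods per block, so that the comparator's $\alpha T$ switches suffice to best-reply in \emph{every} period, yet every policy suffers a per-period expected gap $\Delta=p\Delta_1+(1-p)\Delta_2>0$ on the randomized periods. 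Second, this same design makes the realized regret non-negative pathwise, which is precisely what licenses Paley--Zygmund; your version, invoking only $\mathrm{Reg}^i\ge -MT$, cannot use Paley--Zygmund (it requires a non-negative variable, and shifting by $MT$ renders the second-moment ratio useless). The paper then concludes via Paley--Zygmund plus the reverse Fatou lemma that $\mathbb{P}\left(\limsup_T R_T/T>0\right)>0$, which already contradicts the almost-sure requirement -- no Borel--Cantelli or concentration step is needed in this direction. Your bounded-differences alternative could in principle substitute, but with an adaptive adversary and a maximum over comparators it must be run through per-period conditional-expectation lower bounds and an Azuma-type martingale argument, not a bound on the total expectation alone; as written, that step is not justified.
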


\renewcommand{\thetheorem}{A.\arabic{theorem}}
\renewcommand{\thecorollary}{A.\arabic{corollary}}
\renewcommand{\thelemma}{A.\arabic{lemma}}
\setcounter{theorem}{0}
\setcounter{corollary}{0}
\setcounter{lemma}{0}

\SetAlgorithmName{Algorithm}{}{}
To simplify the notation, we denote by $K$ the cardinality of the action set of player $i$, i.e. $K = |A_i|$. At high level, the proof modifies the analysis of the Exp3P policy in \cite{bubeck2010} to cover sequences of action changes that scale with the horizon length.

\medskip
{\small
\begin{algorithm}[H]
\SetKwInput{Init}{Inputs}
 \Init{$\eta \in \mathbb{R}_{+}$, $\gamma,\beta \in [0,1]$, length of horizon $T \in \mathbb{N}$.}
 Let $p_1$ be the uniform distribution over $A_i$.\\
 \For{t = 1,2,...,T}{
    \begin{enumerate}
    \item[1] Draw $a_t \sim p_t$ and receive reward $g_{a_{t},t} \in [-M,M]$
    \item[2] Compute $\tilde{g}_{k,t} = \frac{1}{2 M} \left( \frac{g_{k,t} \mathbf{1}(a_t = k) + \beta}{p_{k,t}} + M \right)$ for all $k \in A_i$
    \item[3] Update $\tilde{G}_{k,t} = \sum_{\tau = 1}^{t} \tilde{g}_{k,\tau}$ for all $k \in A_i$
    \item[4] Compute mixed action $p_{t+1}$, where 
    \[p_{k,t+1} = (1-\gamma)\frac{\exp (\eta \tilde{G}_{k,t})}{\sum_{j \in A_i} \exp ( \eta \tilde{G}_{j,t} )} + \frac{\gamma}{K} \quad \text{for all} \; k \in A_i\] 
    \end{enumerate}}\caption{Exp3P}
\end{algorithm}
}
\medskip 
\begin{theorem}[Theorem 22 in \citealt{bubeck2010}]\label{Bubeck Theorem}
Fix $T > 0$, $S \leq T-1$ and arbitrary $\delta > 0$. Set $s = S \log \left( \frac{3 T K}{S} \right) + 2\log(K)$ with the natural convention $S \log \left( \frac{3 T K}{S} \right) = 0$ for $S = 0$. Let $\beta = 3 \sqrt{\frac{s}{T K}}$, $\gamma = \min \left\{ \frac{1}{2} ,\sqrt{\frac{K s}{2 T}} \right\}$ and $\eta = \frac{1}{5} \sqrt{\frac{s}{T K}}$. Then, for all $\sigma^{-i}: \cup_{t=1}^{T} A^{t-1} \rightarrow \Delta(A^{-i})$,

\[
\mathbb{P}^{\mathrm{Exp3P},\sigma^{-i}}_T \left( \max_{(x_t) \in \mathcal{S}^{i}(S)}\mathrm{Reg}^i(\mathcal{U}_T^i,(x_t),(a_t)) \leq  7 \sqrt{T K s} + \sqrt{\frac{T K}{s}} \ln \left(\frac{1}{\delta} \right) \right) \geq 1-\delta,
\]
\noindent
where $\mathbb{P}^{\mathrm{Exp3P},\sigma^{-i}}_T$ is the probability measure over the space of sequences of length $T$ generated by the interaction of the finite-time \emph{Exp3P} policy and $\sigma$.

\end{theorem}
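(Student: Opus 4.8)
The plan is to reproduce the high-probability analysis of Exp3P from \cite{bubeck2010}, keeping explicit track of how every error term depends on the switching budget $S$, and then to re-tune so that the bound stays of the advertised order once $S=S_T$ is allowed to grow with $T$. I would first rescale rewards to the unit interval by setting $h_{k,t} := (g_{k,t}+M)/(2M) \in [0,1]$, so that $\tilde g_{k,t}$ becomes a $\beta$-optimistic importance-weighted estimator of $h_{k,t}$ and the regret against any $(x_t)\in\mathcal S^i(S)$ rescales by a factor $2M$. The regret of a fixed comparator $(x_t)$ then decomposes into three pieces: (i) the gap between the comparator's true cumulative reward and its estimated reward $\sum_t \tilde g_{x_t,t}$; (ii) the estimated regret $\sum_t \tilde g_{x_t,t} - \sum_t \langle p_t, \tilde g_t\rangle$ of the exponential-weights update; and (iii) the gap between the algorithm's expected estimated reward and its realized true reward.

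The first step is the optimism lemma: because of the additive bias $\beta$, the increments $h_{k,\tau}-\tilde g_{k,\tau}$ form a supermartingale whose moment generating function is controlled, giving $\sum_t h_{x_t,t} \le \sum_t \tilde g_{x_t,t} + \tfrac{c}{\beta}\ln(1/\delta')$ with probability at least $1-\delta'$ for a \emph{fixed} comparator (with $c$ depending only on $M$). Since the final bound maximizes over $\mathcal S^i(S)$, I would upgrade this to a uniform statement by a union bound over the finite class $\mathcal S^i(S)$ with $\delta' = \delta/|\mathcal S^i(S)|$; a standard counting estimate gives $\ln|\mathcal S^i(S)| \le S\log(3TK/S)+2\log K = s$, which is exactly how the quantity $s$ enters. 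The confidence radius $\tfrac{c}{\beta}\ln(1/\delta') = \tfrac{c}{\beta}\big(\ln|\mathcal S^i(S)| + \ln(1/\delta)\big)$ then splits into a term of order $s/\beta \sim \sqrt{TKs}$ that joins the leading bound and a term $\tfrac{c}{\beta}\ln(1/\delta)\sim \sqrt{TK/s}\,\ln(1/\delta)$ that is the surviving $\delta$-dependent piece. Piece (iii) is handled by a martingale concentration for $\sum_t\langle p_t,\tilde g_t - h_t\rangle$ together with the deterministic per-step bias $\beta K/(2M)$.

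For piece (ii) I would run the usual exponential-weights potential telescoping. Writing $W_t := \sum_k \exp(\eta \tilde G_{k,t})$ and applying $e^x \le 1+x+x^2$ to each increment $\ln(W_{t+1}/W_t)$, the uniform exploration floor $p_{k,t}\ge \gamma/K$ bounds $\tilde g_{k,t}$ and hence the second-order term $\eta^2\sum_k p_{k,t}\tilde g_{k,t}^2$; decomposing the comparator into its at most $S+1$ constant blocks and summing the per-block potential differences yields an estimated-regret bound of order $\tfrac{s}{\eta}+\eta TK + \beta TK + \gamma T$. Substituting the prescribed tuning $\beta = 3\sqrt{s/(TK)}$, $\gamma=\min\{1/2,\sqrt{Ks/(2T)}\}$, $\eta=\tfrac15\sqrt{s/(TK)}$ collapses each of the four contributions to order $\sqrt{TKs}$, the constants combining into the stated $7\sqrt{TKs}$; undoing the normalization by the factor $2M$ and adding the $\sqrt{TK/s}\,\ln(1/\delta)$ term from the optimism step gives the claimed inequality.

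The main obstacle is precisely the adaptation flagged before the statement: in \cite{bubeck2010} the bound is stated for a switching budget that does not scale with $T$, whereas here $S$ (and hence $s$) may grow with the horizon, so I must verify that the parameter-range hypotheses underlying each step survive this growth. Concretely, the step $e^x\le 1+x+x^2$ requires $\eta\,\tilde g_{k,t}\le 1$, i.e. that the tuning keeps $\eta K/\gamma$ bounded even as $s\to\infty$; the mixing must stay genuine, $\gamma=\sqrt{Ks/(2T)}<\tfrac12$ for all large $T$, which needs $s=o(T/K)$; and the union bound must not inflate the confidence radius beyond $\sqrt{TK/s}\,\ln(1/\delta)$, which is exactly what forces the coupling $\delta'=\delta/|\mathcal S^i(S)|$ and the identification $\ln|\mathcal S^i(S)|\le s$. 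The delicate point throughout is reconciling a single continuous run of the algorithm with an $S$-switching comparator so that each switch costs only a logarithmic factor rather than a restart, and checking that no error term degrades below the $\sqrt{TKs}$ order when $s$ is horizon-dependent. Once these constraints are verified to hold simultaneously under the given tuning, recombining pieces (i)--(iii) delivers the high-probability bound for all $\sigma^{-i}$.
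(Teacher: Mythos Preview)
The paper does not prove this statement at all: it is quoted verbatim as Theorem~22 of \cite{bubeck2010} and used as a black box. The sentence just above the theorem, ``the proof modifies the analysis of the Exp3P policy in \cite{bubeck2010} to cover sequences of action changes that scale with the horizon length,'' refers to the proof of Proposition~\ref{V Class is non-empty} (the Rexp3P construction with the doubling trick), not to a re-derivation of Theorem~\ref{Bubeck Theorem}. So there is no ``paper's own proof'' to compare your proposal against; the paper simply imports the finite-$T$, finite-$S$ high-probability bound and then handles the scaling $S=S_T$ externally, by restarting Exp3P on geometrically growing pulls and summing the per-pull bounds via Borel--Cantelli.

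Your proof sketch is a reasonable outline of how the cited result is established in \cite{bubeck2010}: the three-piece decomposition, the optimistic-bias concentration for piece~(i), and the potential telescoping for piece~(ii) are the right ingredients, and your identification of $s$ with (an upper bound on) $\ln|\mathcal S^i(S)|$ is the correct mechanism by which the switching budget enters. Where your write-up goes astray is the final paragraph: the ``main obstacle'' you describe---verifying that the parameter constraints survive when $s$ grows with $T$, ensuring $\gamma<1/2$ via $s=o(T/K)$, etc.---is not part of what Theorem~\ref{Bubeck Theorem} asserts. The theorem is a one-shot bound for fixed $T$ and fixed $S\le T-1$; if $s$ is large relative to $T$ the bound is simply loose (and the $\min\{1/2,\cdot\}$ in the definition of $\gamma$ already absorbs the edge case). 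The adaptation to horizon-dependent $S_T$ is carried out in the paper \emph{after} this theorem, in the analysis of Rexp3P, by applying the theorem on each pull of length $2^{r-1}$ with its own $C^r$ and then controlling $\sum_r b_r$. You have effectively merged two separate statements into one and worried about a difficulty that the present theorem does not face.
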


The above finite time high probability guarantee constitutes the building block for our successive Proposition \ref{V Class is non-empty}. We first establish Part $(i)$ of the theorem and then continue to prove Part $(ii)$.
\SetAlgorithmName{Algorithm}{}{}
\vspace{-2mm}
\noindent
\paragraph{Part $\boldsymbol{(i)}$.}{\SetAlgorithmName{Algorithm}{}{}

We show that in order to devise a dynamic benchmark consistent policy, it suffices to combine Exp3P with the so called \emph{doubling-trick}, by adopting an online worst-case approach. Specifically, we run Exp3P over an exponentially increasing pulls of periods~$(2^{r-1})_{r=1}^{+\infty}$. Each re-start identifies a batch, whose size is $2^{r-1}$. Further, Exp3P is tuned as if we are competing, over~$2^{r-1}$ periods, against a benchmark allowed to switch action at most $\min \{ C_{\sum_{j=1}^{r} 2^{j-1}} + 1, 2^{r-1}-1\}$ times, regardless of the (actual) number of changes spent in the previous batches. We refer to the corresponding policy as Rexp3P whose pseudocode is the following:
\medskip

\begin{algorithm}[H]\label{Rexp3P}
{\small
\SetKwInput{Init}{Inputs}
\SetKwInput{Par}{Parameters}
 \Init{A sequence $C$ of non-negative and non-decreasing real numbers.}
 \For{r = 1,2,...}{
    \begin{enumerate}
    \item[1] Set: $C^r = \min \{ C_{\sum_{j=1}^{r} 2^{j-1}}+1, 2^{r-1}-1 \} = \min \{ C_{2^{r}-1}+1, 2^{r-1}-1 \}$
    \item[2] Set: $c^r = C^r \ln \left( \frac{3 \times 2^{r-1} K}{C^r} \right) + 2\ln(K) $
    \end{enumerate}
    \If{ r = 1 }{ $a^{\mathrm{Rexp3P}}_1 \sim \mathrm{Uniform}(K)$}
    \Else{
    \For{$t \in \{ 2^{r-1},..., 2^{r}-1 \}$ }{
    run Exp3P with inputs $S = C^{r}$, $s = c^{r}$, $\eta = \frac{1}{5} \sqrt{\frac{c^r}{2^{r-1} K}}$, $\gamma = \min \left\{ \frac{1}{2}, \sqrt{\frac{K c^r}{2^{r}}} \right\}$,$\beta = 3\sqrt{\frac{c^r}{2^{r-1} K }}$ and $T = 2^{r-1}$
    }} 
 }
 \caption{Rexp3P}
 }
\end{algorithm}

\medskip
Without loss of generality,\footnote{If $C_T = o(T)$, extend $C$ by setting $C_0 = 0$. Further, let $\tilde{C}$ be the ``discrete'' \emph{concavification} of $C$, i.e. $\tilde{C}_T = \text{sup} \left\{ c \mid (T,c) \in \text{Conv}(C) \right\}$ for all $T \in \mathbb{N}$, where $\text{Conv}(C)$ is the convex hull in $\mathbb{R}_{+}^{2}$ of the \emph{hypograph} $\{(T,c) \mid c \leq C_{T}\}$.
Concavity of $\tilde{C}$ together with $\tilde{C}_0 = 0$ and sublinearity of the original $C$ imply that $\frac{\tilde{C}_T}{T} \searrow 0$ 
as $T \rightarrow +\infty$.} we assume that $C$ is monotonically decreasing to zero, i.e. ~$\frac{C_T}{T}  \searrow 0$ as $T \rightarrow +\infty$. 
Further, we let $\sigma^{-i}: \cup_{t=1}^{+\infty} A^{t-1} \rightarrow \Delta(A^{-i})$ be a contingent plan of actions by the other players.
The interaction of Rexp3P and $\sigma^{-i}$ induces the probability measure~$\mathbb{P}^{\mathrm{Rexp3P},\sigma^{-i}}$ over the space of infinite sequences of outcomes $A^{\mathbb{N}}$. Fix $T$ such that $C_{2^{\lceil \log_2(T+1) \rceil} -1 } + 1 \leq 2^{\lceil \log_2(T+1) \rceil - 1}-1$ and denote by 
\[
R := \min \left\{ n \in \mathbb{N} \;\Bigg\vert\;  T \leq \sum_{r=1}^{n} 2^{r-1} \right\} = \lceil \log_2(T+1) \rceil 
\]
the index of the first pull that reaches or goes above $T$. Let~$\delta \in (0,1)$ to be specified later. Let~$L_1 = U_1 = 1$, $L_r = 2^{r-1}$ and $U_r = 2^{r}-1$, for $r = 2,...,R$ be the starting and ending points of each pull. 

For a given sequence of outcomes $(a_t)_{t=1}^{+\infty} \in A^{\mathbb{N}}$, define, for each pull $r = 1,...,R$

\begin{equation}
    \mathrm{Reg}_r((a_t)_{t=1}^{+\infty},C^r) := \underset{(y_t)_{t} \in A_i^{2^{r-1}}}{\max} \left\{ \sum_{t=L_r}^{U_r} u^i_t(y_t,a^{-i}_{t}) - u^i_t(a^{i}_{t},a^{-i}_{t}) \; \Bigg\lvert \; \sum_{t = L_r}^{U_r-1} \mathbf{1}(y_{t+1} \neq y_t ) \leq C^{r} \right \},
\end{equation}
\noindent
where we understand, for the first pull, $\sum_{t = 1}^{0} \mathbf{1}(y_{t+1} \neq y_t ) = 0$ and similarly

\begin{equation}
    \mathrm{Reg}((a_t)_{t=1}^{+\infty},T,C_T) := \underset{(x_t)_{t} \in A_i^{T}}{\max} \left\{ \sum_{t=1}^{T} u^i_t(x_t,a^{-i}_{t}) - u^i_t(a^{i}_{t},a^{-i}_{t}) \; \Bigg\lvert \; \sum_{t =1}^{T-1} \mathbf{1}(x_{t+1} \neq x_t ) \leq C_T \right \}.    
\end{equation}


\noindent
 Further, let
 
\begin{equation}\label{Er set}
 E_r := \left\{ (a_t)_{t=1}^{+\infty} \in A^{\mathbb{N}} \; \Bigg \lvert \; \mathrm{Reg}_r((a_t)_{t=1}^{\infty},C^r) >  \underbrace{7 \sqrt{2^{r-1} K c^r} + \sqrt{\frac{2^{r-1} K }{c^r}}\ln\left( \frac{1}{\delta}\right)}_{:= b_r}  \right\}   
\end{equation}

\noindent
be the collection of all sequences of outcomes for which the regret over pull $r$, relative to a dynamic benchmark endowed with $C^{r}$ action changes, is at least $b_r$.\footnote{Recall from the pseudocode of Rexp3P that $c^r = C^r \ln \left( \frac{3 \times 2^{r-1} K}{C^r} \right) + 2\ln(K) $ and $C^r = \min \{ C_{\sum_{j=1}^{r} 2^{j-1}} + 1, 2^{r-1}-1 \}.$
   } Theorem \ref{Bubeck Theorem} with a union bound over the pulls imply that,
   
\begin{equation}\label{union bound}
    \mathbb{P}^{\mathrm{Rexp3P},\sigma^{-i}}\left( \cup_{r=1}^{R} E_r\right) \leq \sum_{r=1}^{R} \mathbb{P}^{\mathrm{Rexp3P},\sigma^{-i}}(E_r) \leq R \delta.
\end{equation}
Two cases may take place.

\medskip
\noindent
{\emph{Case 1}.}{ Suppose first that $T = \sum_{r=1}^{R} 2^{r-1}$. Then, 
\begin{equation*}
    \begin{aligned}
    \mathbb{P}^{\mathrm{Rexp3P},\sigma^{-i}} \left( \mathrm{Reg}((a_t)_{t=1}^{+\infty},T,C_T) \leq 
    \sum_{r=1}^{R} b_r\right) &\overset{(a)}{\geq} \mathbb{P}^{\mathrm{Rexp3P},\sigma^{-i}} \left( \sum_{r=1}^{R} \mathrm{Reg}_r((a_t)_{t=1}^{+\infty},C^r) {\leq} 
    \sum_{r=1}^{R} b_r \right) \\
    &\overset{(b)}\geq  \mathbb{P}^{\mathrm{Rexp3P},\sigma^{-i}} \left( \cap_{r=1}^{R} {E}^{c}_r \right) \overset{(c)}{=} 1 - \mathbb{P}^{\mathrm{Rexp3P},\sigma^{-i}} \left( \cup_{r=1}^{R} E_r \right) \overset{(d)}{\geq}   1-R\delta,
    \end{aligned}
\end{equation*}
\noindent
where ($a$) holds because, by construction, for each pull $r = 1,...,R$, sequence of actions $(x_t)_{t=1}^{T} \in \mathcal{S}_i(C_{T})$ and sequence of outcomes $(a_t)_{t=1}^{\infty} \in A^{\mathbb{N}}$, one has 

\begin{equation}\label{Theorem 1 inequality}
\sum_{t=L_r}^{U_r} u^i_t(x_t,a^{-i}_{t}) - u^i_t(a^{i}_{t},a^{-i}_{t}) \leq \mathrm{Reg}_r((a_t)_{t=1}^{\infty},C^{r}),
\end{equation}
which in turn, by summing over the pulls and taking the maximum with respect to $(x_t)_{t=1}^{T} \in \mathcal{S}_i(C_{T})$, imply that
\[ \left\{\sum_{r=1}^{R} \mathrm{Reg}_r((a_t)_{t=1}^{+\infty},C^r) \leq x  \right\} \subseteq \left\{\mathrm{Reg}((a_t)_{t=1}^{+\infty},T,C_{T}) \leq x \right\}, \quad \text{for all x $\in \mathbb{R}$};\]
\noindent
in addition, ($b$) holds immediately from Definition (\ref{Er set}), ($c$) holds by applying De Morgan's law and ($d$) holds because of Inequality (\ref{union bound}).
}

\medskip
\noindent
{\emph{Case 2}.}{ Now, suppose $  \sum_{r=1}^{R-1} 2^{r-1}<  T <  \sum_{r=1}^{R} 2^{r-1}$. 
Construct a new contingent plan  $\bar\sigma^{-i}$ that coincides with $\sigma^{-i}$ until period~$T$, and then it plays a constant profile of actions $\bar{a}^{-i}$ forever after, i.e., $a^{\bar\sigma^{-i}}_t = \bar{a}^{-i} \in A^{-i}$ for $t \geq T+1$. We denote by 

\[ G := \left\{ (a_t)_{t=1}^{\infty} \in A^{\mathbb{N}} \; \Bigg \lvert \; \underset{y \in A_{i}}{\max} \sum_{t=T+1}^{2^{R}-1} \left( u^i_t(y,a^{-i}_{t}) - u^i_t(a^{i}_{t},a^{-i}_{t}) \right) \geq 0 \right \}\]
\noindent
the collection of all outcomes such that their regret, relative to a static benchmark, over the segment $\{T+1,...,2^{R}-1\}$ is non-negative. Note that, by construction, from period $T+1$ up to~$2^{R}-1$, $\bar\sigma^{-i}$ is constant. This implies that
$\mathbb{P}^{\mathrm{Rexp3P},\bar\sigma^{-i}}(G) = 1.$ Therefore, the following holds
\vspace{-3mm}
\begin{equation*}
    \begin{aligned}
    \mathbb{P}^{\mathrm{Rexp3P},\sigma^{-i}} \left( \mathrm{Reg}((a_t)_{t=1}^{+\infty},T,C_T) \leq
    \sum_{r=1}^{R} b_r \right) &\overset{(a)}{=} \mathbb{P}^{\mathrm{Rexp3P},\bar\sigma^{-i}} \left( \mathrm{Reg}((a_t)_{t=1}^{+\infty},T,C_T) \leq 
    \sum_{r=1}^{R} b_r \right) \\
    &\overset{(b)}{=} \mathbb{P}^{\mathrm{Rexp3P},\bar\sigma^{-i}} \left( \left\{\mathrm{Reg}((a_t)_{t=1}^{+\infty},T,C_T) \leq 
   \sum_{r=1}^{R} b_r \right\} \cap \; G \right) \\
    &\overset{(c)}{\geq} \mathbb{P}^{\mathrm{Rexp3P},\bar\sigma^{-i}} \left( \left \{\sum_{r=1}^{R} \mathrm{Reg}_r((a_t)_{t=1}^{+\infty},C^r) \leq 
    \sum_{r=1}^{R} b_r \right \} \cap G \right) \\
    &\overset{(d)}{=} \mathbb{P}^{\mathrm{Rexp3P},\bar\sigma^{-i}} \left( \sum_{r=1}^{R} \mathrm{Reg}_r((a_t)_{t=1}^{+\infty},C^r) \leq
    \sum_{r=1}^{R} b_r  \right) \\
    &\overset{(e)}{\geq} \mathbb{P}^{\mathrm{Rexp3P},\bar\sigma^{-i}} \left( \cap_{r=1}^{R} {E}^{c}_r \right) \overset{(f)}{\geq} 1-R\delta,
    \end{aligned}
\end{equation*}
\noindent
where ($a$) holds because the contingent plans of actions $\sigma^{-i}$ and $\bar\sigma^{-i}$ coincide up to time $T$, and ($b$) and ($d$) hold because, for any measurable set $F$, $\mathbb{P}^{\mathrm{Rexp3P},\bar\sigma^{-i}}(F \cap G) = \mathbb{P}^{\mathrm{Rexp3P},\bar\sigma^{-i}}(F)$, given that  $\mathbb{P}^{\mathrm{Rexp3P},\bar\sigma^{-i}}(G) = 1$, 
\vspace{-1mm}
and ($c$) holds because, for any sequences of outcomes $(a_{t})_{t=1}^{+\infty} \in G$ and actions $(x_t)_{t=1}^{T} \in  \mathcal{S}_i(C_T)$, one has that
\begin{equation}\label{inequality G}
\begin{aligned}
 \sum_{t = 2^{R-1}}^{T} \left(u^i_t(x_t,a^{-i}_{t}) - u^i_t(a^{i}_{t},a^{-i}_{t}) \right) &\leq \sum_{t = 2^{R-1}}^{T} \left( u^i_t(x_t,a^{-i}_{t}) - u^i_t(a^{i}_{t},a^{-i}_{t}) \right)  + \underset{y \in A_i}{\max} \sum_{t = T+1}^{2^{R}-1} \left( u^i_t(y,a^{-i}_{t}) - u^i_t(a^{i}_{t},a^{-i}_{t}) \right) \\
 &\leq \mathrm{Reg}_R((a_{t})_{t=1}^{\infty},C^{R}),
\end{aligned}
\end{equation}
where the first inequality holds because $(a_t)_{t=1}^{\infty} \in G$ and the second one holds because, as by assumption $C_{2^{\lceil \log_2(T+1) \rceil} -1 } + 1 \leq 2^{\lceil \log_2(T+1) \rceil - 1}-1$, one has that $C^{R} = C_{2^{R}-1}+1$ and so $C^{R} = C_{2^{R}-1}+~1 = C_{2^{\lceil \log_2(T+1) \rceil}-1}+1 \geq C_T +~1$, while the sequence $((x_t)_{t=2^{R-1}}^{T},\underbrace{y,...,y}_{2^{R}-T-1} )$ has at most $C_T+1$ action changes, for any $y \in \text{argmax} \left\{ \sum_{t = T+1}^{2^{R}-1} u^i_t(y,a^{-i}_{t}) - u^i_t(a^{i}_{t},a^{-i}_{t}) \right \}$. That is, 
\[
((x_t)_{t=2^{R-1}}^{T},\underbrace{y,...,y}_{2^{R}-T-1} ) \in \left \{ (y_t)_{t=2^{R-1}}^{2^{R}-1} \in A_i^{2^{R-1}} \; \Bigg \lvert \; \sum_{t=2^{R-1}}^{2^{R}-1} \mathbf{1}(y_{t+1} \neq y_t) \leq C^{R} \right\},
\]
and so such a sequence is feasible for a dynamic benchmark with $C^{R}$ action changes over the last pull, which implies Inequality (\ref{inequality G}). Thus, by using (\ref{inequality G}), one obtains
\[ \left\{\sum_{r=1}^{R} \mathrm{Reg}_r((a_t)_{t=1}^{T},C^r) \leq x \right\}  \cap G  \subseteq \{\mathrm{Reg}((a_t)_{t=1}^{T},T,C_{T}) \leq x \} \cap G , \quad \text{for all x $\in \mathbb{R}$}.\]
}
Finally, ($e$) and ($f$) hold by \emph{Case} 1. Together the two cases imply that, for any $T$, one has 
\begin{equation}\label{Main inequality}
  \mathbb{P}^{\mathrm{Rexp3P},\sigma^{-i}} \left( \mathrm{Reg}((a_t)_{t=1}^{+\infty},T,C_T) \leq
    \sum_{r=1}^{R} b_r \right) \geq 1-R\delta.  
\end{equation}

We conclude the argument by applying the Borel-Cantelli lemma and by showing that $\sum_{r=1}^{
R} b_r = o(T)$. Note that for pulls from second to second to last, i.e. $r=2,...,R-1$, one has
\begin{equation}\label{2T+1 bound}
\frac{C^{r}}{2^{r-1}} = \min \left\{ \frac{2^{r-1}-1}{2^{r-1}},\frac{C_{\sum_{j=1}^{r} 2^{j-1}}+1}{2^{r-1}} \right\} \geq \min \left\{ \frac{1}{2}, \frac{C_{\sum_{j=1}^{r} 2^{j-1}}}{\sum_{j=1}^{r} 2^{j-1}} \right \} \overset{(a)}{\geq} \min \left\{ \frac{1}{2}, \frac{C_{T}}{T} \right \} \overset{(b)}{\geq} \min \left\{ \frac{1}{2}, \frac{C_{2T+1}}{2T+1} \right \},    
\end{equation}
\noindent
where both ($a$) and ($b$) hold because, by assumption, $C_T/T$ is monotonically decreasing to zero. Moreover, it holds that $\frac{C^{1}}{1} = 0$ and $\frac{C^{R}}{2^{R-1}} \geq \min \left\{ \frac{1}{2}, \frac{C_{2T+1}}{2T+1} \right \}$, as the end of the last pull is $2^{R}-1 = 2^{\lceil \log_2(T+1) \rceil} - 1 \leq 2T + 1$ and $C_T/T$ is monotone decreasing.
Therefore, under the natural convention $C^r \ln \left( \frac{3 \times 2^{r-1} K}{C^{r}} \right) = 0$, whenever $C^{r} = 0$, one obtains
\begin{equation*}
\begin{aligned}
\sum_{r=1}^{R} b_r &= \sum_{r=1}^{R} 7 \sqrt{2^{r-1} K \left( C^r \ln \left( \frac{3 \times 2^{r-1} K}{C^r}\right) + 2 \ln(K)\right)} + \sqrt{\frac{2^{r-1} K }{C^r \ln \left( \frac{3 \times 2^{r-1} K}{C^r}\right) + 2 \ln(K)}}\ln\left( \frac{1}{\delta}\right) \\
&\overset{(a)}{\leq} \sum_{r=1}^{R} 7 \sqrt{2^{r-1} K \left( -C^r \ln \left( \frac{1}{3K} \frac{C^r}{2^{r-1}} \right) + 2 \ln(K)\right)} + \sqrt{\frac{2^{r-1} K }{2 \ln(K)}}\ln\left( \frac{1}{\delta}\right) \\
&\overset{(b)}{\leq} 7 \sqrt{K \left( -C_{2T+1} \ln \left( \frac{1}{3 K} \min\left\{\frac{1}{2},\frac{C_{2T+1}}{2T+1}\right\} \right) + 2\ln(K) \right)} \sum_{r=1}^{R} \sqrt{2^{r-1}} + \sqrt{\frac{K}{2 \ln(K)}} \ln \left(\frac{1}{\delta} \right) \sum_{r=1}^{R} \sqrt{2^{r-1}}  \\
&\leq \underbrace{14\left(1+\sqrt{2}\right) \left( \sqrt{K T \left( -C_{2T+1} \ln \left( \frac{1}{3 K} \min\left\{\frac{1}{2},\frac{C_{2T+1}}{2T+1}\right\} \right) + 2\ln(K) \right)} + \sqrt{\frac{K}{2 \ln(K)}} \ln \left(\frac{1}{\delta} \right) \sqrt{T} \right)}_{:= U(T,C_{2T+1},\delta)},
\end{aligned}
\end{equation*}
\noindent
where ($a$) holds because  $C^r \ln \left( \frac{3 \times 2^{r-1} K}{C^{r}} \right) \geq 0$ for all pulls r, and ($b$) holds by applying Inequality (\ref{2T+1 bound}).

\noindent
Finally, set $\delta = \frac{1}{T^2}$. Given that $\sum_{r=1}^{R} b_r \leq U(T,C_{2T+1},\frac{1}{T^2})$, $U(T,C_{2T+1},\frac{1}{T^2}) = o(T)$ and $\sum_{T=1}^{\infty} \frac{\lceil \log_2(T+1)\rceil }{T^2} < \infty$, by applying the Borel-Cantelli lemma and Inequality (\ref{Main inequality}), one concludes that
\[ 
\mathbb{P}^{\mathrm{Rexp3P},\sigma^{-i}} \left(  \underset{T \rightarrow +\infty}{\text{limsup}} \; \frac{1}{T} \max_{(x_t) \in \mathcal{S}^{i}(C_T)}\mathrm{Reg}^i (\mathcal{U}^{i}_T,(a_t),(x_t)) \leq 
    0  \right) = 1,
\]
that is Rexp3P is DB($C$) consistent.
}
\vspace{-5mm}
\noindent
\paragraph{Part $\boldsymbol{(ii)}$.}{We first show the result by assuming that 
$C_T = \alpha T$, for some $\alpha \in (0,1]$, and then argue how to extend it to the case $\text{limsup}_{T \rightarrow \infty} C_T/T >0$. We take the perspective of player $1$. Suppose there exist $a^{-1}_{(1)} \neq a^{-1}_{(2)} \in (A^{-1})^2$ such that 
\[ \underset{x \in A^1}{\text{argmax}} \; u^1(x,a^{-1}_{(1)}) \cap \underset{y \in A^1}{\text{argmax}} \; u^1(y,a^{-1}_{(2)}) = \emptyset.\]
\noindent
Let $a_{(1)}^{1} \in \underset{x \in A^1}{\text{argmax}} \; u^1(x,a^{-1}_{(1)})$ and $a_{(2)}^{1} \in \underset{y \in A^1}{\text{argmax}} \; u^1(y,a^{-1}_{(2)})$.

We can represent the payoff of player $1$ under $a^{-1}_{(1)}$ and $a^{-1}_{(2)}$ in the following~table:
\begin{table}[H]
\centering
\begin{tabular}{|l|l|l|}
\hline
\emph{row} player' payoff  & $a^{-1}_{(1)}$ & $a^{-1}_{(2)}$ \\ \hline
$a^{1}_{(1)}$ & $u^1(a^{1}_{(1)},a^{-1}_{(1)})$ & $u^1(a^{1}_{(1)},a^{-1}_{(2)})$ \\ \hline
$a^{1}_{(2)}$ & $u^1(a^{1}_{(2)},a^{-1}_{(1)})$ & $u^1(a^{1}_{(2)},a^{-1}_{(2)})$ \\ \hline
$\vdots$ & $\vdots$ & $\vdots$ \\ \hline
$a_{(K_i)}^{1}$ & $u^1(a_{(K_i)}^{1},a^{-1}_{(1)})$ & $u^1(a_{(K_i)}^{1},a^{-1}_{(2)})$ \\ \hline
\end{tabular}
\end{table} 

Let $d \in \mathbb{N}$ such that $d \geq \frac{3}{\alpha}$ and $p \in (0,1)$. Without loss of generality suppose that $T \geq d(\alpha d - 1)$. We construct the sequence of distributions by the other players in the following way: 
\[\forall t \in \mathbb{N}, \quad \sigma_t = \begin{cases}
        \delta_{a^{-1}_{(1)}}  &\text{if \;$t \in \{1,...,d-\lfloor \alpha d \rfloor+2, d+1,...,2d - \lfloor \alpha d \rfloor+2,... \}$,}
        \\
        p \delta_{a^{-1}_{(1)}} \otimes (1-p) \delta_{a^{-1}_{(2)}}   &\text{otherwise}.
        \end{cases}\]
Namely, over segments of constant length $d$, one has that $\sigma$ plays for the first $d - \lfloor \alpha d \rfloor + 2$ periods~$a^{-1}_{(1)}$ and then randomizes between $a^{-1}_{(1)}$ and $a^{-1}_{(2)}$ (with respective probabilities $p$ and $1-p$) for the remaining $\lfloor \alpha d \rfloor -2$ periods. Note that $\sigma$ can change action at most $\left \lceil \frac{T}{d} \right \rceil (\lfloor \alpha d\rfloor - 1)$ times throughout the horizon of length $T$. Assumptions over~$T$ and $d$ ensure that $\left \lceil \frac{T}{d} \right \rceil (\lfloor \alpha d\rfloor - 1) \leq \lfloor \alpha T \rfloor $. Indeed, 

\[ \left \lceil \frac{T}{d} \right \rceil (\lfloor \alpha d\rfloor - 1) \overset{(a)}{<} \left( \frac{T}{d} +1 \right) (\lfloor \alpha d\rfloor - 1) \overset{(b)}{\leq} \left( \frac{T}{d} +1 \right)( \alpha d - 1) = \alpha T + \left(\alpha d - 1 - \frac{T}{d} \right) \overset{(c)}{\leq} \alpha T, \]

where $(a)$ holds because $\lceil x \rceil < x + 1$ and, by assumption, $\alpha d \geq 3$, $(b)$ holds because $\lfloor x \rfloor \leq x$ and~$(c)$ holds because, by assumption, $T \geq d(\alpha d - 1)$. We first lower bound the expected regret, i.e.,
\[
\mathbb{E}^{\pi,\sigma} \left[ \underbrace{\underset{x \in \mathcal{S}\left( \alpha T \right)}{\text{max}} \sum_{t=1}^{T} \left(u^1(x_t,a^{-i}_t) - u^1(a^{\pi}_t,a^{-i}_t) \right)}_{:= R_T}  \right].
\]

Let $a_{*}^1 \in \text{argmax}_{x \in A^1} \{pu^1(x,a^{-1}_{(1)}) + (1-p)u^1(x,a^{-1}_{(2)})\}$.
Define $\pi^{*}$ for player 1 in the following way:
\[\forall t \in \mathbb{N}, \quad \pi^{*}_t = \begin{cases}
        \delta_{a_{1}^{1}}  &\text{if \;$t \in \{1,...,d-\lfloor \alpha d \rfloor + 2, d+1,...,2d - \lfloor \alpha d \rfloor + 2,... \}$,}
        \\
        \delta_{a^1_{*}}   &\text{otherwise}.
        \end{cases}\]

By construction, $\pi^{*}$ minimizes the expected regret of the player 1. Define the regrets for playing action $a_{*}^1$ rather than the best replies to $a_{(1)}^{-1}$ and $a_{(2)}^{-1}$ as
\begin{equation*}
    \begin{aligned}
    \Delta_1 &:= u^1(a_{(1)}^{1},a^{-1}_{(1)}) - u^1(a^1_{*},a^{-1}_{(1)}) \geq 0\\
    \Delta_2 &:= u^1(a_{(2)}^{1},a^{-1}_{(2)}) - u^1(a^1_{*},a^{-1}_{(2)}) \geq 0.
    \end{aligned}
\end{equation*}
Note that the expected regret of $\pi^{*}$ over a time period in which $\sigma$ randomizes is~$\Delta = p\Delta_1 + (1-p)\Delta_2$. In particular, observe that $\underset{x \in A^1}{\text{argmax}} \; u^1(x,a^{-1}_{(1)}) \cap \underset{y \in A^1}{\text{argmax}} \; u^1(y,a^{-1}_{(2)}) = \emptyset$ and $p \in (0,1)$ imply that $\Delta > 0$. Indeed, as $p \in (0,1)$, $\Delta = 0$ if and only if $\Delta_1 = \Delta_2 = 0$. However, this implies that $a^1_{*}$ is a best reply to both $a{1}_{-1}$ and $a_{(2)}^{-1}$, thus violating disjointness.
Therefore, for any policy ${\pi}$ of player 1, one has
\begin{equation*}
\begin{aligned}
\frac{1}{T}\mathbb{E}^{{\pi},\sigma}[R_T] \geq \underset{\pi \in \mathcal{P}_{1}^{DB}(C)}{\text{min}} \frac{1}{T} \; \mathbb{E}^{\pi,\sigma} \left[ R_T \right] =\frac{1}{T}\mathbb{E}^{{\pi^*},\sigma}[R_T] &\overset{(a)}{\geq}  \frac{1}{T} \left \lfloor \frac{T}{d} \right \rfloor \left( \lfloor \alpha d \rfloor -2 \right) \Delta \overset{(b)}{\geq} \frac{ \left \lfloor \alpha d  \right \rfloor -2}{2 d} \Delta \overset{(c)}{>} 0,
\end{aligned}
\end{equation*}
where ($a$) holds because in expectation, $\pi^{*}$ incurs positive regret in periods in which $\sigma$ randomizes, and there are at least $\lfloor T/d \rfloor$ such periods over a horizon of length~$T$; ($b$) holds because, by assumption, $T \geq d(\alpha d -2)$ and $\alpha d \geq 3$; ($c$) holds because $\Delta > 0$ and , by assumption, $d \geq \frac{3}{\alpha}$. We denote $\theta := \frac{ \left \lfloor \alpha d  \right \rfloor -2}{2d} \Delta$, and, to ease notation, we neglect the dependence of the probability $\mathbb{P}$, the expectation $\mathbb{E}$, and the variance~$\mathbb{V}$ on $\sigma$ and the policy $\pi$ selected by player 1. One has that
\begin{equation}\label{Paley-Zigmund}
    \begin{aligned}
    \mathbb{P} \left( \frac{R_T}{T} \geq \frac{\theta}{2} \right) &\overset{(a)}{\geq} \mathbb{P} \left( \frac{R_T}{T} \geq \frac{\mathbb{E}[R_T]}{2 T } \right)  \\
    &\overset{(b)}{\geq} \frac{1}{4} \times \frac{ \mathbb{E}^{2}[R_T / T]}{\mathbb{E}[R^2_T/T^2]}  \\
    &\overset{(c)}{=} \frac{1}{4} \times \frac{\mathbb{E}^2[R_T/T]}{\mathbb{V}[R_T/T] + \mathbb{E}^2[R_T/T]}  \\
    &\overset{(d)}{\geq} \frac{1}{4} \times \frac{\mathbb{E}^2[R_T/T]}{1/4 + \mathbb{E}^2[R_T/T]}   \\
    &\overset{(e)}{\geq}  \frac{\theta^2}{1+4\theta^2},
    \end{aligned}
\end{equation}
where ($a$) holds because $\frac{E[R_T]}{T} \geq \theta$ implies $\left \{ \frac{R_T}{T} \geq \frac{\mathbb{E}[R_T]}{2T} \right \} \subseteq \left \{ \frac{R_T}{T} \geq \frac{\theta}{2} \right \}$; ($b$) holds by applying the Paley-Zygmund Inequality, which can be applied since the dynamic benchmark is able to best-reply at every period of the game implying that the (realized) regret $R_T$ is non-negative; ($c$) holds because~$\mathbb{V}(X) = \mathbb{E}(X^2) - \mathbb{E}^2(X)$; ($d$) holds because $\mathbb{V}(X) \leq \frac{1}{4}(M-m)^2,\; \text{if} \; X \in [m,M]$ where $m=0$ and $M = 1$; and ($e$) holds because $\mathbb{E}[R_T/T] \geq \theta,\;\text{and}\; \frac{x^2}{1+4x^2}$ is monotone increasing for $x \geq 0 $. 

\[ \mathbb{P} \left( \underset{T \rightarrow \infty}{\text{limsup}} \; \frac{R_T}{T} > 0 \right) \overset{(a)}{\geq} \mathbb{P} \left( \underset{T \rightarrow \infty}{\text{limsup}} \; \frac{R_T}{T} \geq \frac{\theta}{2} \right) \geq \mathbb{P} \left( \bigcap_{s=1}^{+\infty} \bigcup_{T \geq s} \left\{ \frac{R_T}{T} {\geq} \frac{\theta}{2}  \right\}\right ) \overset{(b)}{\geq}  \frac{\theta^2}{1+4\theta^2} \overset{(c)}{>} 0\]
\noindent
where $(a)$ and $(c)$ hold because $\theta > 0$ and $(b)$ holds by applying the reversed Fatou's lemma in (\ref{Paley-Zigmund}).

Finally, observe that $\text{limsup}_T C_T/T >0$ implies the existence of a subsequence $(C_{T_k}: k \geq 1)$ such that, for all $k \in \mathbb{N}$, $C_{T_k} \geq \alpha T_k$ with $\alpha \in (0,1]$. As the above construction is independent of $T$, we can repeat the same procedure over the subsequence of horizons $(T_k)_{k=1}^{\infty}$ and obtain 

\begin{equation}\label{result for subsequence}
\mathbb{P} \left( \underset{k \rightarrow \infty}{\text{limsup}} \; \frac{R_{T_k}}{T_{k}} > 0 \right) > 0.
\end{equation}

By definition of limsup, (\ref{result for subsequence}) implies $\mathbb{P} \left( \underset{T \rightarrow \infty}{\text{limsup}} \; \frac{R_T}{T} > 0 \right) > 0$. This establishes that $\mathcal{P}^{DB}_{1}(C) = \emptyset$, whenever $\text{limsup}_T C_T/T >0$ and there exist $a^{-1}_{(1)} \neq a^{-1}_{(2)} \in (A^{-1})^2$ such that $\underset{x \in A^1}{\text{argmax}} \; u^1(x,a^{-1}_{(1)}) \cap \underset{y \in A^1}{\text{argmax}} \; u^1(y,a^{-1}_{(2)}) = \emptyset$. On the other hand, suppose that, for any $a^{-i}_{(1)}, A^{-i}_{(2)} \in (A^{-i})^2$, it holds 
\vspace{-0.2cm}
\[ \underset{x \in A_i}{\mathrm{argmax}} \; u^1(x,a^{-i}_{(1)}) \cap \underset{y \in A_i}{\mathrm{argmax}} \; u^1(y,a^{-i}_{(2)}) \neq \emptyset.\]

Then, the static benchmark performs as well as the unconstrained benchmark. Therefore,~$(i)$ implies the possibility to compete relative to it.

}

\end{document}